\documentclass[ecta,nameyear,draft]{econsocart}

\usepackage{natbib}
\usepackage[colorlinks,citecolor=blue,linkcolor=blue,urlcolor=blue,pagebackref]{hyperref}

\usepackage[nameinlink,capitalise,noabbrev]{cleveref}

\startlocaldefs
\usepackage[utf8]{inputenc}
\usepackage[toc,page]{appendix}
\usepackage[english]{babel}

\usepackage{amsmath, amssymb, amsthm, mathabx, bbm, graphicx, setspace, enumitem, tikz, subcaption,adjustbox,comment}
\usetikzlibrary{calc,intersections,arrows.meta,decorations.markings}
\usetikzlibrary{positioning}

\usepackage{booktabs}
\usepackage{float}
\usepackage[T1]{fontenc}
\usepackage{rotating}

\usepackage[dvipsnames]{xcolor}

\usepackage{ragged2e}

\usepackage{array}

\newcolumntype{L}[1]{>{\raggedright\arraybackslash}m{#1}}
\newcolumntype{C}[1]{>{\centering\arraybackslash}m{#1}}
\newcommand{\cellw}{4.2cm}   
     
\newcommand{\rowH}{3.8cm}    

\newcommand{\FC}[1]{%
  \parbox[c][2.4cm][c]{6.4cm}{\centering #1}%
}
\newcommand{\FCr}[1]{%
  \parbox[c][2.4cm][c]{1.6cm}{\centering #1}%
}
\newcommand{\FCc}[1]{%
  \parbox[c][1cm][c]{6.4cm}{\centering #1}%
}
\newcommand{\FCx}[1]{%
  \parbox[c][1cm][c]{1.6cm}{\centering #1}%
}

\tikzset{
  vertex/.style={circle,fill=black,inner sep=1.5pt},
  edge/.style={thick},
  vlabel/.style={font=\footnotesize}
}

\newtheorem{proposition}{Proposition}
\newtheorem{definition}{Definition}
\newtheorem{example}{Example}

\newtheorem{theorem}{Theorem}
\newtheorem{lemma}{Lemma}
\newtheorem{corollary}{Corollary}
\newtheorem{assumption}{Assumption}

\crefname{theorem}{Theorem}{Theorems}
\Crefname{theorem}{Theorem}{Theorems}
\crefname{lemma}{Lemma}{Lemmas}
\Crefname{lemma}{Lemma}{Lemmas}
\crefname{corollary}{Corollary}{Corollaries}
\Crefname{corollary}{Corollary}{Corollaries}
\crefname{proposition}{Proposition}{Propositions}
\Crefname{proposition}{Proposition}{Propositions}
\crefname{definition}{Definition}{Definitions}
\Crefname{definition}{Definition}{Definitions}
\crefname{example}{Example}{Examples}
\Crefname{example}{Example}{Examples}
\crefname{assumption}{Assumption}{Assumptions}
\Crefname{assumption}{Assumption}{Assumptions}
\crefname{property}{Property}{Properties}
\Crefname{property}{Property}{Properties}

\makeatletter
\renewenvironment{abstract}
  {%
    \par\small
    \noindent\textbf{Abstract}\par
    \noindent\RaggedRight
  }
  {\par}
\makeatother

\endlocaldefs

\begin{document}
\begin{frontmatter}

\title{How to Ask for Belief Statistics without Distortion?}
\runtitle{How to Ask for Belief Statistics without Distortion?}

\begin{aug}
%
%
%
\author[id=au1,addressref={add1}]{\fnms{Yi-Chun}~\snm{Chen}\ead[label=e1]{yichun@nus.edu.sg}}
\author[id=au2,addressref={add22}]{\fnms{Ruoyu}~\snm{Wang}\ead[label=e2]{ruoyu.wang@u.nus.edu}}
\author[id=au3,addressref={add2}]{\fnms{Xinhan}~\snm{Zhang}\ead[label=e3]{zhangxinhan@swufe.edu.cn}}
\address[id=add1]{%
\orgdiv{Department of Economics and Risk Management Institute},
\orgname{National University of Singapore}}

\address[id=add2]{%
\orgdiv{Center for Intelligence Economic Science},
\orgname{Southwestern University of Finance and Economics}}

\address[id=add22]{%
\orgdiv{Department of Economics},
\orgname{National University of Singapore}}

\end{aug}

\begin{abstract}
Belief elicitation is ubiquitous in experiments but can distort behavior in the main tasks. We study when, and how, an experimenter can ask for a series of action-dependent belief statistics after a subject chooses an action, while incentivize truthful reports without distorting the subject's optimal action in the main experimental tasks. We first propose a novel mechanism called the Counterfactual Scoring Rule (CSR), which achieves such nondistortionary elicitation of any single belief statistic by decomposing it into supplemental action-independent statistics. In contrast, when eliciting a fixed set of belief statistics without such decomposition, we show that robust nondistortionary elicitation is achievable if and only if the questions satisfy a joint alignment condition with the task payoff. The necessity of joint alignment is established through a graph theoretical approach, while its sufficiency follows from invoking an adaptation of the Becker–DeGroot–Marschak mechanism. Our characterization applies to experiments with general task-payoff structures and belief elicitation questions.
\end{abstract}

\begin{keyword}
\kwd{Belief elicitation}
\kwd{Counterfactual Scoring Rule}
\end{keyword}
\end{frontmatter}

\section{Introduction}\label{sec: intro}
Belief elicitation prevails in laboratory and field experiments. Subjects first choose actions in a few ``tasks” and are then asked to report beliefs about their own performance or some payoff-relevant outcomes. For instance, subjects may be asked to answer a set of MCQs (tasks), and next, to report questions such as their expected total score, the probability of achieving full marks, whether their performance falls into a certain range, or how their performance is ranked among fellow participants, and so on. It is well known, however, that incentives for belief elicitation may distort the subjects' choices in the experimental tasks, as they maximize the total payoff from both the tasks and the belief elicitation questions. For instance, when elicitation rewards proximity between their predicted outcomes and the realized outcomes, subjects may prefer choosing actions that reduce the variance of payoff, even if those actions are suboptimal in terms of their task payoffs. In other words, belief elicitation may induce distortion  (\citet{healy2025belief,blanco2010belief,niederle2007women,chambers2021dynamic,hu2024confidence,enke2023cognitive}).

This paper offers a complete characterization of nondistortionary belief elicitation in experiments. Our results apply to experiments with arbitrary task payoffs and elicitation questions, generalizing the characterization in \citet{pkeski2025nondistortionary} which is tailored to specific task payoffs and asking a single elicitation question. 
Formally, we model a subject who chooses an action $a$ to maximize a state-dependent task payoff $u(a, \theta)$ given her belief over the state $\theta$. After the action is chosen, the experimenter asks $m$ questions/statistics regarding her beliefs $X_j(a, \theta)$ with $j=1,...,m$ and the subject reports $r_j$ regarding the expectation $\mathbb{E}_p[X_j(a, \theta)]$ for each question $j=1,...,m$. A \textit{nondistortionary} elicitation scheme employs an aggregate payoff function defined on actions, states, and reports such that:\footnote{For example, subjects answer multiple-choice questions as well as guess their rank among all subjects, then receive monetary payments based on the accuracy of the multiple-choice questions and the correctness of their guess about the rank.} (i) for any given action, it is optimal to truthfully report all $m$ expected statistics; and (ii) assuming the truthful reporting in (i), the action that maximizes $\mathbb{E}_p[u(a, \theta)]$ also maximizes the decision maker's aggregate expected payoff.

Our first main result is that once we shift the focus from evaluating a single question in isolation to designing incentives for the set of desired belief statistics as a whole, \emph{any} belief statistic $Y$ is incentivizable by asking supplemental questions. We propose a novel belief elicitation mechanism called the counterfactual scoring rule (CSR), in which the experimenter asks, ``What is your expectation of $Y$ \textit{had you chosen $a$}?''---namely, the expectation of $Y(a,\theta)$ conditional on each action $a$ that the subject may have chosen in the task. The CSR calculates the sum of the squared losses of the reported statistics $r(a)$ against the realized $Y(a,\theta)$ over all actions. Because the CSR does not depend on which action the subject chooses in the task, it induces no distortion; moreover, adding any arbitrarily small proportion of $u(a,\theta)$ preserves a strict incentive to choose the optimal task action. More generally, the same principle applies to incentivizing any question $Y$ by performing a rank decomposition of $Y$ and asking the subject to report the expectation of each basis statistic of $Y$; see \cref{cor: m-decomposable}. Relative to eliciting the entire belief, the CSR substantially simplifies the elicitation questionnaire when there are fewer actions than states. Moreover, the CSR imposes no restriction on how the question payoff relates to the task payoff, contrary to the findings of \citet{pkeski2025nondistortionary} regarding the elicitation of a single belief statistic.

Next, we characterize incentivizing a given set of belief statistics. Our characterization builds on a novel condition that we call joint alignment. The condition requires that for each action taken, the belief elicitation questions are related to the task payoffs via an affine transformation with an action-independent shift. On one hand, joint alignment is strictly weaker than individual alignment in \citet{pkeski2025nondistortionary}, where the affine transformation is applied question-by-question; hence, it underscores the importance of designing the set of questions as a whole. On the other hand, joint alignment remains sufficient for nondistortionary belief elicitation. More importantly, joint alignment characterizes what we call \emph{robustly incentivizable} questions; that is, conditions (i) and (ii) still hold even when the questions are slightly misspecified or biasedly interpreted by the subject. 

To show that joint alignment is necessary for robust incentivizability, our argument builds on the adjacency graph proposed by \citet{pkeski2025nondistortionary}. Each edge of the adjacency graph connects pairs of actions that are the only two optimal actions under some belief. These edges mark locally fragile incentives: a small tilt can break the tie and distort the optimal task action. For each edge $(a, b)$, we define a matrix edge flow $w(a, b)$, interpreted as the local "exchange rates" across belief questions that preserve indifference between $a$ and $b$; off-diagonal entries capture cross-question spillovers. Our novel step is to apply Kirchhoff's (voltage) law: if the ordered product of edge flows around every cycle equals $I$ (equivalently, flows are path-independent), then $w(a, b)=\Gamma(b) \Gamma(a)^{-1}$ for some (node) potential $\Gamma(\cdot)$.\footnote{See \citet{godsil2013algebraic,monderer1996potential}.} The potential delivers the affine transformation underlying the joint alignment condition. The necessity argument applies to any graph, provided that the length of each elementary cycle in its minimum cycle basis is bounded by the ``slack'' between the state space size and the number of questions. 

The paper proceeds as follows. \cref{Section: BeliefElicitationProblem} describes the model and the counterfactual scoring rule. Section 3 introduces the joint alignment condition and establishes its sufficiency. \cref{sec: necessity} develops the adjacency-graph approach and proves necessity for one-block graphs, organizing the argument around edge flows and integrability. Section 5 extends the analysis to product structures and multi-block decompositions, yielding modular verification and construction results. Section 6 discusses conceptual and methodological links to \citet{pkeski2025nondistortionary} and illustrates applications to common experimental belief questions.

\section{Belief Elicitation Problem}\label{Section: BeliefElicitationProblem}

Consider a single-agent state-contingent task $(\Theta, A, u)$ (the ``task'') with a finite state space $\Theta$ and a finite action space $A$, and the state-contingent payoff vectors $u(a) \in \mathbb{R}^{\Theta}$ for each $a\in A$. The decision maker (DM) makes his choice of action $a$ to maximize the expected task payoff $\mathbb{E}_pu(a)$ under his belief $p$ over the state space. As $p$ is unobservable to the experimenter, she would like to elicit some action-dependent belief statistics $r=(r_1,\cdots,r_m) \in \mathbb{R}^m$ through a series of questions. She does so in the following way: after DM choosing an action $a$ in the task, the experimenter asks him to report the expectation $r_j$ of each question $X_j(a)$ in $X=(X_1,\cdots,X_m)$, where each question is an action-dependent vector $X_j(a)\in\mathbb{R}^\Theta$. DM will be awarded according to the elicitation method $V: \mathbb{R}^m \times A \times \Theta \rightarrow$ $[0,1]$ through report $r$, the action chosen $a$, and the true state $\theta$. Say the question profile $X$ is\textit{ incentivizable} if there exists $V$ such that $$\arg \max _{(r,a)} \mathbb{E}_p[V(r, a, \theta)]=\left\{\left(\mathbb{E}_p X(a ; \theta), a\right): a \in \arg \max _{a \in A} \sum_\theta p(\theta) u(a ; \theta)\right\}.$$
In words, there exists a payment scheme that elicits DM's true belief while keeping her incentive at the first stage undistorted. The setting is flexible in accommodating two interpretations. First, the experimenter could be interested in \textit{all} the $m$ statistics, so that entire question profile $X$ is taken as given and the number of questions $m$ fixed. Thus, the above setup generalizes \citet{pkeski2025nondistortionary} (henceforth, PS25) by extending the number of statistics of interest beyond 1. Alternatively, the experimenter might be interested in certain belief statistics (e.g., the first couple of statistics in $r$), but she is allowed to ask a number of supplemental questions (the remaining in $r$) for the purpose of eliciting the desired ones. 

\subsection{The Counterfactual Scoring Rule}
To illustrate the mileages of asking multiple questions, we assume that the experimenter is interested in eliciting only one statistic $Y$. Consider the following experimental recipe using the counterfactual scoring rule mechanism described in the introduction, which resembles the \textit{revelation principle} in mechanism design. Let there be in total $m$ possible actions $A=\{a_1,a_2,\dots,a_m\}$.
\begin{enumerate}
    \item DM chooses action $a_i\in A$ for the task payoff $u(a,\theta)$;
    \item DM reports $\tilde{r}_j=\mathbb{E}_pY(a_j)$ for each $a_i\in A$;
    \item DM is paid a fixed prize with probability proportional to $V^{CSR}({r},a_i,\theta)=-\sum_{j=1}^m\left[Y(a_j,\theta)-r_j\right]^2+u(a_i,\theta).$\footnote{We use a fixed prize lottery to avoid effect from risk attitudes.}
\end{enumerate}

\noindent 
In expectation, $\mathbb{E}_pV^{CSR}(r, a, \theta)=-\sum_{i=j}^m\left[\mathbb{E}_pY(a_j,\theta)- r_j\right]^2+\mathbb{E}_pu(a_i,\theta)$, so the optimal reports $r_i=\mathbb{E}_pY(a_i,\theta)$ while the optimal action coincides with that for the original task. Intuitively, this mechanism asks DM $m$ questions in total, each in the form ``What is the expectation of $Y$ under your belief \textit{if you had chosen $a_i$}?'' Thus, instead of trying to directly elicit the action $a$ simultaneously with the belief $p$, it symmetrizes over $a$ by traversing over all actions in $A$; thus, it eliminates the action dependence of $Y$ along with the distortion. When there are multiple questions $Y_i$ of interest, the experimenter can simply ask them one by one, each with a separate CSR. As it turns out, the CSR imposes no restriction on how the question payoff relates to the task payoff, contrary to the findings of \citet{pkeski2025nondistortionary}:

\begin{proposition}
Any question $Y$ is incentivizable via the CSR mechanism for any task $u$.
\end{proposition}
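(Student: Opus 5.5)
The plan is to verify the definition of incentivizability directly for the CSR payoff, after a normalization so that $V$ maps into $[0,1]$. In the CSR experiment the question profile is the $m$-vector $X(a,\theta)=(Y(a_1,\theta),\dots,Y(a_m,\theta))$, one counterfactual question per action. It does not depend on the action actually chosen. So the target set in the definition is $\{(\mathbb{E}_pX,a): a\in\arg\max_a \mathbb{E}_p u(a)\}$.

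First I will handle the range. Since $\Theta$ and $A$ are finite, $Y$ and $u$ are bounded. I will restrict reports to the compact box $R=[\min Y,\max Y]^m$, or equivalently clip them to it; projecting a report onto the box only lowers every squared loss, so truthful reports are never excluded. On $R\times A\times\Theta$ the function $V^{CSR}$ is bounded. I then set $V=\alpha V^{CSR}+\beta$ with $\alpha>0$ and $\beta$ chosen so that $V\in[0,1]$. A positive affine transformation leaves the set of maximizers unchanged, and $V$ is read as the probability of winning the fixed prize, which removes any role for risk attitudes.

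Next I take expectations under $p$. Using $\mathbb{E}_p[(Y-r)^2]=(\mathbb{E}_pY-r)^2+\mathrm{Var}_p(Y)$,
\[
\mathbb{E}_pV^{CSR}(r,a,\theta)=-\sum_{j=1}^m\bigl(\mathbb{E}_pY(a_j)-r_j\bigr)^2-\sum_{j=1}^m\mathrm{Var}_p\bigl(Y(a_j)\bigr)+\mathbb{E}_pu(a).
\]
The variance term does not depend on $(r,a)$. The remaining objective is additively separable: one part depends only on $r$ and the other only on $a$. Hence $(r,a)$ maximizes the sum if and only if $r$ maximizes the first part and $a$ maximizes the second. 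The first part is uniquely maximized at $r_j=\mathbb{E}_pY(a_j)$ for all $j$, which lies in $R$. The second part is maximized exactly on $\arg\max_a\mathbb{E}_pu(a)$. So the argmax equals the required set, and this holds for every belief $p$ and every pair $(Y,u)$.

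There is no real obstacle here. The only points needing care are, first, the bookkeeping that the counterfactual questions do not depend on the chosen action, which is exactly what makes the objective separable, and second, the boundedness and normalization step. If one prefers not to restrict reports, the alternative is a bounded proper scoring rule, such as the quadratic score on a bounded interval, which gives the same separation argument.
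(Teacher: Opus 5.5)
Your proof is correct and follows the paper's argument: because the counterfactual questions $Y(a_1),\dots,Y(a_m)$ do not depend on the chosen action, the expected CSR payoff separates into a report part uniquely maximized at $r_j=\mathbb{E}_pY(a_j)$ plus the task payoff $\mathbb{E}_pu(a)$. You are in fact more careful than the paper, since you keep the constant variance term its display drops and you address the $[0,1]$ range it ignores; one caveat on that extra step is that clipping is not equivalent to restricting reports, because at beliefs where some $\mathbb{E}_pY(a_j)$ equals $\min Y$ or $\max Y$ (e.g.\ point masses) every report beyond that endpoint ties with the truthful one, so either keep the restricted report space or add a bounded, strictly increasing penalty in the distance from the box.
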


We illustrate CSR using a task carried out by \citet{niederle2007women} for testing gender difference in their decision whether to enter a competitive environment. After solving for a series of problems, each subject chooses between two compensation schemes: piece rate versus tournament. Intuitively, DM is paid according the number of correct answers in the former and according to his performance ranking within fellow participants. Let the binary action set $A=\{\text{PR},\text{T}\}$ denote the two compensation schemes, and $\Theta$ denote the set of all possible performance profiles across all participants. Now suppose the experimenter is interested in DM's estimated likelihood that his payoff $u(a,\theta)$ is no less than a certain amount $t$, i.e. $r(a)=\mathbb{E}_p\mathbf{1}\{u(a,\theta)\ge t\}$.\footnote{The authors originally ask for reporting DM's expected rank.} There may be action distortion due to hedging. Think about a subject faces three very intelligent opponents, she would expect to come in last place. In this situation, choosing the piece rate in time would yield a larger payoff $u(a,\theta)$, but choosing the tournament would provide a more certain likelihood—that she will definitely come in last place. Therefore, she has an incentive to distort her optimal action and then seek a larger and/or more certain reward in the belief elicitation problem.

The CSR mechanism amounts to asking the following \textit{two} questions: ``What is your estimated likelihood that your payoff  is no less than $t$ \textit{had you chosen the piece rate (resp. tournament) scheme}?'', and then evaluate both answers with the standard quadratic scoring rule. The key advantage of the CSR is that the belief-scoring part is invariant across the two choices of the subject, and thereby introducing no distortion. The subject effectively needs to assess a single underlying uncertainty—her rank—under one additional counterfactual choice, which is far from contemplating a full belief over the high-dimensional score profile $\theta$.\footnote{The same setup is replicated in \citet{mobius2022managing}, where they also further elicit a different belief statistic for a similar experiment.
}

The similar operational recipe can be applied to two other popular types of experimental tasks. In a \textit{lottery choice} task, DM is faced with two or more candidate lotteries $L_i$, each viewed as an subjective probability distribution on some finite support $\Theta$ as monetary prizes. An experimenter interested in beliefs statistics such as ``What is your estimated chance of receiving more than \$$x$?'' can apply the SR by asking for all $i$, ``What is your estimated chance of receiving more than \$x \textit{if you had chosen $L_i$}?'' Similarly, in an \textit{investment} task, DM decides whether to invest in a safer or riskier financial asset, such as a money market fund versus an index-based ETF, or picking among different stocks. To DM, the return of each asset can be viewed as a subjective distribution on some support $\Theta$. An experimenter interested in beliefs statistics such as ``What is your estimated chance of having a return between $y_1$\% and $y_2$\%?'' can apply CSR by simply asking two questions ``What is your estimated chance of having a return between $y_1$\% and $y_2$\% \textit{suppose you had chosen the safer (riskier) asset}?''

We briefly discuss the practical applicability of the CSR. 
\begin{enumerate}
    \item When the action space $A$ is small, the CSR is very efficient as it requires reporting $|A|$ statistics.
    \item When $|\Theta|\ll|A|$, i.e. the state space $\Theta$ is relatively small, one can use another mirroring belief revelation mechanism that instead asks for the entire belief $p$.
    Formally, let $V^{BR}(a,\theta,r)=\sum_{\theta_k\in\Theta}\left(\mathbf{1}\{\theta=\theta_k\}-r_k\right)^2+u(a_i,\theta)$, where $r_k$ is the reported belief $p(\theta_k)$. Notice that both mechanisms do not depend on the specific form of the task, and thus are \textit{universally} applicable in experimental setting. 
    \item When both $|A|$ and $|\Theta|$ are large, one might argue that the mechanisms above could lose applicability, as there could be a practical constraint on the number of statistics to be asked. We show by applying rank decomposition to $Y$ in \cref{subsec: joint vs individual} that it suffices to take the number of questions $m$ as the rank of the $Y(a,\theta)$ matrix and apply the CSR elicit the $m$ questions.
    \item Alternatively, the experimenter may be willing to allow certain degree of distortion in exchange of asking fewer questions. In particular, she may opt to partition $A=\bigcup_k A^k$ and only ask the conditional expectation of $Y(a)$ on the partition. The ``coarse'' CSR grants the experimenter flexibility to remove the action dependence of $Y$ partially, if the experimenter is confident of no distortion within each $A^k$.
\end{enumerate}

\section{Joint Alignment and Weak Alignment}\label{sec: sufficient}

To fully characterize incentivizability, we now consider incentivizing a given set of belief statistics.

\subsection{Joint Alignment: the First Interpretation}
We first introduce an alignment condition generalized from PS25. For notational ease, we view $X$ as an $m\times |\Theta|$ matrix, $u$ and each $X_j$ as $|\Theta|$-dimensional row vectors. Let $\mathbf{1}\in\mathbb{R}^{|\Theta|}$ stands for the row vector of all ones. Illustrating examples are provided in \cref{subsec: examples}.
\begin{definition}[Joint Alignment]
    An $m$-dimensional question profile $X$ is jointly aligned with $u$ if there exists a column vector $\lambda\in \mathbb{R}^m$, a matrix $d\in \mathbb{R}^{m\times|\Theta|}$, and for each $a\in A$, an invertible matrix $\Gamma(a) \in \mathbb{R}^{m \times m} $ and a column vector $\kappa(a) \in \mathbb{R}^m$, such that $X(a)=\Gamma(a)\big(\lambda u(a)+d\big)+\kappa(a)\mathbf{1}$ for all $a\in A$.
\end{definition}
A question $X_j$ is said to be individually aligned with $u$ if we replace $X$ with $X_j$ in the above definition. Without loss of generality, we can take $\lambda\in\{0,1\}$ in above by rescaling $\gamma(a)$ and $d$. We show below that a variation of the renowned BDM mechanism can incentivize a jointly aligned set of questions.
\begin{lemma}\label{lem: joint sufficiency}
    If question profile $X$ is jointly aligned with $u$, then it is incentivizable.
\end{lemma}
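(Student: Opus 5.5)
The plan is to build $V$ explicitly by first undoing the action-dependent affine distortion in the joint alignment condition, and then scoring the resulting action-normalized statistics. For this I would use a quadratic rule when $\lambda=0$ and a BDM-type rule when $\lambda=1$. By the remark after the definition, I take $\lambda\in\{0,1\}$ without loss. Write $Z(a)=\lambda u(a)+d\in\mathbb{R}^{m\times|\Theta|}$, so that $X(a)=\Gamma(a)Z(a)+\kappa(a)\mathbf{1}$. Taking expectations gives $\mathbb{E}_pX(a)=\Gamma(a)\mathbb{E}_pZ(a)+\kappa(a)$. Given a report $r$ and an action $a$, define the decoded report $z(r,a)=\Gamma(a)^{-1}(r-\kappa(a))$. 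For each fixed $a$, the map $r\mapsto z(r,a)$ is a bijection of $\mathbb{R}^m$. Hence $r$ is truthful, $r=\mathbb{E}_pX(a)$, if and only if $z(r,a)=\mathbb{E}_pZ(a)$. It therefore suffices to design a score in $(z,a,\theta)$ that (i) is uniquely maximized in $z$ at $\mathbb{E}_pZ(a)$ and (ii) has a maximized value that is action-neutral or strictly increasing in $\mathbb{E}_pu(a)$.

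\emph{Case $\lambda=0$.} Here $Z(a)=d$ does not depend on $a$. I set
\[
V=-\sum_{j}\big(d_j(\theta)-z_j(r,a)\big)^2+\varepsilon\,u(a,\theta)
\]
for some $\varepsilon>0$. For fixed $a$, the expected quadratic loss is uniquely minimized at $z=\mathbb{E}_pd$. At that point the loss equals $\sum_j\mathrm{Var}_p(d_j)$, which does not depend on $a$. The optimal actions are therefore exactly those maximizing $\mathbb{E}_pu(a)$.

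\emph{Case $\lambda=1$.} Here $Z_j(a)=u(a)+d_j$. For each $j$ I use a BDM lottery. Fix bounds $L<U$ with $L\le Z_j(a,\theta)\le U$ for all $j,a,\theta$, draw $c\sim\mathrm{Unif}[L,U]$, pay $Z_j(a,\theta)$ if $z_j\ge c$, and pay $c$ otherwise. With $\mu=\mathbb{E}_pZ_j(a)$, the expected payoff of report $z_j\in[L,U]$ is
\[
\frac{1}{U-L}\Big[(z_j-L)\mu+\tfrac12(U^2-z_j^2)\Big].
\]
This is strictly concave in $z_j$ and uniquely maximized at $z_j=\mu$. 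Its maximized value $\frac{1}{U-L}\big[\tfrac12\mu^2-L\mu+\tfrac12U^2\big]$ has derivative $(\mu-L)/(U-L)\ge 0$ in $\mu$. Since $\mu=\mathbb{E}_pu(a)+\mathbb{E}_pd_j$ and the second term does not depend on $a$, the maximized value is weakly increasing in $\mathbb{E}_pu(a)$. I would then sum these components over $j$ and add $\varepsilon u(a,\theta)$ to make the increase strict. Strictness in $\mu$ also holds directly if I choose $L$ strictly below the range of $Z$.

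In both cases, the set of maximizers of $\mathbb{E}_pV$ over $(r,a)$ is exactly the target set $\{(\mathbb{E}_pX(a),a):a\in\arg\max\mathbb{E}_pu(a)\}$. The argument runs in two steps: first optimize $r$ for each $a$, which pins down truthful $r$ uniquely, and then compare the resulting action values.

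The step I expect to need the most care is the boundedness requirement that $V$ take values in $[0,1]$. The reason is that $z(r,a)$ ranges over all of $\mathbb{R}^m$ while the scoring formulas are only valid or bounded on a compact set. I would handle this by evaluating the rule at the projection of $z(r,a)$ onto the box $[L,U]^m$, which contains every possible value of $\mathbb{E}_pZ(a)$. Projection onto the box strictly improves the quadratic score, and any report outside the box is strictly worse than the projected one under BDM. Clipped reports are therefore never optimal, and the uniqueness of the truthful report survives. Finally, $V$ is bounded on the clipped domain, so a positive affine rescaling maps it into $[0,1]$ without changing the argmax. It can then be implemented as a fixed-prize lottery, as in the CSR.
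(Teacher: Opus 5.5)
Your proposal is correct and follows essentially the paper's route: normalize via $\Gamma(a)^{-1}(\cdot-\kappa(a))$, then score the normalized reports row by row and sum. When $\lambda=1$ you use a BDM rule (on $[L,U]$ the expectation of your lottery is a positive affine transform of the paper's $V_0=(u+d)(r-L)-\tfrac12 r^2$), and when $\lambda=0$ you use a quadratic score plus a multiple of $u$.

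Your extra clipping step for the $[0,1]$ range, which the paper omits, works only if the range of $Z$ lies in the interior of the box, so take $U$ strictly above it as well as $L$ strictly below. Under the clipped rule an outside report ties with its boundary projection rather than being strictly worse, so uniqueness of the truthful report requires it to lie off the boundary.
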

\textbf{A BDM mechanism.} We first describe the elicitation mechanism for the simplest case $X_0(a)= u(a)+d$. Let $L\in\mathbb{R}$ satisfy $L< u(a;\theta)+d(\theta)$ for all $a$ and $\theta$.\footnote{In the sense of the original BDM mechanism, $V(r,a,\theta)=\int_{L}^{r} \big[  u(a;\theta)+d(\theta)\big]\,dx+\int_{r}^{M} x\,dx-\frac{M^{2}}{2}$ for some $L,M\in\mathbb{R}$ satisfy $L< u(a;\theta)+d(\theta)<M$ for all $a$ and $\theta$.} Set $V_0(r,a,\theta)=\big[  u(a;\theta)+d(\theta)\big]\big(r-L\big)-\frac12 r^2$, so that the expectation $\mathbb{E}_p V_0(r,a,\theta)=-\frac{1}{2}r^2+r\mathbb{E}_p\big(  u(a;\theta)+d(\theta)\big)+\operatorname{constant}$. As in the action-revelation mechanism $V^{CSR}$, under belief $p$, the optimal choice of $r$ given action $a$ is $r^*(a)=\mathbb{E}_{p}\!\left[u(a;\theta)+d(\theta)\right]=\mathbb{E}_{p}\!\left[X_0(a;\theta)\right]$. The payoff then becomes $\mathbb{E}_p V(r^*(a),a,\theta)=\frac{1}{2}\big[\mathbb{E}_p\left(u(a;\theta)+d(\theta)\right)-L\big]^{2}$. Since $L<  u(a;\theta)+d(\theta)$, maximizing $V$ over $a\in A$ is equivalent to maximizing $\mathbb{E}_p u(a;\theta)$. Hence, the optimal choice of action $a$ is the same as in the original task with payoff $u$, and the optimal report is undistorted. 

When $X_0(a)=d$, we can simply let $V_0(r,a,\theta)=u(a;\theta)+d(\theta)(r-L)-\frac{1}{2}r^2$ as in the action-revelation mechanism $V^{CSR}$ and apply a similar argument.

For a general joint aligned $X_0$, we first transform the question profile into $\tilde{X}(a)=\Gamma(a)^{-1}\left[X(a)-\kappa(a)\mathbf{1}\right]$, and correspondingly the transformed report $\tilde{r}(a)=\Gamma(a)^{-1}\left[r(a)-\kappa(a)\right]$. For each individual question $j$ in the rows of the transformed question, the BDM mechanism $V_0(\tilde{r}_j,a,\theta)$ can separately elicit DM's expectation $\tilde{r}_j(a)$ of $\tilde{X}_j(a)$ for any fixed action $a$. Let $V(r,a,\theta)=\sum_jV_0(\tilde{r}_j,a,\theta)$. Since each question $\tilde{X}_j(a)$ is of the form $u(a)+d_j$, the undistorted incentive ensures that DM will choose the optimal action for the original task $u$, so that $\tilde{X}$ is jointly incentivizable.

\subsection{Incentivization through Supplemental Questions: the Second Interpretation}\label{subsec: joint vs individual}

We now return to the second interpretation and examine the added value of joint alignment beyond individual alignment from the perspective of experiment design. Again, assume for simplicity that the experimenter is interested in one belief statistics described by question $Y$.

\begin{definition}[$m-$incentivizability]
    Question $Y$ is $m-$incentivizable if there exists an incentivizable $m-$dimensional question profile $X$ such that $X_1=Y$.
\end{definition}
 \cref{lem: joint sufficiency} implies the following immediate sufficient condition for $m-$incentivizability.

\begin{definition}[Weak alignment]\label{def: weak align}
Question $Y$ is weakly aligned with $u$ if there exist real numbers $\{\lambda_k\}_{k=1}^m$, vectors $\{d_k(\theta)\}_{k=1}^m$, and $\{\gamma_k(a)\}_{k=1}^m$ for each $a\in A$, such that $Y(a,\theta)=\left(\sum_{k=1}^m \gamma_k(a)\lambda_k\right) u(a,\theta)+\left(\sum_{k=1}^m \gamma_k(a) d_k(\theta)\right)$.  Say $Y$ is $m$-aligned with $u$ if $m$ is the minimal number such that $Y$ can be expressed in the above form.
\end{definition}
Equivalently, $Y$ is the first question $X_1$ for a jointly aligned $X$. Notice that the $\kappa$ term is absorbed into the second bracket as there is only one question $Y$ in consideration. 
\begin{proposition}\label{prop: m-incen}
     Question $Y$ is $m-$incentivizable if it is $m$-aligned with $u$.
\end{proposition}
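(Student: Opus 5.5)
The plan is to build an $m$-dimensional jointly aligned question profile $X$ with $X_1=Y$, and then apply \cref{lem: joint sufficiency} to conclude that $X$ is incentivizable. By \cref{def: weak align}, $m$-alignment gives scalars $\lambda_k$, vectors $d_k$, and action-dependent coefficients $\gamma_k(a)$ such that
\[
Y(a)=\sum_{k=1}^m \gamma_k(a)\big(\lambda_k u(a)+d_k\big).
\]
Collect $\lambda=(\lambda_1,\dots,\lambda_m)^\top$ and let $d$ be the $m\times|\Theta|$ matrix with rows $d_k$. Define the basis questions $Z(a)=\lambda u(a)+d\in\mathbb{R}^{m\times|\Theta|}$. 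Then $Y(a)=\gamma(a)^\top Z(a)$, where $\gamma(a)=(\gamma_1(a),\dots,\gamma_m(a))^\top$.

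Next I will take $X(a)=\Gamma(a)Z(a)$, with $\kappa(a)=0$, where $\Gamma(a)$ is an invertible $m\times m$ matrix whose first row is $\gamma(a)^\top$. This gives $X_1(a)=Y(a)$, and $X$ fits the form of the Joint Alignment definition exactly. The rows $2,\dots,m$ of $\Gamma(a)$ are the supplemental questions; they only need to complete $\gamma(a)^\top$ to a basis of $\mathbb{R}^m$. Such a completion exists precisely when $\gamma(a)\neq 0$, for instance by taking the standard basis vectors $e_k$ with $k$ different from some index where $\gamma_k(a)\neq0$.

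\textbf{Main obstacle.} The delicate step is an action $a$ with $\gamma(a)=0$, where $Y(a)\equiv 0$. I would try two repairs.
\begin{enumerate}
\item First, use the freedom in the representation: the constant row $\kappa(a)\mathbf 1$ in the definition of joint alignment can absorb shifts. Alternatively, use minimality of $m$ to argue that the decomposition can be re-chosen so that each $\gamma(a)\neq 0$. If $\gamma(a)=0$, adding a multiple of some basis term and subtracting it through the $\kappa$ column seems delicate, so I would rather change the argument.
\item The cleaner fix is to bypass the issue. For $\gamma(a)=0$, take $\Gamma(a)=I$ but replace the mechanism's first component with the trivially truthful scoring $-(r_1-0)^2$, which is action-independent. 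Equivalently, ask the question $X_1(a)=0=0\cdot Z_1(a)+0\cdot\mathbf 1$ directly, and note that the BDM construction in the proof of \cref{lem: joint sufficiency} adds to $V$ only a term $-\tfrac12 r_1^2$ that does not depend on $a$. This preserves both truthful reporting and the undistorted action choice.
\end{enumerate}

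With this in place, the argument from the proof of \cref{lem: joint sufficiency} goes through. Transform the reports by $\tilde r=\Gamma(a)^{-1}r$. Score each component $\tilde r_k$ with the BDM rule $V_0$ against $Z_k(a)=\lambda_k u(a)+d_k$, normalizing $\lambda_k\in\{0,1\}$ by rescaling $\gamma_k$ and $d_k$. Add an arbitrarily small multiple of $u$ to keep the action incentive strict. Because $\Gamma(a)$ is invertible, truthful $\tilde r$ corresponds one-to-one with truthful $r$, and in particular the report $r_1=\mathbb{E}_pY(a)$ is truthful. The action incentive is monotone in $\mathbb{E}_pu(a)$, since $\mathbb{E}_pZ_k(a)-L$ is positive and increasing in $\mathbb{E}_pu(a)$ when $\lambda_k=1$, and action-independent when $\lambda_k=0$. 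Hence $X$ is incentivizable and its first coordinate is $Y$, so $Y$ is $m$-incentivizable.
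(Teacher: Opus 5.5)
Your main construction (complete $\gamma(a)^\top$ to an invertible $\Gamma(a)$, set $\kappa\equiv 0$, and invoke \cref{lem: joint sufficiency}) is exactly the paper's argument. The paper simply asserts that weak alignment means $Y$ is the first row of a jointly aligned profile. Your construction is complete whenever $\gamma(a)\neq 0$ for every $a$.

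The gap is your repair (2) for $\gamma(a)=0$: it keeps truthful reporting but breaks the action incentive. Write $\phi(z)=\tfrac12(z-L)^2-\tfrac12L^2$ for the optimized BDM value. With $\Gamma(a)=I$ and $r_1$ scored by $-\tfrac12 r_1^2$, the DM's optimized payoff at $a$ is $\sum_{k\ge 2}\phi(\mathbb{E}_pZ_k(a))+\varepsilon\,\mathbb{E}_pu(a)$. At any $b$ with $\gamma(b)\neq 0$ it is $\sum_{k\ge 1}\phi(\mathbb{E}_pZ_k(b))+\varepsilon\,\mathbb{E}_pu(b)$. At beliefs where $a$ and $b$ tie in $u$ we have $\mathbb{E}_pZ_k(a)=\mathbb{E}_pZ_k(b)$ for all $k$. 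So the two payoffs differ by $\phi(\mathbb{E}_pZ_1(b))$, which is generically nonzero, whereas incentivizability requires $a$ and $b$ to remain exactly tied there. A score that is action-independent \emph{at} $a$ does not keep values comparable \emph{across} actions. When $\lambda_1=1$ you have in fact deleted, at $a$ only, the term that carries the dependence on $u$.

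No other repair exists, because in this degenerate case the statement itself fails. If $Y(a)\equiv 0$, the truthful report at $a$ is the constant $0$, so $\max_r\mathbb{E}_qV(r,a,\theta)=\mathbb{E}_qV(0,a,\theta)$ is affine in $q$ wherever $a$ is optimal. Near a belief at which only $a$ and $b$ are optimal, the optimized value at $b$ must equal this affine function on the tie set. The affine function $q'\mapsto\mathbb{E}_{q'}V(\mathbb{E}_qY(b),b,\theta)$ lies below it and touches it at $q$, hence coincides with it on the tie set. So the report $\mathbb{E}_qY(b)$ is also optimal at every nearby tie belief $q'$. The exact-argmax requirement then forces $\mathbb{E}_{q'}Y(b)=\mathbb{E}_qY(b)$, i.e.\ $\bar Y(b)\in\operatorname{span}\{\Delta_a^b\}$; this is \cref{lemma: adjacency MQ} with $\bar X(a)=0$.

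Concretely, take the MCQ task with $\Theta=A=\{1,2,3\}$, $u(a,\theta)=\mathbf 1\{a=\theta\}$, $\lambda=1$, $d=(0,0,1)$, $\gamma(1)=0$, $\gamma(2)=\gamma(3)=1$. Then $Y\neq 0$ is $1$-aligned, and actions $1,2$ are adjacent (they are the only optimal actions at $p=(0.4,0.4,0.2)$). Yet $\bar Y(2)=(-\tfrac23,\tfrac13,\tfrac13)\notin\operatorname{span}\{(-1,1,0)\}=\operatorname{span}\{\Delta_1^2\}$, so $Y$ is not $1$-incentivizable.

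The proposition therefore needs $\gamma(a)\neq 0$ for all $a$ as an explicit hypothesis. The paper uses it tacitly when it calls weak alignment ``equivalent'' to being the first row of a jointly aligned profile. State this hypothesis and drop repair (2); your first two paragraphs then constitute a complete proof.
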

Though the equation in \cref{def: weak align} might look complicated at the first glance, it reduces to a much simpler take-home message if we remove the role play by $u(a)$. Say a question $Y$ is a \textit{$m$-decomposable} if there exist vectors $\{d_k(\theta)\}_{k=1}^m$, and $\{\gamma_k(a)\}_{k=1}^m$ for each $a\in A$, such that 
\begin{equation}\label{eq: separable}
    Y(a,\theta)=\sum_{k=1}^m \gamma_k(a) d_k(\theta);
\end{equation} in other words, $Y(a,\theta)$ can be written as sum of the products of at most $m$ pair of functions $\left(\gamma_k(a),d_k(\theta)\right)$.
\begin{corollary}\label{cor: m-decomposable}
An $m$-decomposable question is $m-$incentivizable regardless of task payoff $u$.
\end{corollary}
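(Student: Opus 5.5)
The plan is to reduce the corollary to \cref{prop: m-incen}, or directly to \cref{lem: joint sufficiency}, by exhibiting an explicit jointly aligned profile $X$ whose first row is $Y$. Let $Y(a,\theta)=\sum_{k=1}^m \gamma_k(a)d_k(\theta)$ be an $m$-decomposable question. Comparing with \cref{def: weak align}, we take $\lambda_k=0$ for all $k$. The coefficient on $u(a,\theta)$ then vanishes and $Y$ has exactly the weakly aligned form, whatever $u$ is. It remains to check that the corresponding $X$ is genuinely jointly aligned, in particular that the matrices $\Gamma(a)$ can be chosen invertible, because the definition of joint alignment requires this.

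\textbf{Construction.} Let $d\in\mathbb{R}^{m\times|\Theta|}$ be the matrix with rows $d_1,\dots,d_m$, set $\lambda=0$ and $\kappa(a)=0$. For each $a$, we need an invertible $\Gamma(a)\in\mathbb{R}^{m\times m}$ whose first row is $\gamma(a)=(\gamma_1(a),\dots,\gamma_m(a))$.

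\begin{itemize}
\item If $\gamma(a)\neq 0$, complete $\gamma(a)$ to a basis of $\mathbb{R}^m$ and use these vectors as the rows of $\Gamma(a)$.
\item If $\gamma(a)=0$, the first row cannot be zero in an invertible matrix. In that case we instead need $Y(a)=0$ to be expressible via a nonzero row. Two fixes are available. One is to use the shift $\kappa(a)\mathbf{1}$, which only helps if $\mathbf{1}$ is suitably related to the $d_k$. The other, which we adopt, is to modify the decomposition by adding a constant to $Y$: replace $d$ by the $(m+1)$-row matrix $(d;\mathbf{1})$ and $\gamma(a)$ by $(\gamma(a),c)$, with the constant absorbed into $\kappa(a)$. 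This is cleaner, but it costs a dimension.
\end{itemize}

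Alternatively, and preferably, we note that incentivizability only requires the truthful-report and undistorted-action properties. So we can bypass invertibility by building the mechanism directly. Ask the $m$ basis questions $\tilde X_k(a)=d_k$, which are action independent, and pay
\[
V(r,a,\theta)=u(a,\theta)+\sum_k\Big[d_k(\theta)(r_k-L)-\tfrac12 r_k^2\Big],
\]
which is exactly the case $X_0(a)=d$ of the BDM construction. Since this scoring part does not depend on $a$, the optimal reports are $r_k=\mathbb{E}_p d_k$ and the optimal action maximizes $\mathbb{E}_p u$. Then $\mathbb{E}_pY(a)=\sum_k\gamma_k(a)r_k$ is recovered after the action is chosen. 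To literally have $X_1=Y$, reparametrize the report space for each $a$ by the map $r\mapsto \Gamma(a)r$ whenever $\Gamma(a)$ is invertible. This is the step to handle carefully: invertibility fails exactly when $\gamma(a)=0$. There $Y(a)\equiv0$, and the first report can be scored by an added quadratic penalty $-\tfrac12 r_1^2$ at that action. That penalty is action-dependent only through a term whose maximized value is $0$, so it creates no distortion. The remaining $m-1$ reports stay free to elicit, for example, $d_2,\dots,d_m$.

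\textbf{Main obstacle.} The delicate point is this degenerate case $\gamma(a)=0$, or more generally making the payment action independent in expectation at truthful reports. The resolution is to score the transformed reports $\tilde r(a)=\Gamma(a)^{-1}r$ exactly as in \cref{lem: joint sufficiency}. The expected maximized value of the scoring part is then independent of $a$, so any tie-free preference induced by $u$ is preserved. Strictness of the argmax follows from the strict concavity of the quadratic terms in $r$ and from adding $u$ with positive weight.
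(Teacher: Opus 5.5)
Your main construction is exactly the paper's route: take $\lambda=0$ and $\kappa=0$, complete a nonzero $\gamma(a)$ to an invertible $\Gamma(a)$, and score $\tilde r=\Gamma(a)^{-1}r$ with the BDM rule of \cref{lem: joint sufficiency}. The paper reads the corollary off \cref{prop: m-incen} with $\lambda_k=0$. You are right that this breaks when $\gamma(a)=0$. The paper passes over the point when it says weak alignment is ``equivalently'' being the first row of a jointly aligned profile.

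Your repair of the degenerate case, however, fails. At such an action you pay $-\tfrac12 r_1^2$ plus BDM scores for $d_2,\dots,d_m$. The maximized expected score there is $\sum_{k\ge2}\bigl[\tfrac12(\mathbb{E}_pd_k)^2-L\,\mathbb{E}_pd_k\bigr]$, while at every nondegenerate action it is the same sum starting at $k=1$. The difference $\tfrac12(\mathbb{E}_pd_1)^2-L\,\mathbb{E}_pd_1$ is not affine in $p$ unless $d_1$ is constant. So no state-dependent transfer can offset it, and in general it breaks the $u$-ties between the degenerate action and its adjacent actions. The scoring part is therefore not action-independent, contrary to your claim. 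Your ``main obstacle'' paragraph does not help either: scoring $\Gamma(a)^{-1}r$ presupposes an inverse that does not exist at exactly these actions.

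In fact no repair within $m$ questions exists in general, so the statement itself needs a nondegeneracy hypothesis. Let $\Theta=\{1,2,3\}$, $A=\{a_0,a_1\}$, $u(a_0)=(0,0,0)$, $u(a_1)=(1,-1,0)$, $Y(a_0)=0$ and $Y(a_1)=d=(0,0,1)$. This $Y$ is $1$-decomposable with $\gamma(a_0)=0$, $\gamma(a_1)=1$. Suppose some $V$ incentivized $X=Y$. On the tie segment $\{p:p_1=p_2\}$ both $(0,a_0)$ and $(p_3,a_1)$ must be maximizers, so $W(p):=\max_r\mathbb{E}_pV(r,a_1,\theta)=\mathbb{E}_pV(0,a_0,\theta)$ is affine there. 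Now take $p,p'$ on the segment with $p_3\neq p'_3$ and $q=\tfrac12(p+p')$. Uniqueness of the optimal report gives $W(q)=\tfrac12\mathbb{E}_pV(q_3,a_1,\theta)+\tfrac12\mathbb{E}_{p'}V(q_3,a_1,\theta)<\tfrac12W(p)+\tfrac12W(p')$, a contradiction.

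What your argument, and the paper's, actually proves is the corollary under an extra condition. With $d$ the matrix with rows $d_k$, require that for each $a$ some nonzero $g(a)\in\mathbb{R}^m$ satisfies $Y(a)-g(a)d\in\operatorname{span}\{\mathbf 1\}$. This holds, e.g., if $\gamma(a)\neq0$ for all $a$ or if the rows of $d$ are dependent. Without the condition, you get only $(m+1)$-incentivizability, via your own padding by a row $\mathbf 1$ (or $0$). This is also why the paper's $m=|A|$ construction with the permutations $\sigma_i$ never meets the problem: there $\gamma(a_i)=e_i\neq0$.
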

The above result can be viewed as a ``recipe'' for experiment design: for experimenters who would like to elicit belief statistics that are not individually incentivizable, they can now accompany them with some other properly designed ones so that the interdependence between these questions enables joint incentivization. Mathematically,  \cref{eq: separable} amounts to finding an $|A|\times m$ matrix $G$ and an $m\times |\Theta|$ matrix $D$, such that $Y=GD$, where $Y$ is viewed as an $|A|\times |\Theta|$ matrix with its $(i,k)$-th entry being $(a_i,\theta_k)$. By \cref{lem: matrix decomposition} in \cref{appendix: proof}, the minimally possible $m$ equals the rank of $Y$. Specifically, one may take the rows of $D$ as $m$ linearly independent rows of $Y$. Then, each of the supplemental questions $X_j$ are exactly the question $Y$ with some fixed prefixed action $a_i$. When $\operatorname{rank}Y\ll |A|$, it suffices to ask for $Y(a_i)$ for a small subset of actions $a_i\in A$.

For instance, consider the lottery choice task where DM is asked to choose one out of several lotteries with large supports, so that the number of states $|\Theta|$ is much larger than the number of choices $|A|$. Then, there exist solutions with $m=\operatorname{rank}(Y)\leq \min\{|A|,|
\Theta|\}\ll|\Theta|$.\footnote{In another extreme, when the question $Y$ has rank 1, there exists a solution with $m=1$, as now $Y$ is already individually aligned with $u$.} In fact, one can let $m=|A|$, $D=Y$ and $G=I$, so that $d_i(\theta)=Y(a_i,\theta)$. This would be optimal when $Y$ has full rank over $A$. The corresponding supplemented question profile $X(a_i)=\Gamma(a_i)Y$, where $\Gamma(a_i)=\sigma_i=(e_i,e_{i+1},\dots,e_{m},e_1,\dots,e_{i-1})$ is the $i$-th permutation matrix with $e_k$ representing the column unit vector with its $k$-th row equal to 1 and the rest being 0. Observe that $X_1(a_i)=Y(a_i)$ for each $i$. We can use the BDM mechanism after \cref{lem: joint sufficiency} to incentivize this $X$: let
\begin{align*}
     V(a_i,\theta,r)&=\sum_{j=1}^m\left[Y(a_j,\theta)(\sigma_i^{-1}r)_j-\frac{1}{2} (\sigma_i^{-1}r)_j^2\right]+u(a_i,\theta)\\
     &=\sum_{j=1}^m\left[Y(a_j,\theta)r_{j-i+1}-\frac{1}{2} r_{j-i+1}^2\right]+u(a_i,\theta)
\end{align*}
\noindent Then, to maximize $\mathbb{E}_pV(a,\theta,r)$, the sum necessitates truthful report $r(a)=\mathbb{E}_pX(a)$, and hence the desired $\mathbb{E}_pY(a)$, while the remaining $u$ term eliminates distortion. In fact, after a change of labels, this is exactly the action-revelation mechanism $V^{CSR}$.\footnote{As another example, if $Y(a,\theta)=\sum_{k=1}^mf(a)\theta^{k-1}$ is a polynomial of $\theta$, i.e., the desired belief statistic is an action-dependent moment of the state as a random variable, then we can elicit it with at most $m-1$ supplemental questions. In fact, each supplemental $X_j$ in the form of $X_j(a,\theta)=\theta^{j-1}$, so that these questions essentially ask for the first $m-1$ moments of $\theta$.}

One may further solve \cref{eq: separable}, or the full equation in \cref{def: weak align} for a ``first-best'' solution with a minimal number of reported statistics $r$, whereas the preceding can be viewed as a ``second-best'', but more practical and economically intuitive recipe. In fact, by the necessity results in the next section, these are the only recipes available. Hence, for an $m$-aligned question $Y$, at least $m-1$ supplemental questions need to be asked together to achieve joint alignment, so that the number $m$ represents the \textit{complexity} that must be involved for the sake of incentivizing truthful report.

\subsection{Examples}\label{subsec: examples}
\begin{example}[MCQ]\label{eg: individual not aligned}
    Consider an MCQ task. The action space $A$ coincides with the state space $\Theta$, with the true state $\theta\in\Theta$ represents the correct answer. The payoff $u(a,\theta)=\mathbf{1}\{a=\theta\}$, i.e. DM receives payoff 1 if she chooses the correct answer, and 0 otherwise.
    
    Let the state space $\Theta=\{0,\frac{1}{2},1\}$ being real numbers. Now suppose the experimenter instead wish to ask DM ``what is the average squared deviation from the true state?'' This amounts to setting $X_1(a,\theta)=(a-\theta)^2$, so that a truthful report $r_1(a)=\mathbb{E}_pX_1(a,\cdot)=\mathbb{E}_p(a-\theta)^2$ reveals the desired answer.
\end{example}
One can verify that $X_1$ is not aligned with $u$. In fact, $X_1$ is not incentivizable through a similar BDM mechanism. To see the distortion, apply the above BDM payment rule: for any fixed $a$ the optimal report is $r^*(a)=\mathbb{E}_p[X_1(a,\theta)]$, and substituting back gives $\mathbb{E}_p[V(r^*(a),a,\theta)]=\frac12\big(\mathbb{E}_p[X_1(a,\theta)]-L\big)^2$, so the induced action choice is $a\in\arg\max_{a\in A}\mathbb{E}_p[X_1(a,\theta)]$. Now take $p(0)=0.3$, $p(\tfrac12)=0.6$, $p(1)=0.1$; then $\mathbb{E}_p[X_1(0,\theta)]=0.25$, $\mathbb{E}_p[X_1(\tfrac12,\theta)]=0.10$, and $\mathbb{E}_p[X_1(1,\theta)]=0.45$, so BDM selects $a=1$, whereas the MCQ task payoff $u(a,\theta)=\mathbf{1}\{a=\theta\}$ is maximized at $a=\tfrac12$.

However, when the experimenter is allowed to ask multiple questions, the hedging problem can be attenuated through \textit{joint} incentivization with a properly designed question $X_2$.

\begin{example}[MCQ, continued]\label{eg: joint inc}
    Consider the same MCQ task with the state space $\Theta=\{0,\frac{1}{2},1\}$. Now suppose the experimenter further asks DM ``what is the second momentum of the true state?'' This amounts to setting $X_2(a,\theta)=\theta^2$, so that a truthful report $r_2(a)=\mathbb{E}_pX_2(a,\cdot)=\mathbb{E}_p\theta^2$ reveals the desired answer.

    The question profile $X=(X_1;X_2)$ is jointly aligned with $u$ by setting $\Gamma(a)=\begin{bmatrix}
    a & 1\\
    0 & 1
\end{bmatrix}$, $\lambda(\theta)=\begin{bmatrix}
    -2\theta \\
    \theta^2 
\end{bmatrix}$, and $\kappa(a)=\begin{bmatrix}
    a^2\\
    0
\end{bmatrix}$. By \cref{lem: joint sufficiency}, $X$ is jointly incentivizable through a modified BDM mechanism.
\end{example}

\subsection{Joint Alignment and Change of Basis}\label{subsec: joint and change of basis}
Here we discuss the added value of joint alignment beyond individual alignment from the theoretical perspective, in addition to the applied view in \cref{subsec: joint vs individual}. Suppose for simplicity that ${X}(a)=\Gamma(a) \left(\lambda{u}(a) +{d} \right)$ for some column vector $\lambda\in \mathbb{R}^m$, matrix $d\in \mathbb{R}^{m\times|\Theta|}$, and invertible matrices $\Gamma(a) \in \mathbb{R}^{m \times m} $ for all $a\in A$.

First, if $\Gamma(a)$ is a diagonal matrix for each $a\in A$, then each question $X_j$ is itself individually aligned with $u$, and thus joint incentivization is trivially reduced to the individual case.

Second, if $\Gamma(a_0)$ is not a diagonal matrix for some $a_0$, but is diagonalizable (via eigenvalue decomposition). Suppose for each $a\in A$, $\Gamma(a)$ is diagonalizable with the same orthogonal matrix $Q \in \mathbb{R}^{m \times m}$, i.e. $\Gamma(a)=Q^{-1}\tilde{\Gamma}(a)Q$ for some diagonal matrix $\tilde{\Gamma}(a)$.\footnote{This would be the case, for instance, when each $\Gamma_{ab}$ matrix in \cref{lemma: adjacency MQ} in \cref{subsec: adjacency} is a diagonal matrix. We maintain this condition for the case below for simplicity.} Joint alignment can be rewritten into 
\begin{equation*}
    \begin{split}
        {X}(a) &= Q^{-1}\tilde{\Gamma}(a)Q\left(\lambda{u}(a) +{d} \right)\\
        \Rightarrow  Q{X}(a) &=\tilde{\Gamma}(a)Q\left(\lambda{u}(a) +{d} \right)\\
        &= \tilde{\Gamma}(a)  \left(Q\lambda{u}(a) + Q{d}\right).
    \end{split}
\end{equation*}If we let $\tilde{X}=QX$, then the transformed questions $\tilde{X}$ is aligned with $u$ via the new column vector $\tilde{\lambda}=Q\lambda$, matrix $\tilde{d}=Qd$, and the diagonal matrices $\tilde{\Gamma}(a)$. In other words, we are back to the first case after a change of basis, and the transformed questions can now be individually incentivized.

Generally, if $\Gamma(a_0)$ is not diagonalizable, we can implement the Jordan decomposition and write $\Gamma(a_0)=Q^{-1}\hat{\Gamma}(a_0)Q$ where $\hat{\Gamma}(a_0)$ is a Jordan normal form: it has Jordan blocks along the diagonal, each block being $J=\mu I + N$ with nilpotent matrix $N$. Explicitly, $$\hat{\Gamma}(a_0)=\begin{bmatrix}
J_1(\mu_1) &         &           &        \\
        & J_2 (\mu_2)&     &        \\
        &         & \ddots    &        \\
        &         &           & J_{l}(\mu_l)
\end{bmatrix}\text{, with each } J_k(\lambda) =
\begin{bmatrix}
\mu_k& 1       &           &        \\
        & \mu_k& \ddots    &        \\
        &         & \ddots    & 1      \\
        &         &           & \mu_k
\end{bmatrix}
= \mu_k I_k + N_k;$$
$\{\mu_1,\mu_2,\dots,\mu_k\}$ are the distinct eigenvalues. Now, as above we can similarly rewrite the joint alignment condition with the transformed questions $\hat X$ and the corresponding parameters, only that the transformed questions $\hat X$, even after the change of basis, cannot be individually incentivized as the $\hat \Gamma(a)$ matrices are not diagonal. Interestingly, they can still be ``locally incentivized'' within \textit{groups} corresponding to the rows of each Jordan block: for each Jordan block $J$ occupying rows (and columns) $K=\{j_1,j_2,\dots,j_k\}$, the corresponding questions $\{j_1,j_2,\dots,j_k\}$ in the transformed question profile can be jointly incentivized without involving/being involved with other questions. Thus, in this case, it is necessary to invoke the generalized BDM mechanism described in \cref{lem: joint sufficiency}.

From a different perspective, if we allow for an action-dependent change of basis, then another series of transformed questions $\tilde{X}(a)=\Gamma(a)^{-1}\left [X(a)-\kappa(a)\right]$ have each of its row individually aligned with $u$, and thus can be individually incentivized. However, notice that the action-dependent change of basis relies on first jointly pinning down the potential $\Gamma(a)$. The above two ways of linear transformation illustrate the improvement from individual alignment, which underpins our discussion on experiment design in \cref{subsec: joint vs individual}.

\section{Joint Alignment: Necessity}\label{sec: necessity}

\subsection{Robust Incentivizability}
We now strengthen incentivizability to a robust version, for which joint alignment remains sufficient. It is shown in the next subsections that this stronger notion also necessitates joint alignment, so that it is fully characterized by joint alignment.

\begin{definition}[Robust incentivizability]\label{def: robust inc MQ}
    An $m$-dimensional question $X$ is robustly jointly incentivizable if there exists invertible matrices $\mu(a)\in\mathbb{R}^{m\times m}$ for each $a\in A$ such that for all $t\in \mathbb{R}^{m\times|\Theta|}$, the question $X_t(a)=X(a)+\mu(a)t$ is incentivizable.
\end{definition}
We can define $m-$robust incentivizability likewise. As an example, when there is only one question, robustness means that incentivizability of $X$ is preserved under proportional perturbations towards any direction $t\in \mathbb{R}^{|\Theta|}$. Intuitively, suppose DM has a constant bias $t(\theta)$ at each state $\theta$. Then, robustness says that the experimenter can find a nearby question $X_t$ respecting this bias, so that DM can still be incentivized to act and report her belief truthfully under this question tailored for him. For multi-dimensional question, robustness means that joint incentivizability of $X$ is preserved when each question $X_j(a)$ at action $a$ is perturbed towards any direction $\mu_j(a) t\in \mathbb{R}^{|\Theta|}$.

Observe that a jointly aligned $X$ is also robustly incentivizable: if $X(a)=\Gamma(a) \left(\lambda{u}(a) +d\right)+\kappa(a)\mathbf{1}$, we can let $\mu(a)=\Gamma(a)$, so that $\tilde{X}(a)=\Gamma(a) \left(\lambda{u}(a) +d+t \right)+\kappa(a)\mathbf{1}$ is still aligned with $u$ and thus incentivizable. 

\begin{proposition}\label{prop: sufficiency MQ}
    If question profile $X$ is jointly aligned with $u$, then it is robustly incentivizable.
\end{proposition}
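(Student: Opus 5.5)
The plan is to verify \cref{def: robust inc MQ} directly. We exhibit the required invertible matrices $\mu(a)$ and then reduce, for every perturbation $t$, to \cref{lem: joint sufficiency}.

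Suppose $X$ is jointly aligned with $u$, so there exist $\lambda\in\mathbb{R}^m$, $d\in\mathbb{R}^{m\times|\Theta|}$, invertible $\Gamma(a)$ and vectors $\kappa(a)$ with
\[
X(a)=\Gamma(a)\big(\lambda u(a)+d\big)+\kappa(a)\mathbf{1}\qquad\text{for all } a\in A.
\]
We set $\mu(a)=\Gamma(a)$ for each $a\in A$; these matrices are invertible by assumption. Now fix an arbitrary $t\in\mathbb{R}^{m\times|\Theta|}$. By linearity,
\[
X_t(a)=X(a)+\Gamma(a)t=\Gamma(a)\big(\lambda u(a)+(d+t)\big)+\kappa(a)\mathbf{1}.
\]
This has exactly the form required by the joint alignment definition, with the same $\lambda$, $\Gamma(\cdot)$ and $\kappa(\cdot)$ and with the action-independent matrix $d$ replaced by $d'=d+t$.

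The only point to check is that $t$ enters in an action-independent way once it is premultiplied by $\Gamma(a)$. This holds because the definition of robust incentivizability uses the same $t$ for every action, and joint alignment only requires $d$ to be independent of $a$. Hence $X_t$ is jointly aligned with $u$ for every $t$. Applying \cref{lem: joint sufficiency} to $X_t$ shows that $X_t$ is incentivizable, via the modified BDM mechanism built from $\Gamma(a)$, $\kappa(a)$ and $d+t$. Since this holds for all $t$, $X$ is robustly incentivizable.

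There is no real obstacle here. The only detail is that the BDM construction in \cref{lem: joint sufficiency} requires a lower bound $L$ below $u(a;\theta)+d'_j(\theta)$ (or below $d'_j(\theta)$ when $\lambda=0$). Because $\Theta$ and $A$ are finite, such an $L$ exists for each fixed $t$, and it is allowed to depend on $t$. Likewise, the mechanism's range can be normalized into $[0,1]$ by a positive affine rescaling, chosen separately for each $t$, and this rescaling does not change the arg max.
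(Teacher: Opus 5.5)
Your proposal is correct and is exactly the paper's argument: take $\mu(a)=\Gamma(a)$, so that $X_t(a)=\Gamma(a)\big(\lambda u(a)+d+t\big)+\kappa(a)\mathbf{1}$ is again jointly aligned with $d$ replaced by $d+t$, and then invoke \cref{lem: joint sufficiency}. One small caveat concerns your final remark, which goes beyond what the paper addresses: the BDM score is quadratic in an unrestricted report $r\in\mathbb{R}^m$, so it is unbounded below, and mapping it into $[0,1]$ by a positive affine rescaling first requires restricting reports to a bounded set that contains all feasible expectations (the paper leaves this implicit too).
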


\subsection{Adjacency, Adjacency Graph, and Kirchhoff's Law}\label{subsec: adjacency}

We now introduce a key notion for incentivizability. Say two actions are \textit{adjacent} if under some belief $p$, both actions are optimal and there is no other optimal action. For intuition, suppose $(a,b)$ are co-optimal for DM under $p$. Then, there is a nearby belief $p'$ such that DM's unique optimal action is either $a$ or $b$ under belief $p'$. For sake of argument, let $a$ be the unique optimal action. Then, facing a belief elicitation question $X$ and payoff $V$, one obvious possible distortion is misreporting $p'$ as $p$ and deviating from $a$ to $b$. Thus, for non-distortionary elicitation of $X$, an important binding constraint is to eliminate such hedging concerning these adjacent pairs. The following lemma, due to PS25, formalizes the above intuition.

Let $\bar{v}=v-\frac{1}{|\Theta|} \sum_{\theta^{\prime} \in \Theta} v\left(\theta^{\prime}\right) \mathbf{1}$ for any $v \in \mathbb{R}^{\Theta}$; geometrically, $\bar v$ is the projection of $v$ onto the $(|\Theta|-1)$-dimensional hyperplane of vectors whose coordinates sum to $0$, which we denote as $\Delta$. Let $\Delta_a^b=\bar{u}(b)-\bar{u}(a)$ be the payoff difference vector. For instance, joint alignment between $X$ and $u$ can now be rewritten as $\bar X(a)=\Gamma(a) \left(\lambda \bar{u}(a) +\bar d\right)$, or equivalently, $\Gamma(b)^{-1}\bar{X}(b)-\Gamma(a)^{-1}\bar{X}(a)=\lambda\Delta_a^b$ for all $a,b\in A$.

\begin{lemma}[Adjacency Lemma, generalizing  Lemma 4 of PS25]\label{lemma: adjacency MQ}
    Suppose that $X$ is jointly incentivizable. If actions a and b are adjacent, then there are matrices $\lambda\in\mathbb{R}^{m\times 1}$ and nonzero $\Gamma\in\mathbb{R}^{m\times m}$ such that
    \begin{equation}
        \bar X(b)=\lambda_{ab}\Delta_a^b+\Gamma_{ab}\bar X(a).\label{eq: adjacency lemma MQ}
    \end{equation}
    \end{lemma}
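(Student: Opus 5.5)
Our plan is to follow the proof of PS25's Lemma 4, replacing the scalar envelope argument with its vector analogue, and to derive \cref{eq: adjacency lemma MQ} from a local characterization of incentivizing mechanisms along a small segment of beliefs crossing the indifference hyperplane between $a$ and $b$.

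\textbf{Step 1: Envelope representation.} Fix an incentivizing $V$ and, for each action $c$, let
\[
W(c,p)=\max_r \mathbb{E}_p V(r,c,\theta).
\]
Each $W(c,\cdot)$ is a maximum of functions linear in $p$, hence convex. Truthfulness says that the maximizer is $r=\mathbb{E}_pX(c)$, so
\[
W(c,p)=\mathbb{E}_pV(\mathbb{E}_pX(c),c,\theta).
\]
A standard envelope/supporting-hyperplane argument (as in Savage/McCarthy characterizations of proper scoring rules) then gives a subgradient $g_c(p)\in\mathbb{R}^\Theta$ of $W(c,\cdot)$ with
\[
g_c(p)=V(\mathbb{E}_pX(c),c,\cdot).
\]
Strict identification of the report forces $W(c,\cdot)$ to depend on $p$ only through $\mathbb{E}_pX(c)$ in the relevant directions. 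Differentiating along directions $\delta\in\Delta$ gives
\[
\nabla W(c,p)\cdot\delta=\phi_c(p)^\top \bar X(c)\delta
\]
for some multiplier vector $\phi_c(p)\in\mathbb{R}^m$. Here $\phi_c$ plays the role of the derivative of $V$ with respect to $r$, and it has to be handled via one-sided directional derivatives, since $W$ is only convex.

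\textbf{Step 2: Exploit adjacency.} Choose $p^*$ in the relative interior of the simplex at which $a$ and $b$ are the only optimal actions, and perturb it slightly so that $p^*$ lies in the interior of the indifference face $H=\{p:\mathbb{E}_p\Delta_a^b=0\}$ near $p^*$. Non-distortion implies that on a neighborhood $N$ of $p^*$:
\begin{itemize}
\item $W(a,p)\ge W(b,p)$ where $\mathbb{E}_p\Delta_a^b\le 0$;
\item $W(b,p)\ge W(a,p)$ where $\mathbb{E}_p\Delta_a^b\ge 0$;
\item no other action is chosen.
\end{itemize}
Hence $W(a,\cdot)=W(b,\cdot)$ on $H\cap N$. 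The directional derivatives of the two functions therefore agree along every tangent direction $\delta\in\Delta$ with $\Delta_a^b\delta=0$:
\[
\phi_a(p)^\top\bar X(a)\delta=\phi_b(p)^\top\bar X(b)\delta \quad\text{for all such }\delta.
\]

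\textbf{Step 3: From local equalities to a global linear relation.} The equality in Step 2 must hold for all $p\in H\cap N$, and by varying $p$ we obtain many multiplier pairs $(\phi_a(p),\phi_b(p))$. We would argue that the vectors $\phi_b(p)$ span $\mathbb{R}^m$ as $p$ ranges over $H\cap N$. If they did not, reports of $X(b)$ would be insensitive to some direction, contradicting strict truthful identification of all $m$ components. Stacking $m$ such independent instances yields matrices $\Phi_a,\Phi_b$ with $\Phi_b$ invertible and
\[
\Phi_b\bar X(b)\delta=\Phi_a\bar X(a)\delta \quad\text{for all }\delta\perp\Delta_a^b,\ \delta\in\Delta.
\]
Setting $\Gamma_{ab}=\Phi_b^{-1}\Phi_a$, the row space of $\bar X(b)-\Gamma_{ab}\bar X(a)$ annihilates the subspace $\{\delta\in\Delta:\Delta_a^b\delta=0\}$. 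Each row of $\bar X(b)-\Gamma_{ab}\bar X(a)$ is therefore a multiple of $\Delta_a^b$ (rows already lie in $\Delta$ after centering), which gives $\lambda_{ab}$ and the identity \cref{eq: adjacency lemma MQ}. Nonzeroness of $\Gamma_{ab}$ follows from the symmetric argument with the roles of $a$ and $b$ swapped, which makes $\Phi_a$ invertible as well.

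\textbf{Main obstacle.} The hard step is Step 3: showing that the multipliers $\phi_b(p)$ span $\mathbb{R}^m$, while coping with the fact that $W$ is merely convex, not differentiable. I would first work with the one-sided derivatives along segments inside $H$, using the almost-everywhere differentiability of convex functions. I would then argue by contradiction: if every $\phi_b(p)$ lies in a proper subspace, then some component of the report does not affect $\mathbb{E}_pV$ to first order, so a perturbed report is also optimal, which violates the uniqueness of the $\arg\max$ in the definition of incentivizability. If this spanning fails outright, the fallback is to prove the weaker statement with $\Gamma_{ab}$ merely nonzero, using a single nonzero multiplier pair. This weaker form is exactly what the lemma asserts.
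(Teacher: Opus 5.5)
Your set-up (the value functions $W(c,\cdot)$ and the tie $W(a,\cdot)=W(b,\cdot)$ on $H\cap N$) is the right starting point, but the argument built on it has two genuine gaps.

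First, the envelope identity in Step 1 is false as stated. Truthful elicitation of $\mathbb{E}_pX(c)$ only gives $W(c,p)=G_c(\mathbb{E}_pX(c))+\mathbb{E}_pk_c(\theta)$, with $G_c$ strictly convex and $k_c\in\mathbb{R}^\Theta$ arbitrary. Hence $\nabla W(c,p)\cdot\delta=\phi_c(p)^\top\bar X(c)\delta+\bar k_c\cdot\delta$. The extra term is generic. For a single question, the quadratic score $-(r-X(c,\theta))^2$ gives $W(c,p)=-\operatorname{Var}_pX(c)$, which is not a function of $\mathbb{E}_pX(c)$; the paper's own mechanisms also add $u(c,\theta)$. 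So Step 2 really reads $\phi_b(p)^\top\bar X(b)\delta-\phi_a(p)^\top\bar X(a)\delta=(\bar k_a-\bar k_b)\cdot\delta$ for $\delta\in T:=\{\delta\in\Delta:\Delta_a^b\cdot\delta=0\}$, and the right-hand side does not depend on $p$.

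One instance of this identity can be met for any $\bar X(a),\bar X(b)$ by a suitable choice of $k_a-k_b$. So your fallback with a single multiplier pair proves nothing. Stacking $m$ raw instances only places the rows of $\bar X(b)^\perp$ in the span of the rows of $\bar X(a)^\perp$ together with $(\bar k_a-\bar k_b)^\perp$. Only differences $\phi_b(p)-\phi_b(p')$ carry information. The fallback also misreads the lemma: its content is that \emph{every} row of $\bar X(b)$ lies in $\operatorname{span}\{\Delta_a^b,\text{rows of }\bar X(a)\}$. One scalar relation cannot give that when $m>1$; $\Gamma_{ab}\neq 0$ is the easy part.

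Second, the spanning claim in Step 3 fails whenever $\bar X(b)^\perp$ has rank below $m$. Your justification conflates strict convexity of $G_b$ on $\mathbb{R}^m$ with variation of $\nabla G_b$ along the realized means $\{\mathbb{E}_pX(b):p\in H\cap N\}$, which form an affine set of dimension $\operatorname{rank}\bar X(b)^\perp$. Two examples:
duplicate questions $X_1(b)=X_2(b)$ scored with potential $G_b(r)=g(r_1)+g(r_2)$ are strictly identified, yet every $\phi_b(p)$ lies on the diagonal; and a row with $\bar X_j(b)\propto\Delta_a^b$ has constant expectation on $H$. Such rank-deficient cases are exactly those the paper flags after the lemma, and the lemma is asserted for them. (When $\bar X(b)^\perp$ has full rank your route can be repaired: strict monotonicity of $\nabla G_b$ on an open set makes the differences span $\mathbb{R}^m$.)

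The missing idea is to prove kernel inclusion on $T$ directly instead of spanning.
Take $\delta\in T$ with $\bar X(a)\delta=0$, and move along $p_s=p^*+s\delta\in H\cap N$. The truthful report for $a$ is constant, so $W(a,p_s)=\mathbb{E}_{p_s}V(\mathbb{E}_{p^*}X(a),a,\theta)$ is affine in $s$, and hence so is $W(b,p_s)$. The affine function $s\mapsto\mathbb{E}_{p_s}V(\mathbb{E}_{p^*}X(b),b,\theta)$ lies below $W(b,p_s)$ and touches it at the interior point $s=0$, so the two coincide. Thus the report $\mathbb{E}_{p^*}X(b)$ stays optimal at every $p_s$, and uniqueness of the truthful report forces $\bar X(b)\delta=0$.
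This gives $\bar X(b)^\perp=\Gamma_{ab}\bar X(a)^\perp$ for some $\Gamma_{ab}$, and your final step yields $\lambda_{ab}$. Swapping $a$ and $b$ shows the row spaces coincide, so $\Gamma_{ab}$ can be chosen invertible. No envelope formula or differentiability is needed.

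Finally, both this argument and your Step 2 need $p^*$ in the relative interior of the simplex, which the paper's definition of adjacency does not guarantee. With only boundary witnesses the conclusion can fail. On states $\{1,2,3,4\}$ take $u(a)=0$, $u(b)=\mathbf{1}\{\theta\in\{3,4\}\}$, $X(a)=0$, and $X(b)=\mathbf{1}\{\theta=3\}-\mathbf{1}\{\theta=4\}$. The mechanism $V(r,a,\theta)=-r^2$, $V(r,b,\theta)=2rX(b,\theta)-r^2+\varepsilon u(b,\theta)$ incentivizes this $X$, yet $\bar X(b)\notin\operatorname{span}\{\Delta_a^b\}$. So state interiority of the witnessing belief as a hypothesis.
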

The lemma thus states a ``local'' joint alignment between $X$ and $u$: $\Gamma_{ab}$ and $\lambda_{ab}$ both depend on the pairing between $(a,b)$, whereas the ``global'' joint alignment condition above requires a constant $\lambda$ and only allows for $\Gamma$ depending on the action itself. For instance, in the action-revelation mechanism, we have $X(a_j)=\sigma_{j-i+1}X(a_i)$, i.e. . To globalize \cref{eq: adjacency lemma MQ}, we construct an \textit{adjacency graph} $G=(V, E)$ with vertices $V=A$ and edges $E=\{(a, b) \in A \times A:\, (a, b) \text{ are adjacent}\}$ and examine $\Gamma_{ab}$ as the length of edges as defined below.

\begin{table}[!htbp]
    \centering
    \caption{Adjacency Graphs}
    \label{tab:AdjGraph}

    \begin{adjustbox}{max totalsize={\textwidth}{\textheight},center}
    \begin{tabular}{C{\cellw} C{\cellw} C{\cellw} C{\cellw} C{\cellw} C{\cellw}}
    \hline
    \\
    \begin{minipage}[c]{\cellw}\centering
      {(A)}\\
    \adjustbox{valign=c,max size={\cellw}{\rowH}}{%
    \begin{tikzpicture}[baseline=(current bounding box.center)]
      \def\s{1.8}
      \node[vertex, label={[vlabel]below:$a_1$}] (v1) at (0,0) {};
      \node[vertex, label={[vlabel]below:$a_2$}] (v2) at (\s,0) {};
      \node[vertex, label={[vlabel]above:$a_3$}] (v3) at (\s,\s) {};
      \node[vertex, label={[vlabel]above:$a_4$}] (v4) at (0,\s) {};
      \draw[edge] (v1)--(v2) (v2)--(v3) (v3)--(v4);
      \draw[edge] (v1)--(v3) (v1)--(v4) (v2)--(v4);
    \end{tikzpicture}}
    \end{minipage}
    &
    \begin{minipage}[c]{\cellw}\centering
    \color{blue}(B)\\
        \adjustbox{valign=c,max size={\cellw}{\rowH}}{%
        \begin{tikzpicture}[baseline=(current bounding box.center)]
      \def\s{1.8}
      \node[vertex, label={[vlabel]below:$a_1$}] (v1) at (0,0) {};
      \node[vertex, label={[vlabel]below:$a_2$}] (v2) at (\s,0) {};
      \node[vertex, label={[vlabel]above:$a_3$}] (v3) at (\s,\s) {};
      \node[vertex, label={[vlabel]above:$a_4$}] (v4) at (0,\s) {};
      \draw[edge] (v1)--(v2) (v2)--(v3) (v3)--(v4);
      \draw[edge] (v1)--(v4) (v1)--(v3);
        \end{tikzpicture}}
    \end{minipage}
    &
    \begin{minipage}[c]{\cellw}\centering
    \color{blue}(C)\\
    \adjustbox{valign=c,max size={\cellw}{\rowH}}{%
        \begin{tikzpicture}[baseline=(current bounding box.center)]
      \node[vertex, label={[vlabel]below:$O$}] (o) at (0,0) {};
      \node[vertex, label={[vlabel]left:$a_1$}]  (a1) at (-1.8, 0.8) {};
      \node[vertex, label={[vlabel]left:$a_2$}]  (a2) at (-1.8,-0.8) {};
      \node[vertex, label={[vlabel]right:$b_1$}] (b1) at ( 1.8, 0.8) {};
      \node[vertex, label={[vlabel]right:$b_2$}] (b2) at ( 1.8,-0.8) {};
      \draw[edge] (a1)--(a2)--(o)--(a1);
      \draw[edge] (b1)--(b2)--(o)--(b1);
        \end{tikzpicture}}
    \end{minipage}
    &
    \begin{minipage}[c]{\cellw}\centering
    {(D)}\\
    \adjustbox{valign=c,max size={\cellw}{\rowH}}{%
        \begin{tikzpicture}[baseline=(current bounding box.center)]
      \def\s{1.8}
      \node[vertex, label={[vlabel]below:$a_1$}] (v1) at (0,0) {};
      \node[vertex, label={[vlabel]below:$a_2$}] (v2) at (\s,0) {};
      \node[vertex, label={[vlabel]above:$a_3$}] (v3) at (\s,\s) {};
      \node[vertex, label={[vlabel]above:$a_4$}] (v4) at (0,\s) {};
      \draw[edge] (v1)--(v2) (v2)--(v3) (v3)--(v4);
        \end{tikzpicture}}
    \end{minipage}
    &
    \begin{minipage}[c]{\cellw}\centering
   {(A*)}\\
    \adjustbox{valign=c,max size={\cellw}{\rowH}}{%
        \begin{tikzpicture}[baseline=(current bounding box.center)]
      \def\s{1.8}
      \node[vertex, label={[vlabel]below:$a_1$}] (a1) at (0,0) {};
      \node[vertex, label={[vlabel]below:$a_2$}] (a2) at (\s,0) {};
      \node[vertex, label={[vlabel]above:$a_3$}] (a3) at (\s,\s) {};
      \node[vertex, label={[vlabel]above:$a_4$}] (a4) at (0,\s) {};
      \draw[edge] (a1)--(a2) (a2)--(a3) (a3)--(a4);
      \draw[edge] (a1)--(a3) (a1)--(a4) (a2)--(a4);
      \begin{scope}[shift={(2.2,1.2)}]
        \node[vertex, label={[vlabel]below:$b_1$}] (b1) at (0,0) {};
        \node[vertex, label={[vlabel]below:$b_2$}] (b2) at (\s,0) {};
        \node[vertex, label={[vlabel]above:$b_3$}] (b3) at (\s,\s) {};
        \node[vertex, label={[vlabel]above:$b_4$}] (b4) at (0,\s) {};
        \draw[edge] (b1)--(b2) (b2)--(b3) (b3)--(b4);
        \draw[edge] (b1)--(b3) (b1)--(b4) (b2)--(b4);
      \end{scope}
      \draw[edge] (a1)--(b1) (a2)--(b2) (a3)--(b3) (a4)--(b4);
        \end{tikzpicture}}
    \end{minipage}
    &

    \begin{minipage}[c]{\cellw}\centering
    \color{blue}(D*)\\
    \adjustbox{valign=c,max size={\cellw}{\rowH}}{%
        \begin{tikzpicture}[baseline=(current bounding box.center)]
      \def\s{1.8}
      \node[vertex, label={[vlabel]below:$a_1$}] (a1) at (0,0) {};
      \node[vertex, label={[vlabel]below:$a_2$}] (a2) at (\s,0) {};
      \node[vertex, label={[vlabel]above:$a_3$}] (a3) at (\s,\s) {};
      \node[vertex, label={[vlabel]above:$a_4$}] (a4) at (0,\s) {};
      \draw[edge] (a1)--(a2) (a2)--(a3) (a3)--(a4);
      \begin{scope}[shift={(2.2,1.2)}]
        \node[vertex, label={[vlabel]below:$b_1$}] (b1) at (0,0) {};
        \node[vertex, label={[vlabel]below:$b_2$}] (b2) at (\s,0) {};
        \node[vertex, label={[vlabel]above:$b_3$}] (b3) at (\s,\s) {};
        \node[vertex, label={[vlabel]above:$b_4$}] (b4) at (0,\s) {};
        \draw[edge] (b1)--(b2) (b2)--(b3) (b3)--(b4);
      \end{scope}
      \draw[edge] (a1)--(b1) (a2)--(b2) (a3)--(b3) (a4)--(b4);
        \end{tikzpicture}} 
    
    \end{minipage}
    \\
    \hline
    \end{tabular}
    \end{adjustbox}

\medskip
{\footnotesize Several representative graphs. (1) A single MCQ task with $\Theta = A$ and payoff $u(a;\theta)=\mathbf{1}\{a=\theta\}$. $G$ is generally complete as in (A) because any two answers can be co‑optimal at some belief. (2) Numerical guess. Take $\Theta$ to be a discrete interval and let $A$ be numeric guesses on the same grid, and the decision payoff be a strictly proper score such as  $u(a;\theta)=-(a-\theta)^2$. The induced $G$ is a tree as in (D) as only neighboring guesses can be co‑optimal. (3)  A series of MCQ. For each MCQ $i$, $\Theta_i = A_i$, payoff $u_i(a;\theta)=\mathbf{1}\{a_i=\theta_i\}$, and $u=\sum_i u_i$. The corresponding adjacency graph $G$ is a product of complete graphs $G_i$ as in (A$^*$). (4) A series numerical guesses. Likewise, the adjacency graph $G$ is a product of trees as in (D$^*$).}
\end{table}
For a subspace $D$ of $\Delta$, let $D^\perp$ be its orthogonal in $\Delta$, and write $\bar{v}_D^\perp$ as the projection of $\bar{v}$ in $D^\perp$ for a vector $v$. For adjacent actions $a$ and $b$, write $\bar{X}(a)^\perp$ shorthand for $\bar{X}(a)^\perp_{\Delta_a^b}$, and $\bar{X}(b)^\perp$ similarly. For an edge $(a,b)\in E$, when $\bar{X}(a)^\perp$ has full rank $m$, define the \textit{edge length matrix} $w(a,b)$ as the unique matrix that satisfies $\bar{X}(b)^\perp=w(a,b)\bar{X}(a)^\perp$. Due to the orthogonal decomposition, $\Gamma_{ab}=w(a,b)$ is also a solution to \cref{eq: adjacency lemma MQ}. For instance, for a single question $X$, when $\bar{X}(a)$ and $\bar{X}(b)$ are not colinear, the edge length $w(a, b)$ equals the ratio between ${p_{ab}\cdot \bar{X}(a)}$ and ${p_{ab}\cdot \bar{X}(b)}$ for a vector $p_{a b}$ orthogonal to the payoff difference $\Delta_a^b$. Intuitively, this is the ratio between the length of the orthogonal projections of $\bar{X}(a)$ and $\bar{X}(b)$ on the hyperplane perpendicular to $\Delta_a^b$. 

When $\bar{X}(a)^\perp$ does not have full rank (e.g., colinearity in the single question case), there remains freedom for defining the edge length $w$. Our goal is a proper calibration of such $w$ through perturbation by virtue of robustness in incentivizability, such that the proposition below holds and leads to global joint alignment. Define the product length for a path $P=(v_0,v_1,\dots,  v_n)$ as $\prod_{i=1}^n w(v_{i-1},v_i)=w(v_{n-1},v_n)w(v_{n-2},v_{n-1})\cdots w(v_0,v_1)$. 

\begin{proposition}
\label{prop: gamma potential MQ}
If for all cycles of $G$, the edge length product equals to identity , there exists a potential $\Gamma(a)\in \mathbb{R}^{m\times m}$ invertible for each $a$ and a column vector $\lambda_{ab}\in \mathbb{R}^{m}$ for each adjacent pair $a$ and $b$, such that 
\begin{equation*}\label{eq: potential MQ}
    \Gamma(b)^{-1}\bar{X}(b)-\Gamma(a)^{-1}\bar{X}(a)=\lambda_{ab}(\bar{u}(b)-\bar{u}(a)) .
\end{equation*}
\end{proposition}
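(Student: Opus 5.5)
The plan is a discrete integration argument on the adjacency graph $G$. Work component by component; if $G$ is disconnected the potentials on different components are chosen independently, so assume $G$ connected. Fix a root $a_0\in A$ and set $\Gamma(a_0)=I$. For any other action $a$, choose a path $P=(a_0=v_0,v_1,\dots,v_n=a)$ in $G$ and define
\[
\Gamma(a)^{-1}=\prod_{i=1}^n w(v_{i-1},v_i)=w(v_{n-1},v_n)\cdots w(v_0,v_1),
\]
equivalently $\Gamma(a)=\bigl(w(v_{n-1},v_n)\cdots w(v_0,v_1)\bigr)^{-1}$.

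\emph{Well-definedness.} Given two paths $P,P'$ from $a_0$ to $a$, concatenate $P$ with the reverse of $P'$ to get a closed walk. Each edge length must be invertible with $w(b,a)=w(a,b)^{-1}$. This holds when both $\bar X(a)^\perp$ and $\bar X(b)^\perp$ have full rank $m$, since the defining relation $\bar X(b)^\perp=w(a,b)\bar X(a)^\perp$ can be inverted. In the degenerate case it holds because the calibration of $w$ is chosen so that this property is preserved. With that, a closed walk reduces to a product of cycle products, since backtracking contributes $w(b,a)w(a,b)=I$. By hypothesis every cycle product is $I$, so the two path products agree and $\Gamma(a)$ is well defined and invertible. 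This yields the telescoping identity
\[
\Gamma(b)^{-1}=w(a,b)\,\Gamma(a)^{-1}
\]
for every edge $(a,b)\in E$.

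\emph{Edge condition.} For an adjacent pair, combine \cref{lemma: adjacency MQ} with the orthogonal decomposition of $\Delta$ into $\mathrm{span}(\Delta_a^b)$ and its complement $D^\perp$:
\begin{itemize}
\item Projecting \cref{eq: adjacency lemma MQ} onto $D^\perp$ gives $\bar X(b)^\perp=\Gamma_{ab}\bar X(a)^\perp$, so $w(a,b)$ may be taken as $\Gamma_{ab}$.
\item The remaining component of $\bar X(b)-w(a,b)\bar X(a)$ therefore lies in $\mathrm{span}(\Delta_a^b)$, i.e.
\[
\bar X(b)-w(a,b)\bar X(a)=\tilde\lambda_{ab}\Delta_a^b
\]
for some column vector $\tilde\lambda_{ab}\in\mathbb{R}^m$.
\end{itemize}
Left-multiply by $\Gamma(b)^{-1}$ and use $\Gamma(b)^{-1}w(a,b)=\Gamma(a)^{-1}$. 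This gives
\[
\Gamma(b)^{-1}\bar X(b)-\Gamma(a)^{-1}\bar X(a)=\Gamma(b)^{-1}\tilde\lambda_{ab}\,\Delta_a^b,
\]
so we set $\lambda_{ab}=\Gamma(b)^{-1}\tilde\lambda_{ab}$. Since $\Delta_a^b=\bar u(b)-\bar u(a)$, this is exactly the claimed identity.

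\emph{Main obstacle.} The hard step is ensuring each $w(a,b)$ is invertible and satisfies $w(b,a)=w(a,b)^{-1}$, so that cycle products are orientation-consistent and $\Gamma$ is invertible. This is straightforward when $\bar X(a)^\perp$ and $\bar X(b)^\perp$ both have full rank $m$. In the rank-deficient case it must come from the calibration of $w$ via perturbation: by robustness (\cref{def: robust inc MQ}), replace $X$ by $X_t$ for a generic $t$ so that full rank holds, then pass to the limit. I would take the statement as presupposing such a calibrated, invertible, symmetric choice of $w$, and note this explicitly as the point where that calibration is used.
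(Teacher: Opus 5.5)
Your route is the paper's: fix a root $a_0$, build the potential from ordered path products, get path independence from the cycle hypothesis (\cref{lem: Kirchhoff MQ}), then rewrite \cref{eq: adjacency lemma MQ} and absorb $\Gamma(b)^{-1}$ into $\lambda_{ab}$. However, your potential is inverted, and as written the step ``use $\Gamma(b)^{-1}w(a,b)=\Gamma(a)^{-1}$'' fails.

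With your definition $\Gamma(a)^{-1}=w(v_{n-1},v_n)\cdots w(v_0,v_1)$, you get $\Gamma(b)^{-1}=w(a,b)\Gamma(a)^{-1}$. Left-multiplying $\bar X(b)-w(a,b)\bar X(a)=\tilde\lambda_{ab}\Delta_a^b$ by $\Gamma(b)^{-1}$ then gives $\Gamma(b)^{-1}\bar X(b)-w(a,b)\Gamma(a)^{-1}w(a,b)\bar X(a)$ on the left. That is not $\Gamma(b)^{-1}\bar X(b)-\Gamma(a)^{-1}\bar X(a)$. Already for $m=1$, your identity would force $w(a,b)^2=1$. For instance, with $\bar X(b)=2\bar X(a)+\lambda\Delta_a^b$ and $\bar X(a)\perp\Delta_a^b$, your choice leaves a residual $3\bar X(a)\notin\mathrm{span}(\Delta_a^b)$.

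The fix is to let $\Gamma(a)$ itself, not its inverse, be the ordered path product: $\Gamma(a)=w(v_{n-1},v_n)\cdots w(v_0,v_1)\Gamma(a_0)$, as the paper does. Then $\Gamma(b)=w(a,b)\Gamma(a)$, i.e.\ $w(a,b)=\Gamma(b)\Gamma(a)^{-1}$ as in \cref{lem: Kirchhoff MQ}. This is exactly $\Gamma(b)^{-1}w(a,b)=\Gamma(a)^{-1}$, and the rest of your argument goes through unchanged and coincides with the paper's proof.

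Your concern about invertibility and about $w(b,a)=w(a,b)^{-1}$ is legitimate; the paper's proof invokes \cref{lem: Kirchhoff MQ} without addressing it. You can sharpen the point in two ways:
\begin{itemize}
\item For any edge lying on a cycle, invertibility of $w(a,b)$ is already forced by the hypothesis. A product of square matrices equal to $I_m$ has invertible factors, so the presupposition is only needed on bridges.
\item The symmetry $w(b,a)=w(a,b)^{-1}$ is not implied by the cycle condition. It is automatic, however, whenever both $\bar X(a)^\perp$ and $\bar X(b)^\perp$ have full rank, because both edge lengths are defined through projections onto the same complement of $\mathrm{span}(\Delta_a^b)=\mathrm{span}(\Delta_b^a)$.
\end{itemize}
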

In other words, robust incentivizability requires that the deviation in question profile is in line with the deviation gain in the original payoffs, subject to factorization by a potential $\Gamma$. Again let's take the single-question $X$ as an example. Intuitively, the ratio of $\Gamma(b)$ against $\Gamma(a)$ captures the difference in the respective projection of $\bar{X}(b)$ and $\bar{X}(a)$ in the space orthogonal to $\Delta_a^b$. Then, this factorized difference of $\bar{X}(b)$ and $\bar{X}(a)$ serves as a residual in the span of $\Delta_a^b$. As a consequence of the proposition, for a cycle $C=(a_0,a_1,\cdots,a_{n-1},a_n=a_0)$, 
\begin{equation}\label{eq: zero sum}
    \sum_{i=0}^{n-2}(\lambda_{i,i+i}-\lambda_{n-1,0})\Delta_i^{i+1}=0\text{,}
\end{equation}
\noindent indicating the sum of these residual vectors from different spans of $\Delta$ must be null given that the orthogonal components already cancel out. Then, if the $\Delta$ vectors were ``sufficiently independent'', then the $\lambda$ differences would have to be all zero. In the next subsections, we will exploit this property for different graphs aiming at unifying $\lambda$ across edges $(a,b)$.

To establish the prerequisite in \cref{prop: gamma potential MQ}, we borrow the famous Kirchhoff's voltage law in physics for intuition. Kirchhoff's law states the following: the voltage drop $g$ between nodes in any electric circuit constitutes a scalar field if and only if the sum of voltage drop along every cycle is zero, in which case $g$ is the gradient for an (electric) potential $f$ unique up to an additive constant and one can obtain $f$ by solving $\operatorname{div}(\operatorname{grad} f)=\operatorname{div} g$. The lemma below generalizes the intuition of Kirchhoff's law and is a standard result in combinatorial graph theory (e.g., Lemma 5.3 of \citet{Zaslavsky1989BiasedGraphsI}) and lattice gauge theory (see, e.g., \citet{Rothe2012LGT}). 
\begin{table}[!htbp]
\caption{Three equivalent properties}

\begin{tabular}{@{}>{\centering\arraybackslash}m{0.31\textwidth}@{\hspace{0.035\textwidth}}
                >{\centering\arraybackslash}m{0.31\textwidth}@{\hspace{0.035\textwidth}}
                >{\centering\arraybackslash}m{0.31\textwidth}@{}}

\hline
\\
\begin{minipage}{\linewidth}\centering
\textbf{\footnotesize Zero sum length}\\

\begin{tikzpicture}[scale=1.0]
  \coordinate (A) at (0,0);
  \coordinate (B) at (1.8,0);
  \coordinate (C) at (0.9,1.45);

  \node[vertex, label={[vlabel]below:$A$}] (n1) at (A) {};
  \node[vertex, label={[vlabel]below:$B$}] (n2) at (B) {};
  \node[vertex, label={[vlabel]right:$C$}] (n3) at (C) {};

  \draw[edge,->] (n1) -- node[vlabel, below] {1} (n2);
  \draw[edge,->] (n2) -- node[vlabel, right] {-3} (n3);
  \draw[edge,->] (n3) -- node[vlabel, left]  {2} (n1);
\end{tikzpicture}\\[-0.35em]
{\scriptsize $1+(-3)+2=0$}
\end{minipage}

&
\begin{minipage}{\linewidth}\centering
\textbf{\footnotesize Path independence}\\
\begin{tikzpicture}[scale=1.0]
  \coordinate (A) at (0,0);
  \coordinate (B) at (1.8,0);
  \coordinate (C) at (0.9,1.45);

  \node[vertex, label={[vlabel]below:$A$}] (n1) at (A) {};
  \node[vertex, label={[vlabel]below:$B$}] (n2) at (B) {};
  \node[vertex, label={[vlabel]right:$C$}] (n3) at (C) {};

  \draw[edge, draw=gray] (n1) -- (n2) -- (n3) -- (n1);

  \draw[edge,->] (n1) -- node[vlabel, below] {1} (n2);
  \draw[edge,->] (n2) -- node[vlabel, right] {-3} (n3);
  \draw[edge,->] (n1) -- node[vlabel, left] {-2} (n3);
\end{tikzpicture}\\[-0.35em]
{\scriptsize $-2=1+(-3)$}
\end{minipage}

&
\begin{minipage}{\linewidth}\centering
\textbf{\footnotesize Potential existence}\\
\begin{tikzpicture}[scale=1.0]
  \coordinate (A) at (0,0);
  \coordinate (B) at (1.8,0);
  \coordinate (C) at (0.9,1.45);

  \node[vertex, label={[vlabel]below:$f(A)=0$}] (n1) at (A) {};
  \node[vertex, label={[vlabel]below:$f(B)=1$}] (n2) at (B) {};
  \node[vertex, label={[vlabel]right:$f(C)=-2$}] (n3) at (C) {};

  \draw[edge, draw=gray] (n1) -- (n2) -- (n3) -- (n1);

  \draw[edge] (n1) -- (n2);
  \draw[edge] (n2) -- (n3);
  \draw[edge] (n3) -- (n1);
\end{tikzpicture}\\[-0.35em]
{\scriptsize $g(A\to B)=f(B)-f(A)$}
\end{minipage}

\\
\hline
\end{tabular}
\end{table}

\begin{lemma}\label{lem: Kirchhoff MQ}
    The following two properties are equivalent:
\begin{enumerate}[itemsep=1pt,topsep=3pt,leftmargin=*]
    \item Every cycle in $G$ has length product $I_m$. 
    \item Existence of a matrix potential function: there exists a function $\Gamma:V\to\mathbb{R}^{m\times m}$ such that for every edge $(v,v')\in E$, $w(a,b)=\Gamma(b)\Gamma(a)^{-1}$. Moreover, $\Gamma$ is unique up to a right multiplication of an invertible matrix.
\end{enumerate}
\end{lemma}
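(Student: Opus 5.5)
The plan is the standard spanning-tree construction for a gain graph with gains in the group $GL_m(\mathbb{R})$. Two conventions are needed first, and I make them explicit. Edges are treated as oriented, with $w(b,a)=w(a,b)^{-1}$. Each edge length $w(a,b)$ is taken to be invertible; where $\bar X(a)^\perp$ lacks full rank, this is part of the calibration that the robustness perturbation is meant to supply. Without invertibility, neither the cycle condition nor the conclusion $w(a,b)=\Gamma(b)\Gamma(a)^{-1}$ makes sense. Finally, I work on each connected component of $G$ separately, so I assume $G$ is connected.

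\textbf{(2)$\Rightarrow$(1).} This direction is the telescoping computation. Take a cycle $(v_0,v_1,\dots,v_n=v_0)$. Its length product is
\[
w(v_{n-1},v_n)\cdots w(v_0,v_1)=\Gamma(v_n)\Gamma(v_{n-1})^{-1}\Gamma(v_{n-1})\cdots\Gamma(v_1)\Gamma(v_0)^{-1}=\Gamma(v_0)\Gamma(v_0)^{-1}=I_m .
\]

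\textbf{(1)$\Rightarrow$(2).} Fix a root $a_0$ and set $\Gamma(a_0)=I_m$.
\begin{enumerate}
\item For any other vertex $a$, pick a path $P=(a_0=v_0,\dots,v_n=a)$ and define $\Gamma(a)$ as the product length of $P$. This is invertible because it is a product of invertible matrices.
\item Well-definedness (path independence). Let $P,P'$ be two paths from $a_0$ to $a$. Concatenating $P$ with the reverse of $P'$ gives a closed walk, whose product is $\Gamma_{P'}^{-1}\Gamma_P$. By (1) this equals $I_m$.
\item Edge identity. For an edge $(a,b)$, extend a path to $a$ by the edge $(a,b)$. This gives $\Gamma(b)=w(a,b)\Gamma(a)$, that is, $w(a,b)=\Gamma(b)\Gamma(a)^{-1}$.
\end{enumerate}

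The main obstacle is step 2. Hypothesis (1) speaks of cycles, but a closed walk may repeat vertices and may traverse an edge back and forth. I would reduce closed walks to cycles by induction on walk length.
\begin{itemize}
\item If the walk traverses some edge forward and immediately backward, the two factors $w(b,a)w(a,b)=I_m$ cancel, and we delete both steps.
\item Otherwise, if the walk repeats a vertex, split it at the first repeated vertex. This exhibits a subwalk that is a simple cycle, whose product is $I_m$ by (1). Because the product is ordered, this subwalk appears as a contiguous block, conjugated by the product along the prefix. Since the block equals $I_m$, it can be removed, leaving a shorter closed walk.
\item The induction ends at the empty walk, whose product is $I_m$.
\end{itemize}
The noncommutativity is handled by this contiguity: we only ever delete contiguous blocks equal to $I_m$, and that is legitimate in any group.

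\textbf{Uniqueness.} Suppose $\Gamma$ and $\Gamma'$ both satisfy (2). Set $M(a)=\Gamma(a)^{-1}\Gamma'(a)$. On each edge, $\Gamma'(b)\Gamma'(a)^{-1}=\Gamma(b)\Gamma(a)^{-1}$ rearranges to $\Gamma(b)^{-1}\Gamma'(b)=\Gamma(a)^{-1}\Gamma'(a)$, so $M(b)=M(a)$. By connectedness $M$ is a single invertible matrix, and $\Gamma'=\Gamma M$. Conversely, right multiplication by any invertible matrix preserves $\Gamma(b)\Gamma(a)^{-1}$. On a disconnected graph, this uniqueness holds component by component.
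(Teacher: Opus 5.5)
Your proof is correct and follows the same route as the paper. The paper cites the standard gain-graph result (Lemma 5.3 of Zaslavsky), and in the proof of \cref{prop: gamma potential MQ} it builds $\Gamma$ exactly as you do: fix $\Gamma(a_0)=I_m$ and take ordered edge-length products along paths from $a_0$. Your explicit conventions (invertible edge lengths, $w(b,a)=w(a,b)^{-1}$, connectedness), the reduction of closed walks to simple cycles, and the uniqueness argument via $M(a)=\Gamma(a)^{-1}\Gamma'(a)$ supply details the paper leaves implicit.
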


The remaining subsections are organized as follows: the next subsection completes the proof of \cref{prop: gamma potential MQ}; \cref{subsec: 1d graph} builds on \cref{prop: gamma potential MQ} and first derives the necessity of joint alignment (\cref{thm: one block alignment MQ}) for the baseline graphs; \cref{subsec: 1b product} generalizes the result  (\cref{thm: product general MQ}) to product graphs; \cref{subsec: multiblock} further studies the more general graphs.

\subsection{Proof of \cref{prop: gamma potential MQ}: Cycle Basis and Potential}

We first introduce the mathematical tools in graph theory for our analysis. In the sequel, we consider finite undirected graphs, i.e. undirected graphs with a finite number of vertices and edges, unless otherwise specified. For a finite undirected graph $G=(V, E)$ with vertices $V$ and edges $E$, we fix an arbitrary orientation for each edge $e\in E$ and define the \textit{vertex-edge incidence matrix}
$$
A=\left[a_{v,e}\right]_{V \times E}, \quad a_{v,e}= \begin{cases}+1 & \text { if } e \text { enters } v, \\ -1 & \text { if } e \text { leaves } v, \\ 0 & \text { otherwise. }\end{cases}
$$
The equation $A x=0$ enforces inflows being equal to outflows at every vertex. Thus, $\operatorname{ker} A$ is the space spanned by all cycles. A basis of $\operatorname{ker} A$ therefore corresponds to a set of cycles that spans all cycles in the graph.

\begin{definition}[Cycle basis, MCB and elementary cycle]
Let $G=(V, E)$ be a graph with incidence matrix $A$. For a cycle $C\subset E$, define its incidence vector $x_C\in\mathbb{R}^{E}$ with $\left(x_C\right)_e=\mathbf{1}\{e \in C\}$. A \textbf{cycle basis} of $G$ is a set of cycles $\mathcal{C}=\left\{C_1, \ldots, C_k\right\}$ such that the vectors $\left\{x_{C_1}, \ldots, x_{C_k}\right\}$ form a linear basis of $\operatorname{ker} A$. 

For a cycle $C$, define its length $|C|$ as the number of arcs in it. Among all cycle bases, a \textbf{minimum cycle basis} (MCB) is one that minimizes the total length $\sum_{C_i\in \mathcal{C}}\left|C_i\right|$. For an MCB $\mathcal{C}$, $C$ is an \textbf{elementary cycle} if $C\in\mathcal{C}$.
\end{definition}

 A vertex $v$ is a cut-vertex if the graph $G-\{v\}$ is disconnected.  A block of $G$ is a maximal connected subgraph with no cut-vertices.\footnote{PS25 use the terms \emph{splitting action} and \emph{splitting collection} for the same graph-theoretic ideas: a splitting action is a cut‑vertex, and a splitting collection is the set nodes of a block.}
 \Cref{tab:AdjGraph} summarizes how edge density relates to block decompositions across standard graph families. At one end, the complete graph is maximally dense and consists of a single block; at the other, a tree is maximally sparse with each edge itself constituting a block. As the number of edges decreases from a complete graph, the graph becomes sparser and the number of blocks typically increases. By standard graph theoretical results (see \Cref{lemma:MCBdecompose} in \cref{subsec: multiblock}), any graph and its MCB can be decomposed into the edge-disjoint union of its blocks, with the graph’s MCB equal to the union of the MCB of those blocks. Hence, in what follows, we focus on the one-block case and later extend the analysis to multiple blocks by invoking \Cref{lemma:MCBdecompose}.

The assumption below is maintained throughout.
\setcounter{assumption}{0}

\begin{assumption}\label{assumption:cycle-length MQ}
There exists an MCB of $G$ such that the length of each of its elementary cycle is lower than $|\Theta|-(m-1)$.
\end{assumption}
For a single-question $X$, when the graph is complete, the assumption requires only $ m<|\Theta|-2$; it also holds for all the examples in PS25, where the action space is the same as the state space and there does not exist a grand elementary cycle involving all actions (states).\footnote{PS25 notice that ``the Adjacency Lemma has no bite when $|\Theta|= 2$, and limited bite when $|\Theta|= 3$''. See their Section 8.1 for a detailed discussion. On the other hand, notice that when $m=|\Theta|-1$ or is larger, then the experimenter can effectively elicit DM's entire belief as long as the questions are not redundant. Hence, the only case left out is when $m=|\Theta|-2$.}
 
\begin{lemma}\label{lemma: product 1}
    Under \cref{assumption:cycle-length MQ}, if $X$ is robustly incentivizable, then edge length can be defined such that the edge length product along all cycles in the MCB above is identity.
\end{lemma}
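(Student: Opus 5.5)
Fix an MCB $\mathcal C$ as in \cref{assumption:cycle-length MQ}. The plan is to use robustness to move $X$ to a generic perturbation $X_t$ where every edge length is uniquely defined, prove the identity-product property there, and then transfer it back to $X$ through the potential structure.

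\textbf{Step 1: reduce to full rank.} By \cref{def: robust inc MQ}, $X_t(a)=X(a)+\mu(a)t$ is incentivizable for every $t$, so \cref{lemma: adjacency MQ} applies to every $X_t$. Choose $t$ generic. Then for every edge $(a,b)$ the projection $\bar X_t(a)^\perp$ onto $(\Delta_a^b)^\perp$ has full rank $m$. This is possible because the rank condition fails only on a proper algebraic subset of $t$: $\mu(a)$ is invertible and $m<|\Theta|-1$. For such $t$ the edge lengths $w_t(a,b)$, defined by $\bar X_t(b)^\perp=w_t(a,b)\bar X_t(a)^\perp$, are unique and invertible, and $w_t(b,a)=w_t(a,b)^{-1}$. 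Projecting \cref{eq: adjacency lemma MQ} shows $\Gamma_{ab}=w_t(a,b)$.

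\textbf{Step 2: cycle product at generic $t$.} Take an elementary cycle $C=(a_0,\dots,a_n=a_0)\in\mathcal C$, so $n<|\Theta|-(m-1)$. Chain the adjacency equations around $C$:
\[
\bar X_t(a_0)=W_C\,\bar X_t(a_0)+\sum_{i}\nu_i\,\Delta_{a_i}^{a_{i+1}},
\]
where $W_C$ is the product of the $w_t$ along $C$ and each $\nu_i\in\mathbb R^{m}$ collects a $\lambda$ multiplied by partial products. This gives
\[
(I-W_C)\bar X_t(a_0)\in\mathbb R^m\otimes\operatorname{span}\{\Delta_{a_i}^{a_{i+1}}\}.
\]
The span has dimension at most $n-1$, since the differences telescope to zero around the cycle. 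Each row of $(I-W_C)\bar X_t(a_0)$ is a combination of the $m$ rows of $\bar X_t(a_0)$.

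The key claim is that for generic $t$ the $m$-dimensional row space of $\bar X_t(a_0)=\bar X(a_0)+\overline{\mu(a_0)t}$ meets $\operatorname{span}\{\Delta\}$ only in $0$. This holds because $m+(n-1)<|\Theta|-1=\dim\Delta$, which is exactly where the bound in \cref{assumption:cycle-length MQ} enters. Granting the claim, $(I-W_C)\bar X_t(a_0)=0$ with $\bar X_t(a_0)$ of full row rank, so $W_C=I$.

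\textbf{Step 3: pass back to $X$ (main obstacle).} The difficulty is that $t$ also moves the other vertices, and $w_t$ depends on $t$. Two routes are available.

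The first route keeps the perturbed problem. By Step 2 and \cref{lem: Kirchhoff MQ} on the graph spanned by $\mathcal C$, there is a potential $\Gamma_t$ and vectors $\lambda^t_{ab}$ with
\[
\Gamma_t(b)^{-1}\bar X_t(b)-\Gamma_t(a)^{-1}\bar X_t(a)=\lambda^t_{ab}\Delta_a^b .
\]
I would define $w(a,b)$ for $X$ as the limit of $w_t(a,b)$ along a sequence $t\to0$. This needs boundedness, which I would obtain by normalizing $\Gamma_t$ using the right-multiplication freedom in \cref{lem: Kirchhoff MQ} and a compactness argument. In the limit the cycle products remain identity by continuity, and wherever $\bar X(a)^\perp$ has full rank the limit agrees with the unique $w(a,b)$.

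The second route, which I would try first, exploits linearity in $t$. Take $t=s\tau$ and let $s\to\infty$. The leading term of the adjacency relation forces $\mu(b)\bar\tau^\perp=w\,\mu(a)\bar\tau^\perp$ for a dense set of $\tau$. This pins down $w_t(a,b)\to\mu(b)\mu(a)^{-1}$ modulo terms that cancel along cycles. One then sets $w(a,b)=\mu(b)\mu(a)^{-1}$, consistently with the robust-alignment example where $\mu=\Gamma$.

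I expect the hardest part to be making this limit/consistency step rigorous, namely showing the calibrated $w$ still satisfies \cref{eq: adjacency lemma MQ} for the unperturbed $X$.
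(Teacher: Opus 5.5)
Steps~1--2 reproduce the paper's argument. You perturb along a generic direction, use full rank of the projections onto $\Delta_C^\perp$, chain \cref{eq: adjacency lemma MQ} around an elementary cycle, and project to get $W_C=I$. One small correction: the count should read $m+(n-1)\le|\Theta|-1$; that is what \cref{assumption:cycle-length MQ} gives, and it is all you need. Your Route~1 is also exactly the paper's last step, which sets $w(a,b):=\lim_k w(a,b;\varepsilon_k)$ on degenerate edges and passes to the limit in the cycle identity.

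The gap is that existence of this limit is the whole content of the step, and the paper does not justify it either. The device you propose cannot justify it. Right-multiplying $\Gamma_t$ by a common invertible matrix leaves $w_t(a,b)=\Gamma_t(b)\Gamma_t(a)^{-1}$ unchanged, so no normalization of the potential bounds the edge lengths. Route~2, as written, stops exactly where you locate the difficulty: ``modulo terms that cancel along cycles'' is not an argument. So Step~3 is currently unproved.

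Route~2 is the right idea, and it closes with one exact identity, without limits and without Step~2 or the MCB. Fix an edge $(a,b)$, let $\perp$ denote projection onto $(\Delta_a^b)^\perp$, and set $R=\mu(b)\mu(a)^{-1}$. For each $t$, \cref{lemma: adjacency MQ} applied to $X_t$ gives some $\Gamma^t$ with $\bar X(b)^\perp+\mu(b)\bar t^\perp=\Gamma^t\bar X_t(a)^\perp$. Since $\mu(b)\bar t^\perp=R\big(\bar X_t(a)^\perp-\bar X(a)^\perp\big)$, this rearranges to
\[
(\Gamma^t-R)\,\bar X_t(a)^\perp=\bar X(b)^\perp-R\,\bar X(a)^\perp .
\]
The right side does not depend on $t$, while its rows lie in the row space of $\bar X_t(a)^\perp$ for every $t$. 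There are two ways to conclude.
\begin{itemize}
\item Choosing $t=-\mu(a)^{-1}\bar X(a)^\perp$ makes $\bar X_t(a)^\perp=0$. Under the definition as stated (all $t$), \cref{assumption:cycle-length MQ} is not even needed for this.
\item If you only want to use small $t$, note that $\bar X_t(a)^\perp$ then fills an open set of $m$-row matrices in a space of dimension $|\Theta|-2>m$. This inequality comes from cycle length $\ge 3$ in \cref{assumption:cycle-length MQ}. The row spaces of such matrices have trivial common intersection.
\end{itemize}
Either way $\bar X(b)^\perp=R\,\bar X(a)^\perp$, i.e.\ $\bar X(b)=\lambda_{ab}\Delta_a^b+\mu(b)\mu(a)^{-1}\bar X(a)$ holds for the unperturbed $X$. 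So $w(a,b):=\mu(b)\mu(a)^{-1}$ is a legitimate, invertible edge length on every edge. It coincides with the uniquely defined one whenever $\bar X(a)^\perp$ has full rank, and every cycle product telescopes to $I$. The same identity shows $w_t(a,b)=R$ exactly for generic $t$, which is precisely the boundedness Route~1 was missing.
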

The proof is deferred to \cref{appendix: proof}
\begin{proof}[Proof of \Cref{prop: gamma potential MQ}]
By definition of MCB, when all cycles in an MCB have identity edge length product, so does any cycle in the graph. By \cref{lem: Kirchhoff MQ}, there exists a matrix potential $\Gamma$ on $G$: fix an action $a_0$ with $\Gamma(a_0)=I_{m}$; for any $a$ and any path $P=(a_0, a_1, \dots, a_{n-1}, a_n=a)$ connecting it with $a_0$, set $\Gamma(a)=\prod_{k=1}^{n}w(a_{k-1},a_k)\Gamma(a_0)$, so that for every adjacent $a$ and $b$, $w(a,b)=\Gamma(b)\Gamma(a)^{-1}$. Rewriting \cref{eq: adjacency lemma MQ}, $\bar{X}(b)=\lambda_{ab} \Delta_a^b+\Gamma(b)\Gamma(a)^{-1}\bar{X}(a)$ $\Rightarrow   \Gamma(b)^{-1}\bar{X}(b)-\Gamma(a)^{-1}\bar{X}(a)=\Gamma(b)^{-1}\lambda_{ab}\Delta_a^b.$
Letting $\lambda_{ab}=\Gamma(b)^{-1}\lambda_{ab}$ yields \cref{eq: potential MQ}.
\end{proof}

\subsection{One-Block Graph}\label{subsec: 1d graph}

We first derive necessary conditions for robust incentivizability within in the minimal unit --- a block. 
The following assumption requires that the payoffs for action within any elementary cycle are differentiated enough in the sense that there is no colinearity across gains from deviation.
\begin{assumption}\label{assumption:1d-independence}
For an MCB of $G$ and any of its elementary cycles $C$, the family of vectors $\left\{\Delta_{a}^{b}:\left(a \rightarrow b\right) \in C\right\}$ has rank $\left|C\right|-1$.
\end{assumption}
Under the assumption, robust incentivizability implies that the question $X$ and the payoff $u$ must be aligned. 

Applying the proposition to one-block graphs, we can unify $\lambda$ across all cycles.

\begin{theorem}\label{thm: one block alignment MQ}
Under \Cref{assumption:1d-independence,assumption:cycle-length MQ}, for a one-block adjacency graph $G$, $X$ is robustly jointly incentivizable if and only if $X$ is jointly aligned with $u$.
\end{theorem}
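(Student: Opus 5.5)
The "if" direction is already available: by \cref{prop: sufficiency MQ}, a jointly aligned $X$ is robustly incentivizable, taking $\mu(a)=\Gamma(a)$ and applying \cref{lem: joint sufficiency} to each perturbed profile $X_t$. The work is in the "only if" direction. There I will chain \cref{lemma: product 1}, \cref{prop: gamma potential MQ} and \cref{lem: Kirchhoff MQ} to get an edge-wise alignment, and then use the block structure together with \cref{assumption:1d-independence} to make the edge coefficients $\lambda_{ab}$ a single vector $\lambda$.

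\textbf{Step 1 (local to potential).} Let $X$ be robustly incentivizable. Under \cref{assumption:cycle-length MQ}, \cref{lemma: product 1} lets us calibrate the edge lengths $w(a,b)$ so that every elementary cycle of the chosen MCB has length product $I_m$. Since the MCB spans all cycles, \cref{prop: gamma potential MQ} then gives:
\begin{itemize}
\item invertible potentials $\Gamma(a)$, and
\item vectors $\lambda_{ab}$ with $\Gamma(b)^{-1}\bar X(b)-\Gamma(a)^{-1}\bar X(a)=\lambda_{ab}\Delta_a^b$ on every edge.
\end{itemize}
Write $Z(a)=\Gamma(a)^{-1}\bar X(a)$, so that $Z(b)-Z(a)=\lambda_{ab}\Delta_a^b$ along edges. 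Orientation consistency is automatic: reversing an edge flips both sides, so $\lambda_{ba}=\lambda_{ab}$.

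\textbf{Step 2 (unify $\lambda$ on each elementary cycle).} Take an elementary cycle $C=(a_0,\dots,a_n=a_0)$ and sum the edge relations around it. This gives $\sum_{i}\lambda_{a_ia_{i+1}}\Delta_{a_i}^{a_{i+1}}=0$. We also have the telescoping identity $\sum_i\Delta_{a_i}^{a_{i+1}}=0$. Using it to eliminate the last edge yields \cref{eq: zero sum}:
\[
\sum_{i=0}^{n-2}\bigl(\lambda_{a_ia_{i+1}}-\lambda_{a_{n-1}a_0}\bigr)\Delta_{a_i}^{a_{i+1}}=0 .
\]
By \cref{assumption:1d-independence} the $n$ vectors $\Delta$ along $C$ have rank $n-1$. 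Their only linear relation is therefore the telescoping one, so any $n-1$ of them are linearly independent. Applying this row by row (each row of the display is a scalar relation among the $\Delta$'s), every difference $\lambda_{a_ia_{i+1}}-\lambda_{a_{n-1}a_0}$ must vanish. Hence $\lambda$ is constant on the edges of each elementary cycle.

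\textbf{Step 3 (unify $\lambda$ across the block).} I expect this step to be the main obstacle, because it is the only place the one-block hypothesis is used. Define two edges to be linked if they lie in a common elementary cycle, and take the transitive closure. I need every pair of edges to end up connected. First, in a $2$-connected graph any two edges lie on a common cycle (Menger/Whitney). That cycle is a sum of MCB cycles. Suppose the edges split into classes with no shared elementary cycle. Then each MCB cycle lies inside a single class, so every cycle of the graph does too. This contradicts the common cycle through two edges of different classes. (The degenerate case of a block that is a single edge is trivial.) So there is one $\lambda$ for the whole block.

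\textbf{Step 4 (reassemble).} With $\lambda_{ab}\equiv\lambda$, fix $a_0$ and set $\bar d=Z(a_0)-\lambda\bar u(a_0)$. Because the block is connected, summing along a path from $a_0$ gives $Z(a)=\lambda\bar u(a)+\bar d$ for every $a$. That is, $\bar X(a)=\Gamma(a)(\lambda\bar u(a)+\bar d)$. Undoing the projection $v\mapsto\bar v$ restores the action-dependent multiples of $\mathbf 1$, which are absorbed into $\kappa(a)\mathbf{1}$; a constant $d$ representing $\bar d$ can be chosen. This is joint alignment.
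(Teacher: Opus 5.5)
Your proposal is correct and is essentially the paper's own proof. \cref{lemma: product 1} and \cref{prop: gamma potential MQ} give the potential $\Gamma$ and edge coefficients $\lambda_{ab}$; \cref{assumption:1d-independence} applied to \cref{eq: zero sum} makes $\lambda$ constant on each elementary cycle; one-blockness spreads that constant over the whole MCB; and telescoping along paths yields joint alignment. Your Step 3 (chaining elementary cycles by the transitive closure of sharing an edge) is the connectivity of the cycle-intersection graph in \cref{lemma:2vc}, which the paper's footnote relies on, and it is more accurate than the paper's literal claim that any two elementary cycles share an edge.
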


\begin{proof}
We show necessity as sufficiency is established in \cref{prop: sufficiency MQ}. Consider an MCB of $G$ satisfying \cref{assumption:cycle-length MQ} and an elementary cycle $C$. Notice that the $\Delta_i^{i+1}$ vectors in \cref{eq: zero sum} are linearly independent by \cref{assumption:1d-independence}, so for any adjacent actions $a$ and $b$ in $C$, $\lambda_{ab}=\lambda_{C}$ for some constant $\lambda_C$. Because $G$ is a one-block graph, by definition any two elementary cycles share at least a common edge, so that any two elementary cycles share a common $\lambda$ at this edge.\footnote{See  \Cref{lemma:2vc} in \cref{appendix: maths} for more equivalent conditions for one-block graph.} Hence, the $\lambda$ is uniform within the entire MCB $\mathcal{C}$, so that for any pair of actions $a$ and $b$, there exists a $\lambda$ such that
\begin{equation*}
     \Gamma(b)^{-1}\bar{X}(b)-\Gamma(a)^{-1}\bar{X}(a)=\lambda(\bar{u}(b)-\bar{u}(a)) .
\end{equation*}
Hence, $\Gamma(a)^{-1}\bar{X}(a)-\lambda\bar{u}(a)$ is a constant matrix for all $a$. Similar to the one-dimensional case, it follows that there exist $d\in \mathbb{R}^{m\times|\Theta|}$ such that $\bar{X}(a)=\Gamma(a) \left( \lambda \bar{u}(a) + \bar{d} \right) $, so that $X$ is jointly aligned with $u$.
\end{proof}
The last step above is reminiscent of revenue equivalence in mechanism design where we start with the lowest type whose IR constraint is binding and accumulate the payoff difference along the path to a given type. It is well known that the payment rule corresponds to the node potential of the shortest path polyhedron; see \citet{vohra2011mechanism}.

PS25 show that under their assumptions (quoted below in \cref{cor: PS complete}), for a \textit{complete} adjacency graph with a single question $X$, individual alignment is necessary for single-question incentivizability. \cref{thm: one block alignment MQ} implies that in fact alignment fully characterizes robust incentivizability, and the assumption of completeness is overly demanding. For incomplete graphs, alignment are also necessary for incentivizability, provided that the graph remains one-block, i.e. without a cut-vertex.
\begin{corollary}[PS25 Theorem 2, necessity]\label{cor: PS complete}
    For a single-question $X$, suppose that (i) the adjacency graph is complete and $|A| \ge 4$, and (ii) for any four distinct actions $a, b_0, b_1, b_2$, the set of vectors $\{\Delta^{b_i}_a \}_{i=0,1,2}$ is linearly independent. Then, $X$ is robustly incentivizable if and only if it is aligned with $u$.
\end{corollary}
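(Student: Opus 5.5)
Sufficiency is already covered by \cref{prop: sufficiency MQ} (with $m=1$, joint alignment is exactly individual alignment), so only necessity needs an argument. The plan is to derive \cref{cor: PS complete} from \cref{thm: one block alignment MQ} by checking that hypotheses (i)--(ii) imply \cref{assumption:cycle-length MQ,assumption:1d-independence} for some MCB of the complete graph on $A$, and that the complete graph is one-block.

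\textbf{Step 1: the graph is one-block.} For $|A|\ge 4$, removing any vertex from $K_{|A|}$ leaves a complete graph on at least three vertices, which is connected. So there is no cut-vertex and $G$ is a single block.

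\textbf{Step 2: choose an MCB of triangles.} Fix a vertex $a$ and take the triangles $(a,b,c)$ for every edge $(b,c)$ not incident to $a$. These are $\binom{|A|-1}{2}$ cycles, which equals the cyclomatic number $|E|-|V|+1$. They are independent, since each contains its own private edge $(b,c)$. Every cycle has length at least $3$, so a basis made entirely of triangles is minimum.

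\textbf{Step 3: check \cref{assumption:1d-independence}.} For a triangle $(a,b,c)$, the edge vectors are $\Delta_a^b$, $\Delta_b^c=\Delta_a^c-\Delta_a^b$, and $\Delta_c^a=-\Delta_a^c$. They span the same space as $\{\Delta_a^b,\Delta_a^c\}$. Hypothesis (ii) applied to $a,b,c$ and any fourth action (which exists since $|A|\ge4$) makes $\{\Delta_a^b,\Delta_a^c\}$ linearly independent, so the rank is $2=|C|-1$.

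\textbf{Step 4: check \cref{assumption:cycle-length MQ}.} With $m=1$ the assumption requires $3<|\Theta|$. Hypothesis (ii) gives three linearly independent vectors in $\Delta$, which has dimension $|\Theta|-1$, so $|\Theta|\ge 4$.

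This is the main obstacle: the paper's remark after \cref{assumption:cycle-length MQ} suggests a stronger requirement, but under the literal inequality ``lower than $|\Theta|-(m-1)$'' the bound $|\Theta|\ge 4$ is exactly what is needed. If a stricter version were required, I would use the robustness perturbation directly to pick $p_{ab}\perp\Delta_a^b$ in general position. With Steps 1--4 in place, \cref{thm: one block alignment MQ} yields that robust incentivizability implies alignment.
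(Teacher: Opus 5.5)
Your proposal is correct and follows the paper's own route: it deduces the corollary from \cref{thm: one block alignment MQ} by noting that an MCB of the complete graph consists of triangles, that (ii) gives each triangle rank $2=|C|-1$, and that (ii) forces $|\Theta|\ge 4>3$, which is \cref{assumption:cycle-length MQ}. The ``obstacle'' you flag is not real, because the remark $m<|\Theta|-2$ after \cref{assumption:cycle-length MQ} reduces for $m=1$ to exactly $|\Theta|\ge 4$, so no fallback argument is needed.
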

To see \cref{assumption:cycle-length MQ,assumption:1d-independence} both hold under the ones above, first note that for a complete graph with $|A|\ge 4$, every elementary cycle in an MCB is a triangle;\footnote{See \Cref{lemma:completeMCB} in \cref{appendix: maths}.} thus, condition (ii) below implies that (1) the number of states must be at least $4$ which is larger than the length of elementary cycles, 3, giving \cref{assumption:cycle-length MQ}, and (2) for each elementary cycle $C=(a,b, c,a)$ the family $\{\Delta_a^{b},\Delta_b^{c},\Delta_c^{a}\}$ has rank $2=|C|-1$, giving \cref{assumption:1d-independence}. 

\medskip\noindent
\textbf{Remark.} We now review the idea of $m$-(robust) incentivizability in \cref{subsec: joint vs individual}. Consider a state-contingent task $(\Theta,A,u)$ with adjacency graph satisfying the two assumptions. \cref{thm: one block alignment MQ} now implies that for a question $Y$ to be $m$-(robust) incentivizable for some $m$ small enough for \cref{assumption:cycle-length MQ}, it is in fact both necessary and sufficient that $Y$ is weakly aligned with $u$. Hence, although \cref{prop: m-incen} and \cref{cor: m-decomposable} have provided useful recipes, if the experimenter wishes to incentivize truthful report of $r(a)=\mathbb{E}_pY(a,\theta)$ with a small $m$, the premise must hold first that $Y$ is weakly aligned $u$.

\section{More General Cases}\label{sec: general graph}

\subsection{One-Block Product Graph}\label{subsec: 1b product}

We now consider a special type of one-block graph beyond \cref{assumption:1d-independence}. A \textit{product problem} consists of multiple tasks $i\in I$, with the corresponding state and action spaces $\Theta=\times_i \Theta_i, A=\times_i A_i$ where $|\Theta_i|\geq2$. The payoff $u(a ; \theta)=\sum_i u_i\left(a_i, \theta_i\right)$ for $i\in\{1,2,\cdots, I\}$. Given independence between different tasks, two actions $a$ and $b$ are adjacent only if they differ in at most one task. The adjacency graph $G$ can thus be written as the Cartesian product of task adjacency graphs $G_i$. Notice that an MCB of $G$ consists of the following two categories of cycles: (1) 4-cycles $((a^i,a^j,a^{-ij}),(b^i,a^j,a^{-ij}),(b^i,b^j,a^{-ij}),(a^i,b^j,a^{-ij}))$ where neighboring actions differ alternatively in tasks $i$ and $j$, and (2) cycles $(a_0,a_1,a_2,\dots,a_{n-1},a_n=a_0)$ that only differ within one task $i$, which are themselves elementary cycles for task $i$.\footnote{For the special case of each $G_i$ being complete graphs, and a more general result, see \Cref{lemma:completeMCB,lemma:productMCB} in \cref{appendix: maths}.}

For product graphs, the first category of 4-cycles mechanically fails \cref{assumption:1d-independence} as the four payoff difference vectors can take only two values, $\Delta_{a^i}^{b^i}$ and $\Delta_{a^j}^{b^j}$. However, we can still obtain a more admissive, taskwise necessary alignment condition under the following weakening.
\begin{assumption}\label{Assumption: md independence}
For $G=\bigtimes_iG_i$, let $\mathcal{B}^4$ be the set of 4-cycle MCB defined above in the first category, and $\mathcal{B}$ be an MCB of $G_i$. Assume that (1) for each $C\in\mathcal{B}^4$, the set $\left\{\Delta_{a}^{b}:\left(a,b\right) \text{ are neighboring nodes in } C\right\}$ has rank 2, and (2) for each $i$ and $C\in\mathcal{B}_i$, the set $\left\{\Delta_{a_i}^{b_i}:\left(a_i, b_i\right) \text{ are neighboring nodes in } C\right\}$ has rank $\left|C\right|-1$.\footnote{Here is a unified perspective for two assumptions: assume that for any elementary cycle $C$ of $G$, define $S_C=\left\{\Delta_{a}^{b}:\left(a \rightarrow b\right) \in C\right\}$, $\operatorname{dim} \operatorname{ker} S_C = n_C$. Here $n_C$ refers to the number of tasks that cycle $C$ crosses and is also the number of "trivial dependence". Specifically, in assumption 1, all cycles exist in only one task, the sum of all $\Delta$ is 0 yielding one trivial dependence. So $\operatorname{dim} \operatorname{ker} S_C = 1 \Leftrightarrow \operatorname{rank} S_C=|C|-1$; in assumption 2, for those 4-cycles that span two tasks, $\Delta_a^b+\Delta_{c}^{d}=0$ and $\Delta_b^c+\Delta_d^a=0$. So $\operatorname{dim} \operatorname{ker} S_C = 2 \Leftrightarrow \operatorname{rank} S_C=2$. Intuitively, all elementary cycles are required to have their respective maximally possible rank.}
\end{assumption}

\begin{definition}
   $X$ is taskwise jointly aligned with $u$ if there exists a matrix $d\in \mathbb{R}^{m\times|\Theta|}$, a column vector $\lambda_i\in \mathbb{R}^m$ for each task $i$, and for each $a\in A$, an invertible matrix $\Gamma(a) \in \mathbb{R}^{m \times m} $ and a column vector $\kappa(a) \in \mathbb{R}^m$, such that $X(a)=\Gamma(a)\big(\sum_i\lambda_i u_i(a)+d\big)+\kappa(a)\mathbf{1}$ for all $a\in A$.
\end{definition}
\begin{theorem}\label{thm: product general MQ}
Under \Cref{assumption:cycle-length MQ,Assumption: md independence}, for a product graph $G=\bigtimes_iG_i$ with each $G_i$ being one-block, $X$ is robustly incentivizable if and only if $X$ is taskwise jointly aligned with $u$.
\end{theorem}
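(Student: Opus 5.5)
Sufficiency runs exactly as in \cref{prop: sufficiency MQ}. If $X(a)=\Gamma(a)\big(\sum_i\lambda_i u_i(a)+d\big)+\kappa(a)\mathbf{1}$, set $\tilde X(a)=\Gamma(a)^{-1}[X(a)-\kappa(a)\mathbf{1}]$. Each row $j$ of $\tilde X(a)$ is then $\sum_i\lambda_{i,j}u_i(a)+d_j$. Apply the BDM payment $V_0$ row by row, scoring against $\tilde X_j$, and add the task payoff $u(a,\theta)$ with a positive weight. Under truthful reports the scoring part for row $j$ has value $\frac12(\mathbb{E}_p\tilde X_j(a)-L)^2$. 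This need not be monotone in $\mathbb{E}_pu(a)$ when the $\lambda_{i,j}$ differ across $i$, so I would proceed as follows. Because $u$ is additively separable, I would pay the BDM/CSR-type rule with the action-independent part $d$ handled as in the $X_0=d$ case. For each task $i$ I would use a separate linear scoring term in the report so that only $u_i$ enters, which the product structure allows because the optimal $a_i$ maximizes $\mathbb{E}_pu_i$ separately. Robustness is obtained by taking $\mu(a)=\Gamma(a)$, which replaces $d$ by $d+t$.

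For necessity, I would first apply \cref{lemma: product 1}, which is available under \cref{assumption:cycle-length MQ}, and then \cref{prop: gamma potential MQ}. Together they give a potential $\Gamma(a)$ and vectors $\lambda_{ab}$ with $\Gamma(b)^{-1}\bar X(b)-\Gamma(a)^{-1}\bar X(a)=\lambda_{ab}\Delta_a^b$ on every edge. The remaining work is to make $\lambda_{ab}$ depend only on the task $i$ in which $a$ and $b$ differ. I would use the two categories of MCB cycles in turn, via the zero-sum identity \cref{eq: zero sum}.

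First, take a within-task elementary cycle $C\in\mathcal B_i$, with the other coordinates $a^{-i}$ fixed. Its difference vectors $\Delta_{a}^{b}$ equal $\bar u_i(b_i)-\bar u_i(a_i)$ embedded in $\mathbb{R}^\Theta$. Part (2) of \cref{Assumption: md independence} gives rank $|C|-1$, so, exactly as in \cref{thm: one block alignment MQ}, all $\lambda$ on $C$ are equal. Each $G_i$ is one-block, so its elementary cycles are chained through shared edges, and $\lambda$ is therefore constant on each copy $G_i\times\{a^{-i}\}$; call it $\lambda_i(a^{-i})$. If $G_i$ is a single edge, this copy is trivially constant.

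Second, take a 4-cycle in $\mathcal B^4$ with edges alternating between tasks $i$ and $j$. The sum around the cycle gives
\begin{equation*}
(\lambda_1-\lambda_3)\Delta_{a^i}^{b^i}+(\lambda_2-\lambda_4)\Delta_{a^j}^{b^j}=0,
\end{equation*}
where the opposite edges carry opposite $\Delta$ vectors. Part (1) of \cref{Assumption: md independence} says these two vectors are independent, so $\lambda_1=\lambda_3$ and $\lambda_2=\lambda_4$. Hence $\lambda_i(a^{-i})$ does not change when a single coordinate $a^j$ moves along an edge of $G_j$. Every $G_j$ is connected, so $\lambda_i(a^{-i})=\lambda_i$ is constant.

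Finally, fix a base action $a_0$ and sum the edge identities along a path that changes coordinates task by task. Each step contributes $\lambda_i(\bar u_i(b_i)-\bar u_i(a_i))$, which yields $\Gamma(a)^{-1}\bar X(a)-\sum_i\lambda_i\bar u_i(a)=\bar d$, a constant. Re-adding the means gives taskwise joint alignment, with $\kappa(a)$ absorbing the constant components.

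The main obstacle is the last step in the direction independent of the path. Path independence of the accumulated $\lambda$-sums has to come from the fact that the potential $\Gamma$ is globally consistent over every cycle by \cref{lemma: product 1}, and the MCB described for product graphs, namely the 4-cycles together with the within-task cycles, has to actually be a cycle basis satisfying \cref{assumption:cycle-length MQ}. I would cite \Cref{lemma:productMCB} for this. A secondary difficulty is the sufficiency construction when $\lambda_i$ differ across tasks. If the row-by-row BDM fails, I would instead score each summand $\lambda_{i,j}u_i$ separately, in the spirit of the CSR, so that no cross-task hedging arises.
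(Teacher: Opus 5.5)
\textbf{Necessity.} Your necessity argument is the paper's: a potential from \cref{lemma: product 1} and \cref{prop: gamma potential MQ}, constancy of $\lambda$ on each $G_i$-fiber via part (2) of \cref{Assumption: md independence} and the one-block property, cross squares via part (1), then connectivity of each $G_j$. Your pairing of opposite edges in the square is the correct bookkeeping.

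Two clarifications. First, the final step needs no path-independence check. The function $\Phi(a)=\Gamma(a)^{-1}\bar X(a)-\sum_i\lambda_i\bar u_i(a_i)$ is defined on vertices and takes equal values at the two ends of every edge, so connectivity of $G$ suffices. Second, you use copies of $G_i$-cycles in every fiber and squares $e\times f$ at every base point. The MCB of \cref{lemma:productMCB}, however, contains the non-triangular $G_i$-cycles in only one fiber and only a subset $Q$ of the squares. This is harmless, and worth stating, for three reasons. \Cref{eq: zero sum} holds for every cycle once $\Gamma$ exists. The $\Delta$'s of a fiber copy coincide with those of $G_i$. And in a square, $\Delta_{a^i}^{b^i}$ and $\Delta_{a^j}^{b^j}$ are nonzero centered functions of the distinct coordinates $\theta_i$ and $\theta_j$, hence automatically independent.

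\textbf{Sufficiency: the genuine gap.} You correctly observe that the row-wise BDM value $\sum_j\frac12(\mathbb{E}_p\tilde X_j(a)-L)^2$ need not be monotone in $\mathbb{E}_pu(a)$ when a row has $\lambda_{i,j}$ of mixed signs across tasks. The remedies you sketch do not work, though. The profile $X$ is fixed and each row of $\tilde X$ carries a single report, so you cannot score the summands $\lambda_{i,j}u_i$ separately; that would need extra reports, i.e.\ a different question profile. The $X_0=d$ device does not apply either, because the rows are action-dependent.

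The missing idea is to put a large weight on the task payoff. Take $V(r,a,\theta)=\sum_jV_0(\tilde r_j,a,\theta)+C\,u(a,\theta)$ with $\tilde r=\Gamma(a)^{-1}(r-\kappa(a))$. Reports remain uniquely truthful for each $a$. For fixed $p$, the truthful value is, up to a constant, $F_p(x_1(a_1),\dots,x_I(a_I))$, where $x_i(a_i)=\mathbb{E}_pu_i(a_i)$ and
\[
F_p(y)=\sum_j\tfrac12\Big(\sum_k\lambda_{k,j}y_k+\mathbb{E}_pd_j-L\Big)^2+C\sum_k y_k .
\]
The partial derivative of $F_p$ in $y_i$ is $\sum_j\lambda_{i,j}\big(\sum_k\lambda_{k,j}y_k+\mathbb{E}_pd_j-L\big)+C$. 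The first term is bounded uniformly in $p$ on the box of attainable values, so for $C$ large enough $F_p$ is strictly increasing in every coordinate.

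Now use the product structure: $A=\times_iA_i$ and each $x_i$ depends only on $a_i$. Maximizing a coordinatewise strictly increasing function over $A$ therefore selects exactly $\times_i\arg\max_{a_i}x_i(a_i)=\arg\max_a\mathbb{E}_pu(a)$, and $F_p$ is constant on that set. Robustness then follows with $\mu(a)=\Gamma(a)$ as you say, with $C$ allowed to depend on $t$.
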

The proof is deferred to \cref{appendix: proof}. Theorem 3 of PS25 studies conditions for single question incentivizability when each $G_i$ is complete, under assumptions that imply \cref{assumption:cycle-length MQ,Assumption: md independence}.\footnote{\cref{assumption:cycle-length MQ} holds as $|\Theta|\ge 8$ (at least 3 tasks with at least 2 states within each task), while an elementary cycle has length 3 or 4 in such a graph. Part (1) in \cref{Assumption: md independence} holds through the same argument in their footnote 11 and the beginning of their Appendix E.1, while part (2) is equivalent to their part (i) for complete $G_i$ as a within-task elementary cycle has length three. (Adjacency in the binary action case is trivially true.)} As with \cref{thm: one block alignment MQ}, we can strengthen incentivizability to its robust version as a natural corollary of the above \cref{thm: product general MQ}. Interestingly, they require a minimal number of 3 tasks for carrying out their proofs, while admitting that they ``\textit{do not know whether the conclusion of Theorem 3 holds}'' for the case of two tasks. Noticing that the case of one task is no other than \cref{thm: one block alignment MQ} and its corollary, our approach fills this gap by clarifying the role played by two types of elementary cycles. 

\subsection{Multiple-Block Graph}\label{subsec: multiblock}
In the previous sections, we have finished the analysis of the single block graph. As we delete even more edges from a graph, cut-vertices emerge and the graph eventually breaks into multiple blocks. The lemma below (with proof in \cref{appendix: maths}) provide a way of partitioning an MCB into different blocks.

\begin{lemma}[MCB blockwise decomposition]\label{lemma:MCBdecompose}
Let $G$ be a connected graph, with vertex-edge incidence matrix $A$ and its block partition $E=\bigsqcup_{i=1}^r B_z$. Let $A_z$ and $\mathcal{C}_z$ be the vertex-edge incidence matrix and an MCB of subgraph $B_z$ for $z=1,2,\dots,r$. 
. Then,
\begin{enumerate}[itemsep=1pt,topsep=3pt,leftmargin=*]
    \item  $\operatorname{ker} A=\bigoplus_{z=1}^r \operatorname{ker} A_z$.
    \item $\mathcal{C}=\bigcup_{z=1}^r \mathcal{C}_z$ is an MCB of $G$.
\end{enumerate}
\end{lemma}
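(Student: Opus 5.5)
The plan is to prove the two claims separately. The first is a purely linear-algebraic statement about incidence kernels. The second follows from it together with the fact that cycles never cross blocks.

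\textbf{Step 1: localization of cycles.} Every cycle of $G$ lies entirely in a single block. A cycle is 2-connected, so it cannot pass through a cut-vertex from one block into another and come back. Moreover, the blocks partition the edge set $E$. As a result, $\mathbb{R}^E=\bigoplus_z \mathbb{R}^{B_z}$. The incidence matrix $A_z$ can be regarded as $A$ restricted to the columns indexed by $B_z$, after padding its rows by zeros for vertices outside $B_z$. Under this identification, each $\ker A_z$ sits inside $\ker A$, and the subspaces $\ker A_z$ are independent because their supports are disjoint. This gives $\bigoplus_z \ker A_z\subseteq \ker A$.

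\textbf{Step 2: equality by dimension count.} For a connected graph, $\dim\ker A=|E|-|V|+1$, since the rank of the incidence matrix is $|V|-1$. Each block $B_z$ is connected, so $\dim\ker A_z=|B_z|-|V(B_z)|+1$. Summing over blocks, $\sum_z|B_z|=|E|$. It remains to show $\sum_z(|V(B_z)|-1)=|V|-1$. I would prove this from the block–cut tree, which is a tree because $G$ is connected. Count vertex incidences: each non-cut vertex lies in exactly one block, and a cut-vertex $v$ lies in $b(v)$ blocks. Since the block–cut tree has $r+c$ nodes and $r+c-1$ edges, where $c$ is the number of cut-vertices, we get $\sum_v (b(v)-1)=r-1$. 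Rearranging gives $\sum_z|V(B_z)|=|V|+r-1$, which is exactly the identity needed, and item 1 follows. Alternatively, I could argue directly: any $x\in\ker A$ decomposes as $x=\sum_z x|_{B_z}$, and one shows each restriction is a circulation using the tree structure, by peeling off leaf blocks inductively. I would keep the dimension argument as the main route. Its only real obstacle is stating the block–cut tree counting cleanly.

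\textbf{Step 3: minimality.} By Step 1, the union $\mathcal{C}=\bigcup_z\mathcal{C}_z$ is a basis of $\ker A$. For minimality, take any cycle basis $\mathcal{D}$ of $G$. By Step 1, each $D\in\mathcal{D}$ lies in a unique block, so $\mathcal{D}$ splits as $\mathcal{D}=\bigsqcup_z\mathcal{D}_z$ with $\mathcal{D}_z\subset\ker A_z$. Because $\mathcal{D}$ is linearly independent and spans the direct sum, each $\mathcal{D}_z$ must span $\ker A_z$. To see this, project onto the $z$-th summand: the projections of cycles from other blocks vanish, so $\mathcal{D}_z$ alone spans $\ker A_z$. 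Hence each $\mathcal{D}_z$ is a cycle basis of $B_z$, and $\sum_{D\in\mathcal{D}_z}|D|\ge\sum_{C\in\mathcal{C}_z}|C|$ by minimality of $\mathcal{C}_z$. Summing over $z$ gives that $\mathcal{C}$ has total length no larger than that of $\mathcal{D}$, so $\mathcal{C}$ is an MCB.

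A caveat on the field: if cycle bases are taken over $\mathrm{GF}(2)$ rather than $\mathbb{R}$, the same argument goes through verbatim.
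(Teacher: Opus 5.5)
Your proof is correct. For item 2 you follow the paper's route: split an arbitrary cycle basis by blocks and compare lengths blockwise. For item 1 you take a different path.

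The paper proves $\ker A\subseteq\sum_z\ker A_z$ by writing an arbitrary $x\in\ker A$ as a linear combination of cycle vectors, citing that cycles span the cycle space. It then groups those cycles by the unique block containing each.

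You instead prove only the easy inclusion $\bigoplus_z\ker A_z\subseteq\ker A$ and close the gap with a dimension count. This uses $\operatorname{rank}A=|V|-1$ for a connected graph together with $\sum_z(|V(B_z)|-1)=|V|-1$. You extract the latter correctly from the block--cut tree: it has $r+c$ nodes and $\sum_{v\text{ cut}}b(v)$ edges, which gives $\sum_v(b(v)-1)=r-1$.

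So you trade the fact that circulations decompose into cycles for the rank formula of incidence matrices and the tree structure of the block--cut graph. This makes item 1 pure linear algebra plus counting. The paper's version is shorter given its citations.

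Your Step 3 is also more explicit than the paper's. Projecting onto the $z$-th coordinate summand is exactly what shows that the blockwise piece $\mathcal{D}_z$ of an arbitrary cycle basis is itself a cycle basis of $B_z$; the paper only asserts this as an equivalence. Your direct comparison $\sum_{D\in\mathcal{D}_z}|D|\ge\sum_{C\in\mathcal{C}_z}|C|$ also avoids the paper's slip that replacing $\mathcal{C}_i$ by an MCB ``strictly decreases'' the total length.

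One small citation point: in Step 3, the fact that $\bigcup_z\mathcal{C}_z$ is a basis of $\ker A$ uses the equality from Step 2, not just Step 1.

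Your remark on the field is apt. The paper's $0/1$ vectors $x_C$ lie in $\ker A$ for the oriented incidence matrix only after edges are signed by traversal direction, or when working over $\mathrm{GF}(2)$.
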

By \Cref{lemma:MCBdecompose}, an MCB $\mathcal{C}$ of $G$ splits as a disjoint union of the MCB of the blocks. Therefore, the consistency of $\lambda$ coefficient in the likes of \cref{prop: gamma potential MQ} within a block does not pass onto another through the cut vertices. Although we can no longer reduce all $\lambda$'s to a single value globally, we can still calibrate them blockwise: there exists a blockwise common $\lambda$ within each block. This directly leads to the blockwise alignment condition in \Cref{cor: alignment2} below, which is necessary and sufficient for incentivizability. 

\begin{definition}
    $X$ is blockwise jointly aligned with $u$ if for each block $B_z$ there exists a matrix $d_z\in \mathbb{R}^{m\times|\Theta|}$ and a column vector $\lambda_z\in \mathbb{R}^m$, and for each action $a\in A$, an invertible matrix $\Gamma(a) \in \mathbb{R}^{m \times m} $ and a column vector $\kappa(a) \in \mathbb{R}^m$, such that ${X}(a)=\Gamma(a) \left( \lambda_z{u}(a) + {d}_z \right)+\kappa(a)\mathbf{1} $ when $a$ is in block $B_z$.
\end{definition}

\begin{proposition}\label{cor: alignment2}
Under \Cref{assumption:1d-independence}, for an adjacency graph $G$ with block partition $G=\bigcup_{z=1}^Z B_z$, $X$ is robustly incentivizable if and only if it is blockwise jointly aligned with $u$.
\end{proposition}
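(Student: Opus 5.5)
Both directions can be organized through the adjacency graph, its cut-vertices and its blocks, reusing the one-block machinery of \cref{thm: one block alignment MQ} and \cref{lemma:MCBdecompose}.

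\textbf{Sufficiency.} Suppose $X(a)=\Gamma(a)(\lambda_z u(a)+d_z)+\kappa(a)\mathbf{1}$ on each block $B_z$. The difficulty is that a cut-vertex $c$ lies in several blocks. Consistency at $c$ then requires $\Gamma(c)(\lambda_z u(c)+d_z)=\Gamma(c)(\lambda_{z'}u(c)+d_{z'})$ modulo constants.
\begin{enumerate}
\item I would rewrite the condition as a local BDM-type construction on each block. In the transformed coordinates $\tilde X(a)=\Gamma(a)^{-1}[X(a)-\kappa(a)\mathbf{1}]$, each row has the form $\lambda_{z,j}u(a)+d_{z,j}$ on block $z$.
\item The payment $V_0$ from \cref{lem: joint sufficiency} yields value $\tfrac12(\lambda_{z,j}\mathbb{E}_p u(a)+\mathbb{E}_p d_{z,j}-L)^2$. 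This is not globally monotone in $\mathbb{E}_p u(a)$, so I would not use it directly.
\item Instead I would follow PS25's splitting-action argument. Adjacent pairs lie within a single block, so local indifference constraints only bind inside blocks. Starting from a root block, I would traverse the block–cut tree and add action-specific affine adjustments (constants depending on $a$, plus a common scaling) so that the aggregate value equals a monotone transform of $\mathbb{E}_pu(a)$ plus a report-independent term.
\item Robustness then follows by taking $\mu(a)=\Gamma(a)$, exactly as in \cref{prop: sufficiency MQ}.
\end{enumerate}
The step I expect to be hardest is gluing across cut-vertices without creating distortion between non-adjacent actions in different blocks. My plan is to argue that, since $p\mapsto$ optimal action crosses blocks only through the cut-vertex, a block-by-block convexity (envelope) argument suffices. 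If that fails, I would invoke PS25's sufficiency construction for splitting collections, generalized row-by-row.

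\textbf{Necessity.} Assume $X$ is robustly incentivizable.
\begin{enumerate}
\item By \cref{lemma: product 1} (which uses \cref{assumption:cycle-length MQ}, implicitly maintained), edge lengths can be chosen with identity product on an MCB. By \cref{lemma:MCBdecompose}, that MCB is the union of MCBs of the blocks, so every cycle of $G$ has identity product.
\item \cref{lem: Kirchhoff MQ} and \cref{prop: gamma potential MQ} then give a global potential $\Gamma(a)$ with $\Gamma(b)^{-1}\bar X(b)-\Gamma(a)^{-1}\bar X(a)=\lambda_{ab}\Delta_a^b$ on every edge.
\item Within each block $B_z$, I would repeat the argument of \cref{thm: one block alignment MQ}. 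By \cref{assumption:1d-independence}, the identity \cref{eq: zero sum} forces $\lambda_{ab}$ to be constant on each elementary cycle. Elementary cycles of the same block are chained by shared edges, so $\lambda_{ab}\equiv\lambda_z$ on $B_z$.
\item A bridge block (a single edge) needs no argument: its $\lambda_{ab}$ is trivially the block constant.
\item Fixing a base action $a_z\in B_z$ and summing along paths inside $B_z$ gives $\Gamma(a)^{-1}\bar X(a)-\lambda_z\bar u(a)=\bar d_z$ for all $a\in B_z$. Restoring means via $\kappa(a)$ then yields blockwise joint alignment.
\end{enumerate}
A single global $\Gamma$ serves all blocks, including at cut-vertices, because the potential was built once on the whole connected graph. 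This matches the definition, where $\Gamma(a)$ and $\kappa(a)$ depend only on $a$ while $(\lambda_z,d_z)$ depend on the block.
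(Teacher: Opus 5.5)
Your proposal takes essentially the paper's route. For necessity, you run the one-block argument of \cref{thm: one block alignment MQ} inside each block, using \cref{lemma:MCBdecompose} and the potential from \cref{prop: gamma potential MQ}, and you rightly note that \cref{assumption:cycle-length MQ} must still hold. For sufficiency, you use PS25's splitting-collection construction: BDM mechanisms within each block, calibrated sequentially at cut-vertices, with robustness via $\mu(a)=\Gamma(a)$.

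One correction to your first gluing plan. The calibration added to a block should be a state-dependent bonus $\ell_z(\theta)$ common to every action in that block, together with block-specific BDM anchors $L_z$. Constants plus a common scaling cannot reconcile the two blockwise BDM values at a cut-vertex, since they differ by a term linear in beliefs. Action-specific constants would also shift the indifference facets inside a block.
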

The necessity follows from a straightforward restriction of the arguments earlier in \cref{subsec: 1d graph} into each block. For sufficiency, we can follow the proof of PS25 by first constructing BDM mechanisms within each block, and then sequentially calibrating the value of BDM payoffs $V_z$ from each block $z$ at each common cut-vertex to obtain a uniform $V$ on $G$.

Similarly, we can extend \cref{thm: product general MQ} to product graphs with multiple blocks in each task.
\begin{definition}
    $X$ is task-block-wise jointly aligned with $u$ if for each task $i$ and block $z_i\in \{1, 2,\cdots,Z_i\}$, there exists $\tau_i\in \mathbb{R}$,$\lambda_{z_i}\in \mathbb{R}^{m\times\Theta}$ and $d_{z_i}\in \mathbb{R}^\Theta$ and for each $a\in A$, an invertible matrix $\Gamma(a) \in \mathbb{R}^{m \times m} $  and a column vector $\kappa(a) \in \mathbb{R}^m$, such that $\bar{X}(a)=\Gamma(a) \left( \sum_i\tau_i\lambda_{z_i}\bar{u}_i(a) + \bar{d}_{z_i} \right) $ for all $a\in A$.
\end{definition}
\begin{proposition}\label{cor:task-block-wise aligned}
    Under \Cref{Assumption: md independence}, for a product graph $G=\bigtimes_iG_i$ with block partitions $G_i=\bigcup_{{z_i}=1}^{Z_i} B_{z_i}$, $X$ is robustly incentivizable if and only if it is task-block-wise jointly aligned with $u$.
\end{proposition}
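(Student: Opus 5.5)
We prove the two directions separately, mirroring \cref{cor: alignment2} while inserting the taskwise argument of \cref{thm: product general MQ} inside each task. Take the equation in the definition of task-block-wise alignment with the shift $\bar d_{z_i}$ read as a single matrix $\bar d$ (constant across the component blocks), as in the taskwise definition.

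\textbf{Sufficiency.} Assume $X$ is task-block-wise jointly aligned. For robustness, take $\mu(a)=\Gamma(a)$, exactly as in \cref{prop: sufficiency MQ}. Perturbing by $t$ only replaces $d$ by $d+t$, so the perturbed profile is again task-block-wise aligned. It therefore suffices to show that aligned profiles are incentivizable.

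Fix a profile of blocks $z=(z_i)_i$. On the corresponding product of blocks $\prod_i B_{z_i}$, the transformed questions $\tilde X(a)=\Gamma(a)^{-1}[X(a)-\kappa(a)\mathbf 1]$ take the form $\sum_i\tau_i\lambda_{z_i}u_i(a_i)+d$. This is a taskwise-aligned profile, and on a product of one-block graphs it is handled by the sufficiency half of \cref{thm: product general MQ}. Concretely, we use a BDM mechanism applied row by row. Each row is affine in $\sum_i c_i u_i$, so we add a small multiple of $u$, scaled to dominate, to restore the correct ranking of actions.

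The blocks must then be glued. Within task $i$, two blocks $B_{z_i}$ and $B_{z_i'}$ meet only at a cut-vertex $c_i$. As in PS25 and \cref{cor: alignment2}, we adjust each block's mechanism by an action-independent additive constant and a positive rescaling so that the values at every shared cut-vertex agree. The block tree of each $G_i$ is a tree, so these calibrations can be propagated from a root block without conflict. The product structure lets us do this one task at a time, holding $a_{-i}$ fixed. Finally, adjacency occurs only within a task and within a block, so the glued $V$ introduces no profitable deviation across adjacent pairs. Optimality across nonadjacent pairs then follows from the standard argument that the glued value is a monotone transform of $\mathbb E_p u$ along adjacency paths.

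\textbf{Necessity.} Assume $X$ is robustly incentivizable. By \cref{lemma: product 1} and \cref{prop: gamma potential MQ}, applied to an MCB of $G$, we obtain a potential $\Gamma$ and edge coefficients $\lambda_{ab}$ satisfying
\[
\Gamma(b)^{-1}\bar X(b)-\Gamma(a)^{-1}\bar X(a)=\lambda_{ab}\Delta_a^b .
\]
By a product analogue of \cref{lemma:MCBdecompose}, the MCB of $G$ consists of three pieces: the within-task elementary cycles of each block $B_{z_i}$, the cross-task 4-cycles, and nothing else. Cut-vertices of $G_i$ give rise to no cycles.

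\emph{Within blocks.} Part (2) of \Cref{Assumption: md independence} applied to \cref{eq: zero sum} shows that $\lambda_{ab}$ is constant along each elementary cycle of $B_{z_i}$. Since a block is 2-connected, $\lambda_{ab}$ is then constant on all of $B_{z_i}$, for fixed $a_{-i}$.

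\emph{Across tasks.} Take a cross-task 4-cycle. Its two payoff directions $\Delta^{b^i}_{a^i}$ and $\Delta^{b^j}_{a^j}$ are independent by part (1) of the assumption. Hence opposite edges carry equal $\lambda$, that is, the coefficient of a task-$i$ edge does not depend on the task-$j$ coordinate. So $\lambda$ on task-$i$ edges depends only on the block $z_i$, and we write it $\Lambda_{z_i}$.

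\emph{Integration.} Integrate along a path from a base action. Because $\Delta$ splits by task, $\Gamma(a)^{-1}\bar X(a)-\sum_i\Lambda_{z_i(a_i)}\bar u_i(a_i)$ is constant on each block. At a cut-vertex shared by two blocks, $\bar u_i(c_i)$ appears with both $\Lambda_{z_i}$ and $\Lambda_{z_i'}$. To keep a single global $d$, we absorb the resulting difference into a block-dependent constant. This is where the scalar $\tau_i$ and the blockwise $\lambda_{z_i}$ of the definition enter: we set $\Lambda_{z_i}=\tau_i\lambda_{z_i}$ with a normalization fixed at a root block.

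\textbf{Main obstacle.} The main obstacle is the cut-vertex bookkeeping. The cross-task 4-cycles tie a task-$i$ edge to every task-$j$ edge, including edges in different blocks of $G_j$. I would first verify that part (1) of the assumption still yields coefficient independence from $a_{-i}$ even when $a_j$ crosses a cut-vertex of $G_j$. I would then check that the leftover block constants can be absorbed into $\kappa(a)$ and $\Gamma(a)$; if they cannot, I would weaken the target to a block-dependent $d_{z_i}$, which the definition seems to allow.
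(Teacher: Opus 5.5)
\textbf{Where the proof breaks.} Your outline follows the paper's route. For necessity, you take a global potential from \cref{lemma: product 1} and \cref{prop: gamma potential MQ}. Cross-task squares with part (1) of \cref{Assumption: md independence} remove the dependence on $a_{-i}$, and part (2) plus 2-connectivity give constancy on each $B_{z_i}$. For sufficiency, you use blockwise BDM mechanisms glued at cut vertices. Your worry about squares whose $G_j$-edge meets a cut vertex of $G_j$ is unfounded: that argument is edge-by-edge and uses only connectivity of $G_j$.

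The gap is the integration step, together with the reading of the definition you fix at the outset. What the integration actually delivers is as follows. Choose $\bar d_{i,z_i}$ recursively along the block--cut tree of $G_i$, which is acyclic, so that $\Psi_i(a_i):=\Lambda_{z_i}\bar u_i(a_i)+\bar d_{i,z_i}$ for $a_i\in B_{z_i}$ is well defined. This means $\Lambda_{z_i}\bar u_i(c_i)+\bar d_{i,z_i}=\Lambda_{z_i'}\bar u_i(c_i)+\bar d_{i,z_i'}$ at every cut vertex $c_i$ joining $B_{z_i}$ and $B_{z_i'}$. Every edge of $G$ changes one coordinate, and $\Gamma(b)^{-1}\bar X(b)-\Gamma(a)^{-1}\bar X(a)=\Psi_i(b_i)-\Psi_i(a_i)$ on task-$i$ edges. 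Hence $\Gamma(a)^{-1}\bar X(a)-\sum_i\Psi_i(a_i)$ is constant on $G$. The shift is therefore the block-dependent $\sum_i\bar d_{i,z_i(a_i)}$. This is how the paper's $\bar d_{z_i}$ must be read, and $\tau_i$ plays no role.

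\textbf{Why a single global shift cannot work.} Under your reading with one global $\bar d$, the necessity direction is false, and the absorption you hope for is impossible. The term $\kappa(a)\mathbf 1$ is annihilated by the bar, and $\tau_i$ only rescales $\Lambda_{z_i}$. Moreover, $\Gamma(a)$ is essentially pinned down. When the projections $\bar X(a)^\perp$ have full rank, the edge lengths are unique, so by \cref{lem: Kirchhoff MQ} any other admissible potential is $\Gamma(a)S$ for a fixed invertible $S$. Write $S^{-1}\Gamma(a)^{-1}\bar X(a)=\sum_i\Lambda'_{z_i}\bar u_i(a_i)+\bar d$ and difference across task-$i$ edges. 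This gives $\Lambda'_{z_i}=S^{-1}\Lambda_{z_i}$, so $\bar d_{i,z_i}$ cannot depend on $z_i$. The cut-vertex identity then forces $(\Lambda_{z_i}-\Lambda_{z_i'})\bar u_i(c_i)=0$.

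So a global $\bar d$ makes adjacent blocks share their coefficient whenever $\bar u_i(c_i)\neq 0$. That collapses the claim to the taskwise alignment of \cref{thm: product general MQ}, which is stronger than anything your derivation (or the sufficiency half) supports once some $G_i$ has several blocks.

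\textbf{Consequence for sufficiency.} Your sufficiency sketch must be run with the block-dependent shifts. On a fixed product of blocks they are constant, so the BDM step goes through unchanged. The cut-vertex identity is exactly what guarantees that both blocks induce the same transformed statistic at $c_i$, which is what the gluing needs.
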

\begin{example}
    The following is adapted from Examples 2 and 3 in PS25. Consider a variation of MCQ task with actions $A=\left\{a_1, a_2, o, b_1, b_2\right\}$ and states $\Theta=\left\{a_1, a_2, b_1, b_2\right\}$. The payoff $u(a,\theta)$ in  \Cref{tab:payoff-matrix} below represents a coarsened ``matching-the-state'' with a safe option $o$.
\begin{table}[h]
\centering
\caption{Payoff matrix for $u(a,\theta)$}
\label{tab:payoff-matrix}
\begin{tabular}{lcccc}
\hline
\textbf{action \textbackslash{} state} & $a_1$ & $a_2$ & $b_1$ & $b_2$ \\
\hline
$a_1$ & 1   & 0.6 & 0   & 0   \\
$a_2$ & 0.6 & 1   & 0   & 0   \\
$o$   & 0.6 & 0.6 & 0.6 & 0.6 \\
$b_1$ & 0   & 0   & 1   & 0.6 \\
$b_2$ & 0   & 0   & 0.6 & 1   \\
\hline
\end{tabular}
\end{table}

Notice that two actions $a_i$ and $b_j$ can never be co-optimal as any such belief would make the safe option $o$ the unique maximizer. Thus, the adjacency graph is exactly (C) in \Cref{tab:AdjGraph} with two triangle blocks $\left(a_1, a_2, o\right)$ and  $\left(b_1, b_2, o\right)$ sharing the safe action $o$. Though the adjacency graph falls out of the scope of the PS25's assumption of cycle richness, our result can still apply. 
\end{example}
\subsection{Comparison with PS25}\label{sec: PS25 comparison}

On the necessity side, the main technical difference between our method and that of PS25 lies in how to deal with cycles with degeneracy, namely that some of its neighboring nodes $(a_{i-1},a_i)$ having colinear $\big(\bar{X}(a_{i-1}),\bar{X}(a_i)\big)$. PS25 address the issue of degeneracy by building linear independence assumptions on the richness of cycles, which in turn rule out corner cases such as a product graph with three actions in each task and/or exactly two tasks.We instead remove colinearity through perturbing $\bar X$ globally to a non-colinear one (\cref{lemma: product 1}).\footnote{One observation is that in most adjacency graphs including the graphs in Theorems 2-3 of PS25, elementary cycles do not have a full length $|\Theta|$ (\cref{assumption:cycle-length MQ}), which renders the perturbations feasible.} This approach yields a unified treatment of all one-block graphs including both Theorems 2-3 of PS25 via varying the MCB, while replaces the seemingly disconnected assumptions there with a unified \cref{assumption:cycle-length MQ} on the maximal length of elementary cycles. Moreover, this approach requires weaker linear independence assumptions (which accommodate the corner cases of PS25 above) and lends itself to an immediate generalization to multiple questions and multiple blocks.
\section{Related Experimental Literature}\label{sec: discussion}
\begin{table}[!htbp]
\caption{Literature}
\begin{tabular}{c|c|c}
  \FCx{}        & \FCc{{\textbf{Single belief elicitation question }}}   & \FCc{{\textbf{Multiple belief elicitation question}}}   \\ \hline
  \FCr{{\textbf{Complete graph}}}   & \FC{\citet{buser2014gender,burks2013overconfidence,niederle2007women,hu2024confidence}} & \FC{\color{blue} \citet{exley2022gender,exley2024gender,coffman2024stereotypes,mobius2022managing,costa2008stated}} \\ \hline
  \FCr{{\textbf{Incomplete graph}}}   & \FC{ \color{blue} \citet{enke2023cognitive,baldiga2014gender}} & \FC{\color{blue} \citet{enke2023confidence,nunnari2025audi}} \\ 
\end{tabular}
\end{table}
Beyond the single-question studied by PS25, prominent experimental designs in prior work repeatedly fall into three classes. First, many MCQ-based studies elicit multiple belief statistics about performance—such as the expected score, categorical self-assessments, or relative rank—in order to study confidence and self-promotion (\citet{exley2022gender,exley2024gender,coffman2024stereotypes}) and to track confidence dynamics over time and feedback (\citet{mobius2022managing}) or in strategic settings (\citet{costa2008stated}). Second, a growing set of experiments introduces safe actions like abstain, opt-out and/or coarser choices. This induces an incomplete action structure and can make hedging incentives salient even with a single elicited belief (\citet{enke2023cognitive,baldiga2014gender}). Third, some influential designs combine both features by pairing safe-option environments with multiple belief questions—for example to study selection and aggregate bias (\citet{enke2023confidence}) or selective exposure to information (\citet{nunnari2025audi}). These widely used settings motivate a nondistortionary elicitation characterization that accommodates multiple questions and general adjacency structures. We discuss in detail several typical examples in the following.

\medskip
\textbf{Multiple belief elicitation questions.} In \citet{serra2021mistakes}, participants watch videos and decide for each whether the speaker is telling the truth or lying. One randomly chosen video is paid for correctness; bonus payments also depend on the accuracy of the belief. For each video $i$, the true state and actions $\theta_i,\;a_i \in\{ truth,\; lie \}$. There are two aggregate belief questions: (1) after every group of five videos, ``How many of the five guesses you just made are correct?'' After all 20 videos, ``Compared with previous participants, how well do you think you did?''

In \citet{exley2022gender,exley2024gender}, participant take a multi‑task ASVAB test, where each task is an MCQ. After the tasks, there are some aggregate belief questions including (1) ``How many questions do you think you answered correctly?'', (2) ``Whether your performance would be classified as ``poor'' on the test?'', and (3) ``What is the percent chance of scoring at least 7 out of 10?''

\medskip
\noindent\textbf{Multiple tasks of trees.} \citet{enke2023confidence} offers multi‑dimensional numeric tasks, such as correlation neglect (CN), balls and urns Bayesian updating (BU probability), regression to the mean (RM), and exponential growth bias (EGB), that fit in our tree framework. Each task asks about a point estimate and subjects are paid with standard strictly proper scoring rules. Formally, for the reported probability estimation $a $ about the posterior mean of a random variable $\Theta$, payoff $u$ takes form of the quadratic score: $u(a, \theta)=-(\theta-a)^2$. With belief $p$, the expected payoff $\mathbb{E}_p[u(a, \theta)]=-\operatorname{Var}_p(\theta)-\left(\mathbb{E}_p[\theta]-a\right)^2$ is strictly concave in $a$. When the subjects are asked to choose from a discrete grid of available estimates $A= \left\{0\leq a_1<\cdots<a_m\leq 1 \right\}$, indifference between $a_k$ and $a_{k+1}$ occurs at a single belief threshold, and ties can only occur between neighbors. Hence the adjacency graph is a line graph.

Although the papers didn’t ask about the belief elicitation problem, one might tend to add simple aggregate forecasting questions such as “What do you think your total correct answers are?” or “What do you think your total income is?”. The augmented design falls under our product of multiple block graphs framework and thus \cref{cor:task-block-wise aligned} applies.

\bibliographystyle{ecta-fullname}
\bibliography{refs}

\appendix
\section{Proofs}\label{appendix: proof}

\begin{lemma}\label{lem: matrix decomposition}
   Let $Y$ be an $|A|\times |\Theta|$ matrix. There exist $G\in \mathbb{R}^{|A|\times m}$ and $ D\in \mathbb{R}^{m\times |\Theta|}$ such that $ Y=GD
$ if and only if $\operatorname{rank}(Y)\le m$. 
\end{lemma}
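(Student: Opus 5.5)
The argument is the standard rank factorization, proved in two directions.

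For the ``only if'' direction, suppose $Y=GD$ with $G\in\mathbb{R}^{|A|\times m}$ and $D\in\mathbb{R}^{m\times|\Theta|}$. Every row of $Y$ is then a linear combination of the $m$ rows of $D$. So the row space of $Y$ sits inside the row space of $D$, which has dimension at most $m$. Equivalently, apply the inequality $\operatorname{rank}(GD)\le\min\{\operatorname{rank}G,\operatorname{rank}D\}\le m$. Hence $\operatorname{rank}(Y)\le m$.

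For the ``if'' direction, let $k=\operatorname{rank}(Y)\le m$. I will build the factorization explicitly, following the recipe described in \cref{subsec: joint vs individual}.

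First, choose $k$ linearly independent rows of $Y$, say $Y(a_{i_1}),\dots,Y(a_{i_k})$. These span the row space of $Y$. Stack them as the first $k$ rows of $D$. Fill the remaining $m-k$ rows of $D$ with zero rows, or any arbitrary vectors.

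Next, for each action $a$, the row $Y(a)$ lies in that span. So there are unique coefficients $g_1(a),\dots,g_k(a)$ with $Y(a)=\sum_{l=1}^k g_l(a)Y(a_{i_l})$. Set the $a$-th row of $G$ to $(g_1(a),\dots,g_k(a),0,\dots,0)$. Then $Y=GD$ holds row by row.

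In this construction, for a row $a=a_{i_l}$ the coefficients are simply the unit vector $e_l$. This matches the paper's remark that the supplemental questions are $Y$ evaluated at fixed prefixed actions.

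There is no real obstacle here. The only detail to handle is the padding when $k<m$, and the degenerate case $Y=0$, where $k=0$ and one can take $G=0$.
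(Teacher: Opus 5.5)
Your proposal is correct and follows essentially the same route as the paper's proof. The ``only if'' direction uses $\operatorname{rank}(GD)\le\min\{\operatorname{rank}G,\operatorname{rank}D\}\le m$, and the ``if'' direction stacks $\operatorname{rank}(Y)$ linearly independent rows of $Y$ into $D$, expresses each row of $Y$ in that basis to form $G$, and pads with zeros when $\operatorname{rank}(Y)<m$.
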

\begin{proof}[Proof of \cref{lem: matrix decomposition}]
If $Y=GD$, then $\operatorname{rank}(Y)=\operatorname{rank}(GD)\le \min\{\operatorname{rank}(G),\operatorname{rank}(D)\}\le m$ so $\operatorname{rank}(Y)\le m$.

Now we prove sufficiency. Let $r=\operatorname{rank}(Y)$. Choose $r$ linearly independent rows of $Y$ and stack them as $D_0\in \mathbb{R}^{r\times |\Theta|}$. Then every row $y_i$ of $Y$ can be written as $y_i=g_i D_0$ for some row vector $g_i\in\mathbb{R}^r$. Let $G_0\in \mathbb{R}^{|A|\times r}$ have rows $g_i$. Then $Y = G_0 D_0$. If $m>r$, it suffices to set
\[
G=\begin{pmatrix} G_0 & 0 \end{pmatrix}\in \mathbb{R}^{|A|\times m},
\qquad
D=\begin{pmatrix} D_0 \\ 0 \end{pmatrix}\in \mathbb{R}^{m\times |\Theta|}.
\]
\end{proof}

\begin{proof}[Proof of \cref{lemma: product 1}]
For a cycle $C=(a_0, a_1, \dots, a_{n-1}, a_n=a_0)$, we write for shorthand $\Delta_{{j-1}}^{j}:=\Delta_{a_{j-1}}^{a_j}$ and define $\Delta_C=\operatorname{span} \{\Delta_{k}^{k+1}\}_{k=0}^{n-1}$. 

Pick an MCB $\mathcal{C}$ satisfying the condition in \cref{assumption:cycle-length MQ}, so that for each $C\in\mathcal{C}$, $\Delta_C$ has dimension no more than $|\Theta|-m-1$. Then, since the number of elementary cycles is finite, $\Delta_{\mathcal{C}}:=\bigcup_{C\in \mathcal{C}}\Delta_C$ does not cover the entire $\Delta$; in fact, we can find $m$ linearly independent vectors $t_1,t_2,\dots, t_m\in\mathbb{R}^{|\Theta|}$ such that each $\bar{t}_j \notin\Delta_\mathcal{C}$ for $j=1,2,\dots, m$; let $t$ be the matrix with $t_j$ as its $j$-th row. Then, $\mu(a)t$ is also full rank by invertibility of $\mu(a)$. By robustness, for any $\varepsilon>0$, the question $X_\varepsilon(a)=X(a)+\mu(a)\varepsilon t$ is incentivizable. Moreover, we can find a sequence $(\varepsilon_k)_{k=1}^\infty\rightarrow 0$ such that for all $a\in A$ and $C\in\mathcal{C}$ containing $a$, the projection of the rows of $\bar{X}_{\varepsilon_k}(a)$ to $\Delta^\perp_C$ have full rank.\footnote{For each $a$ in an elementary cycle $C$, project $\bar{X}_\varepsilon(a)$ to a fixed $m$-dimensional subspace of $\Delta^\perp_C$. Then, the number of $\varepsilon$ such that $\bar{X}_\varepsilon(a)=\bar X(a)+\varepsilon\mu(a) \bar t$ has zero determinant, i.e. not being full rank, is finite. By finiteness of action $a$ and cycle $C$, the set of such undesired $\varepsilon$ is also finite.}

For any elementary cycle $C=(a_0, a_1, \dots, a_{n-1}, a_n=a_0)$, each $\bar{X}_\varepsilon(a_i)^\perp$ in the neighboring pair of actions $(a_i,a_{i+1})$ has full rank for each $\varepsilon$ in the sequence; hence, the edge lengths are already defined by \Cref{lemma: adjacency MQ}. We 
apply \cref{eq: adjacency lemma MQ} iteratively to $(a,b)=(a_{j+1},a_j)$ for $j=0,1,\dots,n-1$ and obtain
 $$\bar{X}_\varepsilon(a_0)=\sum_{j=1}^{n}\prod_{k=j+1}^{n}w(a_{k-1},a_k;\varepsilon)\lambda_{j-1,j}\Delta_{{j-1}}^{j}+\prod_{k=1}^{n}w(a_{k-1},a_k;\varepsilon)\bar{X}_\varepsilon(a_0).$$

\noindent Projecting both sides to $\Delta^\perp_C$, it has to be that $\prod_{k=1}^{n}w(a_{k-1},a_k;\varepsilon)$ is identity as the projection of $\bar{X}_{\varepsilon}(a)$ have full rank.

Finally, for any adjacent actions $a$ and $b$ such that $\bar{X}(a)$ and $\bar{X}(b)$ are colinear, let $w(a,b):=\lim_{m\rightarrow \infty}{w(a,b;\varepsilon_m)}$. As for every cycle $C=(a_0, a_1, \dots, a_{n-1}, a_n=a_0)$, $\bar{X}_{\varepsilon_k}$ satisfies the desired identity product property for all $\varepsilon_k$, by taking $\varepsilon_k\rightarrow 0$, it follows that the edge length product for $\bar{X}$ along cycle $C$ is also identity.
\end{proof}

\begin{proof}[Proof of \cref{thm: product general MQ}]
The sufficiency can be established through a simple taskwise generalization of the BDM mechanism. For necessity, following \Cref{eq: potential MQ}, for a pair of actions $a$ and $a'$ only differing in task $i$, we have
$$
\Gamma(b)^{-1}\bar{X}(b)-\Gamma(a)^{-1}\bar{X}(a)= \lambda_i(a^{-i})\Delta_{a^i}^{b^i}.
$$
Let $(a_1,a_2,a_3,a_4)$ be an elementary cycle with $a_1=(a^i,a^j,a^{-ij}),a_2=(b^i,a^j,a^{-ij}),a_3=(b^i,b^j,a^{-ij}),a_4=(a^i,b^j,a^{-ij})$ that crosses two tasks $i$ and $j$. Summing the above equation for adjacent edges gives
\begin{align*}
0&=\lambda_i(a^j,a^{-ij})\Delta_{a^i}^{b^i}+
\lambda_j(a^i,a^{-ij})\Delta_{a^j}^{b^j}+
\lambda_i(b^j,a^{-ij})\Delta_{b^i}^{a^i} +
\lambda_j(a^i,a^{-ij})\Delta_{b^j}^{a^j}\\
&=[\lambda_i(a^j,a^{-ij})-\lambda_i(b^j,a^{-ij})]\Delta_{a^i}^{b^i}+[\lambda_j(a^i,a^{-ij})-\lambda_j(a^i,a^{-ij})]\Delta_{a^j}^{b^j}.
\end{align*}
By \Cref{Assumption: md independence}, $\Delta_{a^i}^{b^i}$ and $\Delta_{a^j}^{b^j}$ are independent, so that $\lambda_i(a^j,a^{-ij})=\lambda_i(b^j,a^{-ij})$. Fix $j \neq i$ and fix all coordinates $a^{-ij}$. Take any two, possibly non-adjacent values $u^j, v^j \in A^j$. Because the task-$j$ adjacency graph $G_j$ is connected, there is a path $u^j=c_0^j - c_1^j - \cdots - c_m^j=v^j$ in $G_j$. Apply above 4-cycle argument to each adjacent pair $c_t^j - c_{t+1}^j$ and by transitivity of equivalence along the path, $
\lambda_i\left(u^j, a^{-i j}\right)=\lambda_i\left(v^j, a^{-i j}\right)$. Hence $\lambda_i$ does not depend on the value of $a^j$ once fix $a^{-ij}$. 

Now let $s^{-i}, t^{-i} \in \prod_{\ell \neq i} A^{\ell}$ be arbitrary two profiles of the non-$i$ coordinates. Since each $G_{\ell}$ is connected, the product graph $G_{-i}=\bigtimes_{\ell \neq i} G_{\ell}$ is connected, and there exists a path $s^{-i}=z^0, z^1, \ldots, z^r=t^{-i}$ in $G_{-i}$ that changes at most one coordinate at each step. At step $k$, suppose the changing coordinate is $j_k \neq i$, we have $\lambda_i\left(z^k\right)=\lambda_i\left(z^{k+1}\right)$. The transitivity of these equalities along the path gives $\lambda_i\left(s^{-i}\right)=\lambda_i\left(t^{-i}\right)$. Because $s^{-i}, t^{-i}$ were arbitrary, $\lambda_i\left(a^{-i}\right)$ is constant over the entire $\prod_{\ell \neq i} A^{\ell}$. So for each task $i$, there is a unified $\lambda_i$. So for any action $a$ and $b$ that differ in any possible tasks, we have
$$
\Gamma(a)^{-1}\bar{X}(a)-\sum_i \lambda_i u_i\left(a_i\right)=\Gamma(b)^{-1}\bar{X}(b)-\sum_i \lambda_i u_i\left(b_i\right).
$$ Therefore, $X$ is taskwise jointly aligned with $u$.
\end{proof}

\section{Supplemental Results in Graph Theory}
\label{appendix: maths}

\begin{lemma}[Complete graph MCB]\label{lemma:completeMCB}
    Let $K_n$ be the complete graph on vertex set $V=\{1, \ldots, n\}$ 
    . Let $A \in \mathbb{R}^{n \times\frac{ n(n-1)}{2}}$ be its vertex-edge incidence matrix. Then:
\begin{enumerate}
    \item $\operatorname{rank}(A)=n-1$ and $\operatorname{dim} \operatorname{ker} A=\binom{n-1}{2}$.
    \item A minimum-length cycle basis of $\operatorname{ker} A$ consists of the $\binom{n-1}{2}$ 3-cycles through any fixed root $r$, namely the cycles $(r ,i, j)$ for $i,j$ in $V \setminus\{r\}$.
    \end{enumerate}
\end{lemma}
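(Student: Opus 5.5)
Part 1 is a direct computation. The incidence matrix of any connected graph on $n$ vertices has rank $n-1$. The column space lies in the hyperplane $\{y:\sum_v y_v=0\}$, since every column has exactly one $+1$ and one $-1$. Conversely, the columns of a spanning tree (for instance the star at $r$) are linearly independent and span that hyperplane. For the star, the columns are $e_i-e_r$ for $i\neq r$, which are clearly independent. Hence $\operatorname{rank}A=n-1$, and rank–nullity gives
\[
\dim\ker A=\binom{n}{2}-(n-1)=\binom{n-1}{2}.
\]

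For part 2, I would first show that the triangles $T_{ij}=(r,i,j)$, $i<j$ in $V\setminus\{r\}$, form a cycle basis. There are exactly $\binom{n-1}{2}$ of them, matching $\dim\ker A$, so it suffices to prove linear independence. Each $T_{ij}$ is the unique triangle in the family containing the non-root edge $\{i,j\}$. Evaluating a vanishing linear combination $\sum c_{ij}x_{T_{ij}}=0$ at the coordinate of edge $\{i,j\}$ therefore gives $c_{ij}=0$. This is the standard fundamental-cycle basis relative to the star spanning tree at $r$: every non-tree edge closes exactly one cycle with the tree. With orientations taken into account (signed incidence vectors), the same argument works, since only $T_{ij}$ has a nonzero entry at edge $\{i,j\}$.

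Minimality then follows from a length lower bound. The graph $K_n$ is simple, so every cycle has length at least $3$. Hence any cycle basis, which must have exactly $\binom{n-1}{2}$ members, has total length at least $3\binom{n-1}{2}$. The star-triangle basis attains this bound with equality, so it is an MCB.

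The only point needing care is the definition of cycle basis, which uses $0/1$ incidence vectors $x_C$ and their span in $\ker A$. Strictly speaking, an unsigned $0/1$ vector is in $\ker A$ only after orientation signs are inserted. I would therefore read $x_C$ as the signed incidence vector of $C$ with respect to a traversal direction, and run the independence argument on that version; the private-edge argument is insensitive to the signs. This interpretive step is the only real obstacle. The counting and the $\geq 3$ bound are routine.
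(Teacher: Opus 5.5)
The paper states this lemma without proof; the appendix moves straight on to the product-graph lemma, which it attributes to the literature. So there is no argument of the authors' to compare yours against. Your proof is correct and complete. The star spanning tree gives $\operatorname{rank}A=n-1$. The triangles $T_{ij}$ are the fundamental cycles of that tree, and the private-edge argument shows they are independent. The girth bound $|C|\ge 3$, together with the fixed cardinality $\binom{n-1}{2}$ of every cycle basis, shows they minimize total length.

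Two remarks on what your route gives beyond the statement.

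First, the same counting proves slightly more than the lemma says. Since the minimum total length is exactly $3\binom{n-1}{2}$, every MCB of $K_n$, not just the star-rooted one, consists only of triangles. That is the form in which the main text uses the lemma, when it asserts that every elementary cycle in an MCB of a complete graph is a triangle.

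Second, your concern about signs is well founded. With the oriented incidence matrix, the paper's $0/1$ vector $x_C$ lies in $\ker A$ only when $C$ happens to be consistently oriented. One must therefore either use signed incidence vectors over $\mathbb{R}$, as you do, or work over $\mathrm{GF}(2)$, where the signs disappear. Both your independence argument and your length lower bound are unaffected by this choice, so the conclusion holds under either convention.
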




The next result lays out the structure of an MCB for a product graph. The proof follows from Theorem 10 of \citet{imrich2002minimum} and Theorem 14 of \citet{gleiss2003circuit}.
\begin{lemma}[Product graph MCB]\label{lemma:productMCB}
Let $G_1 \times G_2$ be a Cartesian product graph. For each $i\in\{1,2\}$ let $\mathcal{C}_i$ be an MCB of $G_i$, and let $T_i\subseteq \mathcal{C}_i$ be the set of $3$-cycles in $\mathcal{C}_i$. Fix vertices $u_i\in V(G_i)$ and write $\bar i:=3-i$.  

For any cycle $C$ in $G_i$ and any $y\in V(G_{\bar i})$, denote by $C\times_i y$ the copy of $C$ in the $G_i$-fiber over $y$
(i.e.\ $C\times_1 y:=C\times y$ and $C\times_2 y:=y\times C$).
Then a minimum cycle basis $\mathcal{C}^*$ of $G_1 \times G_2$ can be chosen as the union of:
\begin{enumerate}[itemsep=1pt,topsep=3pt,leftmargin=*]
    \item $ \bigcup_{i=1}^2 \left\{\,T\times_i y \ \middle|\ T\in T_i,\ y\in V(G_{\bar i})\,\right\}$;
    \item $ \bigcup_{i=1}^2 \left\{\,C\times_i u_{\bar i} \ \middle|\ C\in \mathcal{C}_i\setminus T_i\,\right\}$;
    \item a suitable independent subset $Q$ of the cross squares $e\times f$ (with $e\in E(G_1)$ and $f\in E(G_2)$).
\end{enumerate}
Write $\lambda(G)$ for the length of the longest cycle appearing in any MCB of $G$. Then
\[
\lambda(G_1 \times G_2)=\max\{4,\lambda(G_1),\lambda(G_2)\}.
\]
\end{lemma}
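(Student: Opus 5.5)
The approach is the greedy (matroid) characterization of minimum cycle bases. A cycle basis is minimum if and only if it can be obtained by scanning cycles in nondecreasing length and keeping each one that is independent of those already kept. Equivalently, for every $\ell$, the cycles of length $\le \ell$ in the basis span the same subspace as all cycles of length $\le \ell$ in the graph. I would verify this filtration condition for the proposed family $\mathcal{C}^*$ of items 1–3, after first checking that $\mathcal{C}^*$ is a basis at all, i.e. that it spans and has the right size. Working over $GF(2)$ or over $\mathbb{R}$ with orientations makes no difference to the argument; the kernel of the incidence matrix is what matters.

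\emph{Step 1 (spanning and dimension).} Let $\mathcal{Z}(\cdot)$ denote the cycle space, $\ker A$. The first claim is that $\mathcal{Z}(G_1\times G_2)$ is spanned by the following: all fiber copies $C\times_i y$ of cycles $C$ of $G_i$, together with all cross squares $e\times f$. To see this, take a spanning tree $T_1\times_1 u_2 \cup \bigcup_{x} x\times_2 T_2$ of the product. Each fundamental cycle of a non-tree edge is then homotopic, by sliding along cross squares, to a fiber cycle. Next, sliding a fiber cycle $C\times_i y$ to $C\times_i y'$ along an edge $f$ of $G_{\bar i}$ changes it by a sum of the cross squares $e\times f$ with $e\in C$. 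Hence every fiber copy equals $C\times_i u_{\bar i}$ modulo squares. It follows that $\mathcal{Z}(G_1\times G_2)=\operatorname{span}\{C\times_i u_{\bar i}: C\in\mathcal{C}_i\} + \operatorname{span}(\text{squares})$. The squares contain a maximal independent subset $Q$ modulo the fiber cycles, and this is the set in item 3. A dimension count via $|E|-|V|+1$ for the product then confirms that items 1–3, with $Q$ chosen independent, give exactly a basis. For item 1, one subtlety is that the copies $T\times_i y$ over all $y$ are independent of one another, since they live on disjoint edge sets. They then replace some squares in $Q$, and I would choose $Q$ after items 1–2, greedily.

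\emph{Step 2 (minimality via the filtration).} There are three length levels to check.
\begin{enumerate}
\item Length $3$: every triangle of $G_1\times G_2$ lies in a single fiber, because a cross edge pair cannot close in three steps. So the triangles of the product are exactly the copies $T'\times_i y$ with $T'$ a triangle of $G_i$, and these are spanned by the copies of $T_i$ in item 1, since $T_i$ spans all triangles of $G_i$ by minimality of $\mathcal{C}_i$.
\item Length $4$: this level is handled by the squares $Q$ together with the $4$-cycles of $\mathcal{C}_i$.
\item Length $\ell\ge 5$: here I use the projection homomorphism $\pi_i:\mathcal{Z}(G_1\times G_2)\to\mathcal{Z}(G_i)$, which deletes the edges of the other factor. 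It maps any cycle of length $\ell$ to an element supported on at most $\ell$ edges, kills squares, and sends $C\times_i u_{\bar i}$ to $C$. Modulo the span of squares and fiber triangles, a product cycle of length $\ell$ therefore equals the sum of $\pi_1$ and $\pi_2$ of it, lifted to the base fibers. Each $\pi_i$-image is a sum of cycles of $G_i$ of length $\le \ell$, and by minimality of $\mathcal{C}_i$ it is spanned by elements of $\mathcal{C}_i$ of length $\le \ell$. This is exactly the filtration condition.
\end{enumerate}

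\emph{Step 3 (length formula).} The formula $\lambda(G_1\times G_2)=\max\{4,\lambda(G_1),\lambda(G_2)\}$ is then read off from the basis. Its elements have lengths $3$ (item 1), $|C|$ for $C\in\mathcal{C}_i$ (item 2), or $4$ (item 3). Moreover, every MCB has the same multiset of cycle lengths (a matroid fact), so the formula does not depend on which MCB is chosen.

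The main obstacle I expect is the $\ell\ge5$ level of Step 2. The issue is whether a product cycle that winds through both factors decomposes, modulo short cycles, into $G_i$-cycles of length $\le\ell$ rather than merely into some elements of $\mathcal{Z}(G_i)$ of small support. I would handle it with the decomposition fact that an element of the cycle space supported on $k$ edges is a sum of edge-disjoint cycles each of length $\le k$. Failing that, I would fall back on citing Theorem 10 of \citet{imrich2002minimum} for exactly this step.
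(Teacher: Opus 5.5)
Your proof is correct, and it takes a genuinely different route from the paper. The paper gives no argument of its own and simply invokes Theorem 10 of \citet{imrich2002minimum} and Theorem 14 of \citet{gleiss2003circuit}. You instead derive the statement from the matroid (filtration) characterization of minimum bases, and the real engine is the pair of projections. The map $(\pi_1,\pi_2)$ sends $\mathcal{Z}(G_1\times G_2)$ onto $\mathcal{Z}(G_1)\oplus\mathcal{Z}(G_2)$, and its kernel is exactly the span of the cross squares; this follows from your Step 1 together with $\pi_i(C\times_i u_{\bar i})=C$. Hence every product cycle satisfies $z=\pi_1(z)\times_1 u_2+\pi_2(z)\times_2 u_1+s$ with $s$ a combination of squares. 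Since $\pi_i(z)$ is supported on at most $|z|$ edges, the filtration condition at every level $\ell\ge 4$ is inherited from that of $\mathcal{C}_i$. Level $3$ follows because product triangles live in fibers.

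What this buys over the citation is transparency and control of hypotheses. It shows exactly why triangles must be copied into every fiber (they are the only cycles shorter than a square), whereas every non-triangle basis cycle is needed only once. It runs identically over $GF(2)$ or $\mathbb{R}$, provided the $\mathcal{C}_i$ are minimum over the same field. It also makes both the length formula and its independence of the chosen MCB immediate. The citation is shorter, but it leaves implicit the conditions under which the cited results apply: connected factors, each with at least one edge, and the choice of field.

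Two points to tighten. First, your level-$4$ sentence and the reduction ``modulo squares'' at level $\ell\ge 5$ both need every cross square to lie in the span of item 1 and $Q$ alone. The greedy choice of $Q$ only puts squares in the span of items 1--3, which a priori might use item-2 cycles of length $\ge 5$. Fill this by expanding a square $s$ in the basis and applying $\pi_i$. The map $\pi_i$ kills $s$, $Q$ and all fiber cycles of the other factor, and it sends $T\times_i y\mapsto T$ and $C\times_i u_{\bar i}\mapsto C$. Linear independence of $\mathcal{C}_i$ then forces every item-2 coefficient to vanish. With this, your level-$\ell$ argument covers all $\ell\ge 4$ at once. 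The obstacle you anticipated is then already resolved by the decomposition fact. Over $\mathbb{R}$, use a conformal decomposition: its cycles are supported inside the support, though not edge-disjoint, and that is all you need.

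Second, the entry $4$ in $\max\{4,\lambda(G_1),\lambda(G_2)\}$ requires $Q\neq\emptyset$. Write $n_i=|V(G_i)|$, $c_i=\dim\mathcal{Z}(G_i)$ and $t_i=|T_i|$. Your count gives $|Q|=(n_1-1)(n_2-1)+(c_1-t_1)(n_2-1)+(c_2-t_2)(n_1-1)$, which is positive precisely when both factors have an edge. The statement leaves this assumption implicit.
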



\Cref{lemma:productMCB} tells us the MCB structure of the Cartesian product graph. First, it keeps every tiny 3-cycles again shortest. Secondly, For longer cycles that live entirely inside $G_1$ or $G_2$, keep just one representative. Finally, add enough cross “squares” to stitch the whole thing together. Every pair of edge in $G_1$ and edge in $G_2$ forms a 4-cycle square. They are the cheapest cross-dimensional cycles which can connect the $G_1$-side to the $G_2$-side cycles without introducing unnecessarily long cycles.


Lemma \ref{lemma:2vc} below connects the notion of cut-vertex with blocks and discusses the structure of a single block. Build the (undirected) \textit{cycle-intersection graph} $H$ as follows: its vertices are the cycles in $\mathcal{C}$; two vertices of $H$ are linked if the corresponding cycles in $\mathcal{C}$ share at least one common edge of $G$. Intuitively, these two cycles in $G$ ``touch'' each other at certain edges.
\begin{lemma}\label{lemma:2vc}
    For a connected graph $G$, the following are equivalent: 
    \begin{enumerate}[itemsep=1pt,topsep=3pt,leftmargin=*]
        \item $G$ has no cut-vertex. 
        \item For every distinct $x, y \in V(G)$, there exist two paths linking $x$ and $y$ with no overlapping vertex.
        \item $G$ has exactly one block.
        \item The cycle-intersection graph $H$ is connected.
    \end{enumerate}
\end{lemma}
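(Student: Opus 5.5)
The plan is to prove the cycle of implications (1)$\Leftrightarrow$(3), (1)$\Leftrightarrow$(2), and (1)$\Leftrightarrow$(4). Throughout I assume $|V(G)|\ge 3$; the case of a single edge is trivial and would be handled separately. In (2), ``no overlapping vertex'' is read as \emph{internally} vertex-disjoint.

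\textbf{(1)$\Leftrightarrow$(3).} This is essentially by definition of a block as a maximal connected subgraph without a cut-vertex. If $G$ has no cut-vertex, $G$ itself is such a subgraph, hence the unique maximal one, so $G$ is its only block. Conversely, if $G$ is a single block, then the block equals $G$, and so $G$ has no cut-vertex.

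\textbf{(1)$\Leftrightarrow$(2).} This is Whitney's theorem.
\begin{itemize}
\item[(2)$\Rightarrow$(1):] For any $v$ and any $x,y\neq v$, at least one of the two internally disjoint $x$--$y$ paths avoids $v$. Hence $G-\{v\}$ is connected, and no vertex is a cut-vertex.
\item[(1)$\Rightarrow$(2):] I would argue by induction on $\operatorname{dist}(x,y)$.
\begin{itemize}
\item[Base case.] If $x,y$ are adjacent, then $G-xy$ is still connected. Otherwise, since $|V|\ge3$, one of $x,y$ would be a cut-vertex. A path in $G-xy$ together with the edge $xy$ gives the two paths.
\item[Inductive step.] Let $y'$ be the predecessor of $y$ on a shortest $x$--$y$ path, and let $P,Q$ be internally disjoint $x$--$y'$ paths given by the induction hypothesis. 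Since $y'$ is not a cut-vertex, $G-y'$ contains an $x$--$y$ path $R$. Let $z$ be the last vertex of $R$ lying on $P\cup Q$, say $z\in P$. Then $P[x,z]R[z,y]$ and $Q\,y'y$ are the required internally disjoint paths.
\end{itemize}
\end{itemize}

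\textbf{(1)$\Rightarrow$(4).} By (2), every edge lies on a cycle; more strongly, Whitney's theorem gives that any two edges of a $2$-connected graph lie on a common cycle. Suppose $H$ is disconnected, so the MCB splits as $\mathcal C=\mathcal C_1\sqcup\mathcal C_2$, with no edge shared between a cycle of $\mathcal C_1$ and a cycle of $\mathcal C_2$. Let $E_i$ be the union of edges of $\mathcal C_i$. Then $E_1\cap E_2=\emptyset$, and $E_1\cup E_2=E$, because every edge lies on a cycle and hence in the support of some basis vector.
\begin{itemize}
\item Pick $e_1\in E_1$, $e_2\in E_2$, and a cycle $C$ through both.
\item Write $x_C=\sum_{\mathcal C_1}c_Dx_D+\sum_{\mathcal C_2}c_Dx_D$. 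The two partial sums have disjoint supports contained in $E_1$ and $E_2$ respectively, so the first sum equals the restriction of $x_C$ to $E_1$.
\item That first sum lies in $\ker A$, so the restriction is a circulation. But the restriction of a cycle's (signed) incidence vector to a proper nonempty subset of its edges is a union of paths. Such a union has endpoints where flow conservation fails, which is a contradiction.
\end{itemize}

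\textbf{(4)$\Rightarrow$(1) and the main obstacle.} For the converse I would use that every cycle, having no cut-vertex, lies inside a single block. Distinct blocks are edge-disjoint, so basis cycles in different blocks never share an edge. Hence, if $G$ had two blocks each containing a cycle, $H$ would be disconnected; this is consistent with \cref{lemma:MCBdecompose}.

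The main obstacle is bridges. A block that is a single edge contains no cycle. For example, a triangle with a pendant edge has a connected $H$ but does have a cut-vertex. So as literally stated, (4)$\Rightarrow$(1) needs the additional hypothesis that every edge lies on some cycle (equivalently, $G$ is bridgeless, i.e.\ $E$ is covered by $\mathcal C$). I would add this hypothesis explicitly, which is harmless for how the lemma is used in \cref{thm: one block alignment MQ}. Under it, every block contains a cycle, and the argument above goes through.
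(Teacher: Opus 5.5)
Your proposal is correct and follows the same skeleton as the paper's proof:
\begin{itemize}
\item (1)$\Leftrightarrow$(3) by definition.
\item (1)$\Leftrightarrow$(2) by Whitney's theorem. The paper only cites it; your induction on $\operatorname{dist}(x,y)$ is sound, and the case $y\in P\cup Q$ is handled automatically, since then $z=y$.
\item (1)$\Rightarrow$(4) via the fact that any two edges of a $2$-connected graph lie on a common cycle.
\item The converse via edge-disjointness of blocks.
\end{itemize}

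Your (1)$\Rightarrow$(4) is tighter than the paper's. The paper expands a cycle $D\ni e,f$ as a $\mathrm{GF}(2)$-sum of basis cycles and asserts that the summands can be ordered so that consecutive ones share an edge. That is stronger than what follows: the summands only induce a connected subgraph of $H$. For example, three ``ears'' attached to the sides of a triangle give a star. The paper's version also leaves implicit that $C$ and $C'$ themselves need not occur in the sum. Your partition of $\mathcal C$ by components of $H$, followed by restricting $x_C$ to $E_1$, proves the needed non-splitting fact directly, and it works over $\mathbb{R}$ with signed vectors as well as over $\mathrm{GF}(2)$.

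Your objection to (4)$\Rightarrow$(1) is right and marks a genuine error in the paper. Its proof of (4)$\Rightarrow$(3) shows only that all basis cycles lie in one block, and then concludes that $G$ has exactly one block. This overlooks bridge blocks, which contain no cycle and are invisible to $H$. The triangle with a pendant edge refutes the lemma as stated. Assuming $G$ is bridgeless is the right repair: every edge then lies in some basis cycle, and each cycle sits in a single block, so every block contributes vertices to $H$ with no $H$-edges between blocks. The repair costs nothing downstream, since \cref{thm: one block alignment MQ} only uses (3)$\Rightarrow$(4).

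One correction: $K_2$ is not a trivial case to be ``handled separately'' but a genuine exception. It satisfies (1) and (3), while (2) fails under the internally-disjoint reading. So $|V(G)|\ge 3$ must be a standing hypothesis; it is also needed for the two-edges-on-a-cycle fact. The paper's statement omits it as well.
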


\begin{proof}[Proof of \Cref{lemma:2vc}]
(1) $\Longleftrightarrow$ (2) is by \citet[Thm.~3.4.1]{godsil2013algebraic}. (1) $\Longleftrightarrow$ (3) is by definition.

(1) $\Longrightarrow$ (4).
Fix two cycles $C, C^{\prime} \in \mathcal{B}$. Take edges $e \in C$ and $f \in C^{\prime}$. In a graph without cut-vertices, any two edges lie on a common cycle.\footnote{this "fact" is covered in Exercise 3 at §14.16 of \citet{godsil2013algebraic}} Let $D$ be a cycle containing $e$ and $f$. Expressing $D$ as a $\mathrm{GF}(2)$-sum of cycles from $\mathcal{B}$, the resulting symmetric difference is a single cycle; hence the cycles of $\mathcal{B}$ occurring in that sum can be ordered so that consecutive ones share at least one edge. Therefore there is a path in $H$ from the vertex $C$ to the vertex $C^{\prime}$. As $C, C^{\prime}$ were arbitrary, $H$ is connected.

(4) $\Longrightarrow$ (3).
Distinct blocks share no edges, therefore cycles from different blocks share no edges. Consequently, the vertices of $H$ split according to the blocks that contain cycles; if $G$ had two or more such blocks, $H$ would be disconnected-contrary to (4). Thus all cycles of $G$ lie in a single block, and $G$ has exactly one block.
\end{proof}

\begin{proof}[Proof of \Cref{lemma:MCBdecompose}]

(1). Take any $x \in \operatorname{ker} A$. By Cor. 14.2.3, write $x$ as a linear combination of cycle vectors; grouping the cycles by the block in which they lie yields 
$x=\sum_{i=1}^r x_i$, $ x_i \in \operatorname{ker} A_i$, where we regard $x_i$ as a vector on $E(G)$ extended by 0 off $E\left(B_i\right)$. Hence $\operatorname{ker} A \subseteq \sum_i \operatorname{ker} A_i$. The reverse containment is immediate because each $x_i \in \operatorname{ker} A_i$ still satisfies the vertex-balance equations in $G$, so $x_i \in \operatorname{ker} A$.

By §15.4 of \citet{godsil2013algebraic}, every cycle of $G$ lies in a unique block $B_i$. To show the sum is direct, the edge-sets of the blocks are disjoint, so if $\sum_i x_i=0$ with $x_i \in \operatorname{ker} A_i$, then looking at coordinates on $E\left(B_j\right)$ forces $x_j=0$ for every $j$. Thus
$
\operatorname{ker} A=\bigoplus_{i=1}^r \operatorname{ker} A_i
$

(2). By (1), the cycle space decomposes as a direct sum of the cycle spaces of the blocks. Equivalently, every cycle basis $\mathcal{C}$ of $G$ splits uniquely as a disjoint union $\mathcal{C}=\bigsqcup_i \mathcal{C}_i$ where $\mathcal{C}_i$ consists of the cycles of $\mathcal{C}$ contained in $B_i$, and $\mathcal{C}_i$ is a cycle basis of $B_i$.

The length of any cycle basis is the sum of the lengths of its blockwise parts, and cycles cannot mix edges from different blocks. Therefore minimizing the total length over all bases of $G$ is equivalent to minimizing, independently for each $i$, the length over bases of $B_i$. Replacing any $\mathcal{C}_i$ by an MCB of $B_i$ strictly decreases the total length while keeping a basis of $\operatorname{ker} A$. Hence an MCB of $G$ is exactly the disjoint union of MCB's of the blocks.
\end{proof}

\end{document}